\def\BibTeX{{\rm B\kern-.05em{\sc i\kern-.025em b}\kern-.08em
    T\kern-.1667em\lower.7ex\hbox{E}\kern-.125emX}}
\pgfplotsset{compat=newest}
\newlength\axisheight
\newlength\axiswidth
\definecolor{myorange}{RGB}{255,104,25}
\definecolor{mypurple}{RGB}{245,0,228}
\definecolor{myblue}{RGB}{19,15,255}
\newtheorem{thm}{Theorem}
\newtheorem{assum}{Assumption}
\newtheorem{defn}{Definition}
\newtheorem{rem}{Remark}
\newtheorem{lem}{Lemma}
\newtheorem{prop}{Proposition}
\newtheorem{proof}{Proof}  
\DeclareMathOperator*{\argmin}{argmin}
\DeclareMathOperator*{\interior}{int}
\DeclareMathOperator*{\col}{\mathrm{col}}
\DeclarePairedDelimiter\floor{\lfloor}{\rfloor}
\newcommand{\change}[1]{{\color{black}#1}}
\begin{document}
\title{Distributed Model Predictive Control for Dynamic Cooperation of Multi-Agent Systems}
\author{Matthias K\"ohler, Matthias A. M\"uller, and Frank Allg\"ower
\thanks{%
F. Allg{\"o}wer is thankful that this work was funded by the Deutsche Forschungsgemeinschaft (DFG, German Research Foundation) -- AL 316/11-2 - 244600449; and under Germany's Excellence Strategy -- EXC 2075 -- 390740016.}
\thanks{M. K\"ohler and F. Allg\"ower are with the University of Stuttgart, Institute for Systems Theory and Automatic Control, 70550 Stuttgart, Germany (e-mails: matthias.koehler@ist.uni-stuttgart.de, frank.allgower@ist.uni-stuttgart.de).}
\thanks{M. A. M\"uller is with the Leibniz University Hannover, Institute of Automatic Control, 30167 Hanover, Germany (e-mail: mueller@irt.uni-hannover.de).}
}

\maketitle
\thispagestyle{empty}  

\begin{abstract}
We propose a distributed model predictive control (MPC) framework for coordinating heterogeneous, nonlinear multi-agent systems under individual and coupling constraints.
The cooperative task is encoded as a shared objective function minimized collectively by the agents.
Each agent optimizes an artificial reference as an intermediate step towards the cooperative objective, along with a control input to track it.
We establish recursive feasibility, asymptotic stability, and transient performance bounds under suitable assumptions.
The solution to the cooperative task is not predetermined but emerges from the optimized interactions of the agents.
We demonstrate the framework on numerical examples inspired by satellite constellation control, collision-free narrow-passage traversal, and coordinated quadrotor flight.
\end{abstract}

\begin{IEEEkeywords}
    Predictive control, distributed control, multi-agent systems
\end{IEEEkeywords}

\section{Introduction}
Multiple systems coordinating as dynamically decoupled agents have various applications due to a high degree of flexibility, modularity, and avoidance of single points of failure.
For example, these multi-agent systems arise in control and coordination of vehicles~\cite{Dunbar2006_Distributedrecedinghorizon, Keviczky2008_DecentralizedRecedingHorizon, Lyu2021_MultivehicleFlockingCollision}, and can be used for exploration and establishing communication networks, e.g.~\cite{Oegren2004_CooperativeControlMobile, Sin2020_PassivityBasedDistributedAcquisition, Pippia2022_Reconfigurationsatelliteconstellation, Rickenbach2024_ActiveLearningBasedModel}.
In general, control design for these agents needs to account for coupling through common objectives, e.g. covering an area or moving in formation, or constraints, e.g. collision avoidance or remaining in communication range, while also dealing with agent-specific dynamics and constraints.
\change{Furthermore, some cooperative tasks naturally exhibit periodic trajectories \cite{Sin2020_PassivityBasedDistributedAcquisition, Pippia2022_Reconfigurationsatelliteconstellation}.}
Due to its applicability to nonlinear systems that are subject to constraints, and the possibility to take performance criteria into account, model predictive control (MPC) remains a powerful control method to handle these complex interdependencies.
To ensure scalability as well as modularity, and to avoid a single point of failure, distributed MPC is especially suitable for control of multi-agent systems.
In distributed MPC, the optimization problem is distributed across agents, enabling local computation and communication to generate the control input.
See~\cite{Gruene2017,Rawlings2020,Mueller2017b} for an introduction to (distributed) MPC. 

A key component of the proposed scheme in this paper is the use of artificial references as introduced in~\cite{Limon2008} under the name MPC for tracking.
To track an externally given reference independent of its feasibility or variation, an artificial reference is included in the optimization problem as an additional decision variable.
This allows the system to simultaneously optimize the (artificial) reference it tracks and the control input steering the system towards it. 
By penalizing the distance from the artificial reference to the externally given one, it can be ensured that the best-reachable reference is stabilized.
This method has been extended to various settings, including nonlinear systems~\cite{Limon2018} and periodic references~\cite{jkoehler_nonlinear_dynamic_2020, Krupa2022_HarmonicBasedModel}.
The performance of MPC for tracking has been analysed in~\cite{ferramoscaOptimalMPCTracking2011,Limon2018, MatthiasKohler2023_TransientPerformanceMPC}.
\change{See \cite{Krupa2024_Modelpredictivecontrol} for an overview of MPC for tracking.}

\change{%
Task-specific distributed MPC strategies have been proposed for a range of cooperative objectives. 
For instance, \cite{Dunbar2012_DistributedRecedingHorizon} establishes string stability for vehicle platoons with nonlinear dynamics by penalizing deviations from previously communicated trajectories. 
In \cite{Zheng2017}, a non-iterative distributed MPC approach for heterogeneous vehicle platoons is introduced. 
There, a terminal equality constraint based on the average of neighbouring outputs is imposed to ensure asymptotic stability, without prescribing a common desired goal for all agents.
Artificial references are used in \cite{Carron2020a} to address a coverage problem using a nonlinear multi-agent system. 
Specifically, each agent is assigned an external reference corresponding to the centroid of a Voronoi cell. 
The work of \cite{Rickenbach2024_ActiveLearningBasedModel} extends this to collision avoidance constraints.
In addition, a new scheme is proposed that integrates the computation of the coverage configuration directly into the MPC's optimization problem.
Overall, these approaches address particular system dynamics or specific cooperative tasks, but they do not offer a general framework for cooperative tasks.

A scheme for general cooperative tasks is proposed in~\cite{Muller2012_Cooperativecontroldynamically} for agents with nonlinear dynamics. 
It relies on terminal regions that must be centrally designed to match the cooperative tasks.
This limits the flexibility of the scheme in the face of changes in the cooperative task or multi-agent topology.
Moreover, it remains an open question how these terminal components can be designed for cooperative tasks beyond consensus and synchronization.
In \cite{Ferranti2022_DistributedNonlinearTrajectory}, a method for multi-agent trajectory coordination based on distributed MPC is presented.
While the method is suitable for a general class of agents and cooperative tasks with pre-specified solutions, it focuses on solving the optimization problem and does not discuss recursive feasibility or the closed-loop dynamics.

In summary, there is only limited research on distributed MPC approaches for nonlinear multi-agent systems engaged in general cooperative tasks, especially for periodic cooperative tasks. 
This paper proposes a distributed MPC methodology to close this gap.
A key mechanism is the use of artificial references to decouple local agent behaviour from the global cooperative task.
This enables a simplified design that is scalable and flexible.
In particular, it allows for a decentralized design of terminal costs and constraints irrespective of the cooperative task.
Hence, terminal components do not have to be redesigned when changing the cooperative task or the multi-agent topology.
}%
In previous work, we introduced a framework for cooperative control of nonlinear multi-agent systems using sequential distributed MPC and artificial references~\cite{MatthiasKohler2024_DistributedMPCSelfOrganized, MKoehler2023}.
\change{%
However, in these schemes, coupling constraints are not included and their inclusion is not straightforward.
In addition, fewer cooperative tasks are covered since a specific coupling structure is required in the objective function of the optimization problems.
Moreover, it is unclear how a performance bound can be achieved for these sequential schemes.
}
\change{The scheme presented in this paper does not have these limitations.}
The main contributions of this paper are as follows:
\begin{itemize}
    \item A general formulation of distributed MPC for cooperative tasks characterized by dynamic (periodic) trajectories, accommodating heterogeneous agents with nonlinear dynamics and constraint coupling.
    \item \change{Partial decoupling of handling} the agents' dynamics and constraints, and the design of terminal costs and terminal constraints, from the cooperative task. This facilitates a flexible design of components encoding the cooperative task.  
    \item Rigorous guarantees of recursive feasibility and asymptotic stability of a set containing solutions to the cooperative task.
    \item Transient performance bounds that show how closed-loop performance of the scheme improves with prediction horizon length, extending prior results on MPC for tracking.
    \item A design that does not require a solution \change{to} the cooperative task to be prescribed in advance, but ensures its emergence through decentralized optimization.
\end{itemize}

\subsection{Notation}
The interior of a set $\mathcal{A}$ is denoted by $\interior\mathcal{A}$.
The nonnegative reals are denoted by $\mathbb{R}_{\ge 0}$, and $\mathbb{N}_0$ denotes the natural numbers including 0.
The set of integers from $a$ to $b$, $a \le b$, is denoted by $\mathbb{I}_{a:b}$.
Let $\Vert \cdot \Vert$ be the Euclidean norm.
Given a set $\mathcal{A}$, the distance from a point $x$ to $\mathcal{A}$ is denoted by $\vert x \vert_{\mathcal{A}} = \inf_{z\in\mathcal{A}}\Vert x - z \Vert$.
If $\mathcal{A} = \{ z \}$, we simply write $\vert x \vert_{z}$.
Given a positive (semi-)definite matrix $A=A^\top$, the corresponding (semi-)norm is written as $\Vert x \Vert_A = \sqrt{x^\top A x}$.
For a collection of $m$ vectors $v_i \in \mathbb{R}^{n_i}$, $i \in \mathbb{I}_{1:m}$, we denote the stacked vector by $v = \mathrm{col}_{i=1}^m(v_i) = [v_1^\top \dots v_m^\top]^\top$.
We define $\mathcal{A} \ominus \mathcal{Q} = \{ a \mid \forall q\in\mathcal{Q},\; a + q \in \mathcal{A}\}$ for two sets $\mathcal{A}$ and $\mathcal{Q}$.
Given $m$ sets $\mathcal{A}_i$, $i\in\mathbb{I}_{1:m}$, we use $\prod_{i\in\mathbb{I}_{1:m}}\mathcal{A}_i = \mathcal{A}_1 \times \dots \times \mathcal{A}_m$.
Define $\mathcal{B}_{\eta}(\tilde{x}) = \{ x \mid \vert x \vert_{\tilde{x}} \le \eta \}$, and $\mathcal{B}_{\eta} = \mathcal{B}_{\eta}(0)$.
Let $X\subseteq\mathbb{R}^n$ be a closed, convex set and $x\in\mathbb{R}^n$. 
Then, $\mathcal{P}_{X}[x] = \argmin_{y\in X}\Vert y - x \Vert$ is the unique projection of $x$ onto $X$~\cite[Prop. 1.1.4]{Bertsekas2016}.
For a set $X$, $2^X$ denotes the power set of $X$.
Comparison functions, e.g. $\mathcal{K}_\infty$, are used; see~\cite{Kellett2014} for definitions and properties.
For periodic trajectories $y_T \in \mathcal{Y}^{T}$ starting at $y_T(0)$ with period length $T$, we write $y_T(k)$ to mean $y_T(k \bmod T)$ for $k\in\mathbb{N}_0$, e.g. $y_T(T) = y_T(0)$.
The floor function is denoted by $\floor*{\cdot}$.
We define the distance between two periodic trajectories $\hat{y}_{T},  y_{T}$ with period length $T$ as
$\vert \hat{y}_{T} \vert_{y_{T}} = \sum_{\tau=0}^{T-1} \Vert \hat{y}_{T}(\tau) - y_{T}(\tau) \Vert$.

\section{Multi-agent system}\label{sec:setup}
We consider a multi-agent system comprising $m \in \mathbb{N}$ heterogeneous agents with nonlinear discrete-time dynamics
\begin{equation}\label{eq:agent_dynamics}
    x_i(t + 1) = f_i(x_i(t), u_i(t))
\end{equation}
with state $x_i(t) \in {X}_i \subseteq \mathbb{R}^{n_i}$ and input $u_i(t) \in {U}_i \subseteq \mathbb{R}^{q_i}$ at time $t\in\mathbb{N}_0$, and where $f_i: {X}_i \times {U}_i \to {X}_i$ is continuous.
We assume that the agents are subject to individual constraints $(x_i(t), u_i(t)) \in {Z}_i \subseteq {X}_i \times {U}_i$ for $t \in \mathbb{N}_0$, where ${Z}_i$ is compact.
Here, $i\in\mathbb{I}_{1:m}$, which we omit in the following when it is clear that the statement should be interpreted for all $i\in\mathbb{I}_{1:m}$.
We write $x_{i,u_i}(\change{t}, x_i)$ to denote the solution of~\eqref{eq:agent_dynamics} with initial state $x_i$ generated by the input sequence $u_i$, and use $x_{i,u_i}(\change{t})$ \change{equivalently to shorten the notation} whenever the initial state is clear.
Define the set of (locally) admissible input sequences of length $K\in\mathbb{N}\cup\{\infty\}$ as $\mathbb{U}_i^K(x_i) = \{u_i \in U_i^K \mid (x_{i,u_i}(\change{t}, x_i), u_i(\change{t})) \in Z_i, \change{t}\in\mathbb{I}_{0:K-1}\}$.
If $K=1$, we simply write $\mathbb{U}_i(x_i)$.

It is assumed that agents can communicate according to an undirected graph $\mathcal{G} = (\mathcal{V}, \mathcal{E})$ with vertices $\mathcal{V}$ and edges $\mathcal{E}$.
Each agent is assigned a vertex $i\in\mathcal{V}$, and the vertices are connected through edges $e_{ij} = e_{ji} \in\mathcal{E}$.
The set of neighbours of agent $i$ is then $\mathcal{N}_i = \{j \in\mathcal{V} \mid e_{ji} \in \mathcal{E}\}$, i.e. it contains all agents with which agent $i$ may communicate.
In the following, we assume lossless and immediate communication.
\change{%
Moreover, $v_{\mathcal{N}_i}$ denotes the set of variables $\{v_j\}_{j \in \mathcal{N}_i}$. 
For example, $x_{\mathcal{N}_i}$ collects the state vectors of all neighbours of agent $i$.
}

Furthermore, agents may also be coupled through constraints described by an appropriate set $\mathcal{C}_i$, i.e. the constraint is $(x_i, x_{\mathcal{N}_i}) \in \mathcal{C}_i$.
For example, if agents need to remain in a certain communication range specified by $\delta_i \in \mathbb{R}^{\vert \mathcal{N}_i \vert}$ to neighbours, the constraint could be $\mathcal{C}_i = \{(x_i, x_{\mathcal{N}_i}) \mid \col_{j\in\mathcal{N}_i} (\Vert x_i - x_j \Vert^2) \le \delta_i\}$.
Implicitly, we assume that coupled agents are connected in the communication graph $\mathcal{G}$.
To avoid excessive notation, we do not consider coupling constraints including the inputs of agents, although these may be included.

The control goal is for the agents' outputs to converge to a periodic trajectory that satisfies a cooperative task, such as synchronization or flocking, which includes consensus as a special case if the trajectory is an equilibrium with an arbitrary period.
For this purpose, we define the \change{agents'} outputs
\begin{equation*}
    y_i(t) = h_i(x_i(t), u_i(t))
\end{equation*}
with $y_i(t) \in {Y}_i \subseteq \mathbb{R}^{p_i}$ and continuous $h_i: {X}_i \times {U}_i \to {Y}_i$. 
Notably, the final periodic output trajectory is not given \emph{a priori} by an external governor, but could be any that achieves the cooperative task.
We characterize this cooperative task through an \emph{output cooperation set} that contains acceptable output trajectories achieving this task.
\begin{defn}\label{def:output_cooperation_set}
    A nonempty set $\mathcal{Y}_T^{\mathrm{c}}$ is called an \emph{output cooperation set} if it is compact and the cooperative task is achieved whenever $[\col_{i=1}^m(y_i(t)), \dots, \col_{i=1}^m(y_i(t+T-1))] \in \mathcal{Y}_T^{\mathrm{c}}$.
\end{defn}

In the following, we will combine penalties on the distance to the output cooperation set and to an intermediary periodic trajectory in order to design a distributed MPC scheme that achieves the cooperative control goal.

\section{Tracking of cooperation outputs}
In the distributed MPC scheme proposed below, the strategy is to design energy-like functions that ensure asymptotic fulfilment of the cooperative task.
The agents decide in every time step the reference output trajectory they want to track by minimizing a tracking cost as well as a penalty function on the distance from the reference output trajectory to the cooperative task.
We will call these reference output trajectories \emph{cooperation outputs}.

Thus, it is not necessary to specify \emph{a priori} a particular solution to the cooperative task.
Instead, the agents coordinate a solution of the cooperative task depending on the optimization problem in the distributed MPC scheme using cooperation outputs as intermediary targets.
Furthermore, since any feasible cooperation output can be assumed, the region of attraction is greatly extended and smaller prediction horizons can be realized compared to targeting a solution of the cooperative task directly, similarly to standard MPC for tracking~\cite{Limon2018}. 
This can offset the added computational complexity caused by including cooperation outputs as decision variables.

Before further specifying the cooperation outputs, we define the set of feasible periodic trajectories on a state and input level that are strictly inside the constraints,
\begin{align*}
    Z_{T,i} = &\{r_{T,i} = (x_{T,i}, u_{T,i}) \in (\interior Z_i)^{T} \\
    &\mid x_{T,i}(\tau+1) = f_i(x_{T,i}(\tau), u_{T,i}(\tau)),\; \tau\in\mathbb{I}_{0:T-2}, \\
    &\hspace{0.87em} x_{T,i}(0) = f_i(x_{T,i}(T-1), u_{T,i}(T-1))\}.
\end{align*}
We choose a nonempty subset $\mathcal{Z}_{T,i} \subseteq Z_{T,i}$ of admissible periodic cooperation trajectories, since not all in $Z_{T,i}$ may be desirable.
Next, we identify all output trajectories that correspond to these references and which strictly satisfy the coupling constraints in the set 
\begin{align*}
    \mathbb{Y}_{T,i} = &\{ y_{T,j} \in Y_j^{T},\; j\in\mathcal{N}_i\cup\{i\} \\
    &\mid \exists r_{T,j} = (x_{T,j}, u_{T,j}) \in \mathcal{Z}_{T,j},\\
    &\hspace{0.87em} y_{T,j}(\tau) = h_j(x_{T,j}(\tau), u_{T,j}(\tau)),\\
    &\hspace{0.87em} (x_{T,i}(\tau), x_{T,\mathcal{N}_i}(\tau)) \in \mathcal{C}_i \ominus \mathcal{B}_{\eta_i},\; \forall \tau\in\mathbb{I}_{0:T-1} \}
\end{align*}
with some (small) constant $\eta_i > 0$.
Again, we choose a nonempty subset $\mathcal{Y}_{T,i} \subseteq \mathbb{Y}_{T,i}$ of admissible cooperation outputs.
These choices should satisfy the following condition.
\begin{assum}\label{assm:compact_cooperation_sets}
    The sets $\mathcal{Z}_{T,i}$ and $\mathcal{Y}_{T,i}$ are compact.
\end{assum}
\change{Note that the coupling constraints $\mathcal{C}_i$ need not be bounded for $\mathcal{Y}_{T,i}$ to be compact.}

The following definition provides a penalty function that penalizes the distance from the cooperation outputs to the output cooperation set. 
Define $\mathcal{Y}_{T} = \{y_{T,i} \in Y_i^{T},\; i\in\mathbb{I}_{1:m} \mid (y_{T,i}, y_{T,\mathcal{N}_i}) \in \mathcal{Y}_{T,i} \}$.
\begin{defn}\label{def:COF}
    A \change{continuous} function $W^{\mathrm{c}}: \mathcal{Y}_T \to \mathbb{R}_{\ge 0}$ is a \emph{cooperation objective function} if it has the following properties:
    \begin{enumerate}
        \item There exist $\alpha_{\mathrm{lb}}^\mathrm{c}, \alpha_{\mathrm{ub}}^\mathrm{c} \in \mathcal{K}_\infty$ such that 
        $\alpha_{\mathrm{lb}}^\mathrm{c}(\vert y_{T} \vert_{\mathcal{Y}_{T}^\mathrm{c}}) \le W^{\mathrm{c}}(y_T) \le \alpha_{\mathrm{ub}}^\mathrm{c}(\vert y_{T} \vert_{\mathcal{Y}_{T}^\mathrm{c}})$, where $y_T = \col_{i=1}^m y_{T,i}$.
        \item $W^{\mathrm{c}}$ is separable according to $\mathcal{G}$, i.e. $W^{\mathrm{c}}(y_T) = \sum_{i=1}^{m}  W_{i}^{\mathrm{c}}(y_{T,i}, y_{T,\mathcal{N}_i})$.
        \item For any $y_{T} \in \mathcal{Y}_{T}$ the cost is shift-invariant, i.e. $W_{i}^{\mathrm{c}}(y_{T,i}(\cdot), y_{T,\mathcal{N}_i}(\cdot)) = W_{i}^{\mathrm{c}}(y_{T,i}(\cdot+1), y_{T,\mathcal{N}_i}(\cdot+1))$.
    \end{enumerate}
\end{defn}

The cooperation objective function is designed to encode the cooperative task and indicate the distance from the multi-agent system to it.
The agents should choose their cooperation outputs over time such that the cooperation objective function is minimized.
Eventually this leads to closed-loop fulfilment of the cooperative task in the \change{agents'} outputs.
We will detail below additional conditions on $W^{\mathrm{c}}$ and $\mathcal{Y}_{T,i}$ that are sufficient for this.

In order to track cooperation outputs, we introduce a link between these and state and input trajectories.
For each cooperation output, a unique corresponding periodic reference trajectory should exist, and a change in the former should continuously result in a change in the latter.
We capture this in the following standard assumption in MPC for tracking (cf. \cite[Assm. 1]{Limon2018}, \cite[Assm. 6]{jkoehler_nonlinear_dynamic_2020}, {\cite[Assm. 3]{MatthiasKohler2024_DistributedMPCSelfOrganized}}).
\begin{assum}\label{assm:unique_corresponding_equilibrium}
    There exist Lipschitz \change{continuous}, injective functions $g_{x,i}:\mathcal{Y}_{T,i}^{T} \to X_i^{T}$ and $g_{u,i}:\mathcal{Y}_{T,i}^{T} \to U_i^{T}$ such that $r_{T,i} = (x_{T,i}, u_{T,i}) = (g_{x,i}(y_{T,i}), g_{u,i}(y_{T,i})) \in \mathcal{Z}_{T,i}$ is unique, and $y_{T,i}(\tau) = h_i(x_{T,i}(\tau), u_{T,i}(\tau))$ for $\tau \in \mathbb{I}_{0:T-1}$. The Lipschitz constants are $L_{x,i}$ and $L_{u,i}$, respectively.
\end{assum}

See for example \cite[Rmk. 1]{Limon2018} for a sufficient condition for Assumption~\ref{assm:unique_corresponding_equilibrium} based on the Jacobians of the agents' dynamics.
\begin{rem}\label{rem:unique_corresponding_equilibrium}
    For $r_{T,i},\hat{r}_{T,i} \in \mathcal{Z}_{T,i}$, we define $\vert \hat{r}_{T,i} \vert_{r_{T,i}} = \sqrt{\vert \hat{x}_{T,i} \vert_{x_{T,i}}^2 + \vert \hat{u}_{T,i} \vert_{u_{T,i}}^2}$. Then, a simple calculation yields $\vert \hat{r}_{T,i} \vert_{r_{T,i}} \le \max(L_{x,i}, L_{u,i}) \vert \hat{y}_{T,i} \vert_{y_{T,i}}$.
\end{rem}

We use a stage cost satisfying the following standard assumption such that agents track $x_{T,i}$ in order to realize $y_{T,i}$.
Define $\ell_i'(x_i, r_{T,i}(\tau)) = \min_{u_i \in \mathbb{U}_i(x_i)}\, \ell_i(x_i, u_i, r_{T,i}(\tau))$.
\begin{assum}\label{assm:stage_cost_lower_and_upper_bound}
    There exist $\alpha_{\mathrm{lb}}^{\ell_i}, \alpha_{\mathrm{ub}}^{\ell_i} \in \mathcal{K}_\infty$ such that for all $(x_i, u_i) \in Z_i$ and $r_{T,i} \in \mathcal{Z}_{T,i}$, with $\tau\in\mathbb{I}_{0:T-1}$,
    \begin{equation}\label{eq:stage_cost_lower_and_upper_bound}
        \alpha_{\mathrm{lb}}^{\ell_i}(\vert x_i \vert_{x_{T,i}(\tau)}) \le \ell_i'(x_i, r_{T,i}(\tau)) \le \alpha_{\mathrm{ub}}^{\ell_i}(\vert x_i \vert_{x_{T,i}(\tau)}).
    \end{equation}
\end{assum}

The final component of the tracking part are suitable terminal ingredients for any periodic reference trajectory that the agents may choose.
These are also standard in MPC for tracking, cf.~\cite[Assm. 3]{Limon2018},~\cite[Assm. 2]{jkoehler_nonlinear_dynamic_2020}.
\begin{assum}\label{assm:terminal_ingredients}
    \change{\emph{a)}}
    There exist terminal control laws
    $
    k_{i}^\mathrm{f}: {X}_i \times Z_i \to {U}_i
    $, continuous terminal costs
    $
    V_i^\mathrm{f}: {X}_i \times Z_i \to \mathbb{R}_{\ge 0}
    $, and compact terminal sets
    $
    \mathcal{X}_i^\mathrm{f}: Z_i \to 2^{X_i}$ such that for any $r_{T,i} \in \mathcal{Z}_{T,i}$ and $x_i \in \mathcal{X}_i^\mathrm{f}(r_{T,i}(\tau))$, for all $\tau \in \mathbb{I}_{0:T-1}$,
    \begin{subequations}
        \begin{align}
            &V_i^\mathrm{f}\Big(f_i\big(x_i, k^\mathrm{f}_i(x_i, r_{T,i}(\tau))\big), r_{T,i}(\tau+1)\Big) - V_i^\mathrm{f}\Big(x_i, r_{T,i}(\tau)\Big)
            \notag
            \\
            &\le  - \ell_i\Big(x_i, k^\mathrm{f}_i(x_i, r_{T,i}(\tau)), r_{T,i}(\tau)\Big), \label{eq:terminal_cost_decrease}
            \\
            &\Big(x_i, k^\mathrm{f}_i\big(x_i, r_{T,i}(\tau)\big)\Big) \in \change{Z_i},
            \\
            &f_i\Big(x_i, k^\mathrm{f}_i\big(x_i, r_{T,i}(\tau)\big)\Big) \in \mathcal{X}_i^\mathrm{f}\Big(r_{T,i}(\tau+1)\Big).
        \end{align}
    \end{subequations}

    \change{\emph{b)}}
    Moreover, there exist $c_i^{\mathrm{b}} \ge 0$ and $c_i^{\mathrm{f}} > 0$ such that for any $r_{T,i} \in \mathcal{Z}_{T,i}$ and $x_i \in \mathcal{X}_i^\mathrm{f}(r_{T,i}(\tau))$, for all $\tau \in \mathbb{I}_{0:T-1}$,
    \begin{subequations}
        \begin{align}
            \mathcal{B}_{c_i^{\mathrm{b}}}(x_{T,i}(\tau)) &\subseteq \mathcal{X}_i^\mathrm{f}(r_{T,i}(\tau)),\label{eq:terminal_non_empty_interior}
            \\
            V_i^\mathrm{f}(x_i, r_{T,i}(\tau)) &\le c_i^{\mathrm{f}} \ell_i'(x_i, r_{T,i}(\tau)).\label{eq:terminal_cost_upper_bound}
        \end{align}
    \end{subequations}
    
    \change{\emph{c)}}
    \change{Furthermore}, if the only feasible choice is $c_i^{\mathrm{b}} = 0$ to satisfy~\eqref{eq:terminal_non_empty_interior}, there also exists $\varepsilon_i^{\mathrm{f}} > 0$ and $N_i^{\mathrm{f}}\in\mathbb{N}$ such that for any $r_{T,i} \in \mathcal{Z}_{T,i}$ and $x_i$ with $\ell_i'(x_i, r_{T,i}(0)) \le \varepsilon_i^{\mathrm{f}}$, there exists $u_i^{\mathrm{f}} \in U_i^{N_i^{\mathrm{f}}}$ with $x_{i,u_i^{\mathrm{f}}}(N_i^{\mathrm{f}}, x_i) = x_{T,i}(N_i^{\mathrm{f}})$ and $\sum_{k=0}^{N_i^{\mathrm{f}}-1} \ell_i (x_{i,u_i^{\mathrm{f}}}(k),u_i^{\mathrm{f}}(k), r_{T,i}(k)) \le  c_i^{\mathrm{f}}\ell_i'(x_i, r_{T,i}(0))$.
\end{assum}

\change{Note that Assumption~\ref{assm:terminal_ingredients} above results in terminal equality constraints when $c_i^{\mathrm{b}} = 0$.}
\change{Then,} $\mathcal{X}_i^{\mathrm{f}}(r_{T,i}(\tau)) = \{ x_{T,i}(\tau) \}$ and $V_i^{\mathrm{f}}(x_i, r_{T,i}(\tau)) = 0$ for $x_i \in \mathcal{X}_i^{\mathrm{f}}(r_{T,i}(\tau))$, and a suitable upper bound on the summed stage cost \change{is required}.
See, e.g., \cite[Lem. 5]{jkoehler_nonlinear_dynamic_2020} for a sufficient condition for Assumption~\ref{assm:terminal_ingredients}, and~\cite{JKoehler2020a} for ways to compute these generalized terminal ingredients offline based on linear matrix inequalities.
Terminal equality constraints can be used if the agents are locally uniformly finite time controllable and a bound on the penalty function for tracking holds, see~\cite[Prop. 4]{jkoehler_nonlinear_dynamic_2020}.
\change{A key strength of our framework is that Assumption~\ref{assm:terminal_ingredients} is fairly standard, as pointed out above, and that the components admit a fully decentralized design, since coupling constraints do not appear in Assumption~\ref{assm:terminal_ingredients}}.

\change{To handle the coupling constraints}, the set of admissible cooperation outputs $\mathcal{Y}_{T,i}$ is designed \change{to ensure their strict satisfaction for all cooperation outputs} (see Assumption~\ref{assm:compact_cooperation_sets}).
This allows for sufficiently small terminal regions.
\begin{assum}\label{assm:tightened_coupling_constraints}
    There exists $\eta_i > 0$ such that $k_{j}^\mathrm{f}$, $V_j^\mathrm{f}$, and $\mathcal{X}_j^\mathrm{f}(r_{T,j}(\tau))$, with $j\in\mathcal{N}_i \cup \{i\}$, from 
    Assumption~\ref{assm:terminal_ingredients} entail
    \begin{subequations}\label{eq:terminal_coupling_constraints}
        \begin{align}
            (x_i, x_{\mathcal{N}_i}) &\in \mathcal{C}_i
            \\
            \big(x_{i,k^\mathrm{f}_i(x_i, r_{T,i}\change{(\tau)})}(1), x_{\mathcal{N}_i, k^\mathrm{f}_{\mathcal{N}_i}(x_{\mathcal{N}_i}, r_{T,\mathcal{N}_i}\change{(\tau)})}(1)\big) &\in \mathcal{C}_i
        \end{align}
    \end{subequations}
    for all $x_j \in \mathcal{X}_j^\mathrm{f}(r_{T,j}(\tau))$ and all $\tau\in\mathbb{I}_{0:\change{T-1}}$, if  
    $(x_{T, i}(\tau), x_{T, \mathcal{N}_i}(\tau)) \in \mathcal{C}_i \ominus \mathcal{B}_{\eta_i}$
    for all $\tau\in\mathbb{I}_{0:\change{T-1}}$.
\end{assum}
\begin{rem}
    \change{An often-used approach to designing terminal sets is to select sublevel sets of the terminal cost function, i.e. $\mathcal{X}_i^\mathrm{f}(r_{T,i}(\tau)) = \{x_i\in\mathbb{R}^{n_i} \mid V_i^\mathrm{f}(x_i, x_{T,i}(\tau)) \le \alpha_i\}$ with $\alpha_i>0$. This construction also results from the procedure outlined in~\cite{JKoehler2020a} for deriving terminal components that satisfy Assumption~\ref{assm:terminal_ingredients}.
    In this standard setting, one can always choose a smaller value of} $\alpha_i$ to satisfy Assumption~\ref{assm:tightened_coupling_constraints}.
    For terminal equality constraints Assumption~\ref{assm:tightened_coupling_constraints} holds trivially with $k^\mathrm{f}_i(x_i, r_{T,i}) = k^\mathrm{f}_i(x_{T,i}(\tau), r_{T,i}) = u_{T,i}(\tau)$.
\end{rem}
\begin{rem}
    Choosing $\mathcal{Z}_{T,i}$ and $\mathcal{Y}_{T,i}$ offline, and relying on Assumption~\ref{assm:tightened_coupling_constraints}, may lead to small terminal regions.
    By choosing $\mathcal{Z}_{T,i}$ in the interior of $Z_i^{T}$ and tightening the coupling constraints with $\eta_i>0$, we trade off faster convergence with operation closer to the constraints' boundary.
    This is standard in MPC for tracking~\cite{Limon2016,Limon2018,jkoehler_nonlinear_dynamic_2020}.
    A potential remedy is given by also optimizing the terminal set size online; see~\cite[Prop. 11]{jkoehler_nonlinear_dynamic_2020} for a modification in the case of polytopic constraints.
\end{rem}

We have described the necessary components of our distributed MPC scheme, which we combine in the next section.

\section{Distributed MPC for cooperation}
\subsection{Distributed MPC scheme}
In this section, we introduce an iterative distributed MPC scheme in which a single optimization problem is formulated that can be solved using a decentralized optimization algorithm.
The penalty functions for tracking over the prediction horizon $N\in\mathbb{N}_0$ are given by
\begin{align*}
    &J_i^{\mathrm{tr}}(x_i, u_i, r_{T,i}) = \smash[b]{\sum_{k=0}^{N-1}} \ell_i (x_{i,u_i}(k), u_i(k), r_{T,i}(k))
    \\
    &\hspace{10em}+ V_i^\mathrm{f}(x_{i,u_i}(N), r_{T,i}(N)).
\end{align*}
In addition, we introduce a previous cooperation output $y_{T}^{\mathrm{pr}}$ and a function $V_i^{\Delta}$ that penalizes the difference between the cooperation output chosen at the current time step and $y_{T}^{\mathrm{pr}}$.
This ensures that the closed-loop state eventually follows a (unique) periodic trajectory despite the cooperative task often allowing (by design) multiple solutions; further details on $V_i^{\Delta}$ are given below.
Next, the penalty functions for tracking, for cooperation, and for changing the cooperation output are combined to the objective functions:
\begin{align*}
    &J_i(x_i, u_i, y_{T,i}, y_{T,i}^{\mathrm{pr}}, y_{T,\mathcal{N}_i})  
    \\
    &= \hspace{-1.5pt} J_i^{\mathrm{tr}}(x_i, u_i, r_{T,i}) \hspace{-2pt} + \hspace{-2pt} \lambda(N)\big(V_i^{\Delta}(y_{T,i}, y_{T,i}^{\mathrm{pr}}) \hspace{-2pt} + \hspace{-2pt} W_{i}^{\mathrm{c}}(y_{T,i}, y_{T,\mathcal{N}_i})\big). 
\end{align*}
Furthermore, we introduce scaling by $\lambda(N)$ satisfying the following assumption.
\begin{assum}\label{assm:scaling}
    The scaling function $\lambda:\mathbb{N}_0 \to \change{\mathbb{R}_{\ge 0}}$ satisfies $\lambda(N) \ge N$ and $\lambda(0) \ge 1$.
\end{assum}

This scaling can be satisfied easily with $\lambda(N) = N+1$. 
It is essential for the asymptotic performance bound in Section~\ref{sec:performance_bounds} below, cf. also~\cite{MatthiasKohler2023_TransientPerformanceMPC}, but not for stability, where $N$ is fixed.

Finally, we introduce the optimization problem that is solved in each time step.
Given (collected) state measurements $x$ and (previous) periodic output trajectories $y_{T}^{\mathrm{pr}}$, it is as follows.%
\begin{subequations}\label{eq:central_OP}
    \begin{align}
        &\mathcal{J}(x, y_T^{\mathrm{pr}}) = \min_{\substack{u \in \mathbb{U}^N(x) \\ y_{T} \\ \change{r_{T}} }} \sum_{i=1}^{m} J_i(x_i, u_i, y_{T,i}, y_{T,i}^{\mathrm{pr}}, y_{T,\mathcal{N}_i}) &
        \label{eq:central_OP_objective}
        \\
        \intertext{subject to, for all $i \in \mathbb{I}_{1:m}$,}
        &x_{i, u_i}(N, x_i) \in \mathcal{X}_i^\mathrm{f}(r_{T,i}(N)),
        \label{eq:central_OP_terminal_constraint}
        \\
        &(x_{i, u_i}(k, x_i), x_{\mathcal{N}_i, u_{\mathcal{N}_i}}(k, x_{\mathcal{N}_i})) \in \mathcal{C}_i, \; k\in\mathbb{I}_{0:N},
        \label{eq:central_OP_coupling_constraints}
        \\
        &(y_{T, i}, y_{T, \mathcal{N}_i}) \in \mathcal{Y}_{T,i}.
        \label{eq:central_OP_artificial_reference_constraints}
    \end{align}
\end{subequations}
The structure of~\eqref{eq:central_OP} agrees with the multi-agent system's communication topology $\mathcal{G}$.
Hence, this optimization problem is amenable to decentralized optimization algorithms, e.g.~\cite{
    Boyd2010, 
    Stewart2011_Cooperativedistributedmodel, 
    Giselsson2013_Acceleratedgradientmethods, 
    Engelmann2020, 
    Stomberg2024 
}.
The choice depends on the convexity and type of the objective function and constraints, as well as considerations of local computation and communication capabilities; a discussion of which goes beyond the scope of this paper.
\change{Lastly, $r_{T,i}$ is uniquely determined by $y_{T,i}$ according to Assumption~\ref{assm:unique_corresponding_equilibrium}. 
The mapping defined in Assumption~\ref{assm:unique_corresponding_equilibrium} does not need to be explicitly known to solve \eqref{eq:central_OP}, as it can be handled implicitly within the optimization. Implementation details are provided in the published code \cite{public_code}.}

The solution of~\eqref{eq:central_OP} at time $t$ depends on $x(t)$ and $y_T^{\mathrm{pr}}(\cdot\vert t)$, which we combine in $\xi(t) = (x(t), y_T^{\mathrm{pr}}(\cdot\vert t))$ or simply $\xi = (x, y_T^{\mathrm{pr}})$ if we do not refer to a specific time. 
We denote it by $u_i^0(\cdot\vert \xi(t))$ and $y_{T,i}^0(\cdot \vert \xi(t))$ with the corresponding $r_{T,i}^0(\cdot\vert \xi(t))$.
Here, e.g. $y_T^{\mathrm{pr}}(k\vert t)$ denotes the $k$-th step of $y_T^{\mathrm{pr}}$ at time $t$ and $u_i^0(k\vert \xi)$ is the $k$-th step optimal prediction given $\xi$.
At $t=0$, $V_i^\Delta$ is omitted in~\eqref{eq:central_OP_objective}, i.e. $y_T^{\mathrm{pr}}(\cdot \vert 0)$ plays no role and can be chosen arbitrarily.
Otherwise, we set $y_T^{\mathrm{pr}}(\cdot\vert t) = y_T^0(\cdot+1\vert \xi(t-1))$.
The set of states for which~\eqref{eq:central_OP} is feasible is denoted by $\mathcal{X}_N$, and it is independent of $y_T^{\mathrm{pr}}$.
We use $\mathcal{J}(\xi)=\mathcal{J}(x, y_T^{\mathrm{pr}})$ accordingly.

In each time step, after solving~\eqref{eq:central_OP}, the first part of the optimal input sequence $\mu_{i}(\xi(t)) =  u_i^0(0\vert \xi(t))$ is applied to the system.
The global closed-loop system is then given by
\begin{subequations}\label{eq:central_global_closed_loop}
    \begin{align}
        x(t + 1) &= f(x(t), \mu(\xi(t))), & x(0) = x_0,
        \label{eq:central_global_closed_loop_a}
        \\
        y(t) &= h(x(t), \mu(\xi(t))), & y_T^{\mathrm{pr}}(\cdot \vert 0) \text{ arbitrary},
    \end{align}
\end{subequations}
with some initial condition $x_0 \in X = \prod_{i=1}^m X_i$.

\subsection{Example: Satellite constellation}\label{ssec:satellite_example}
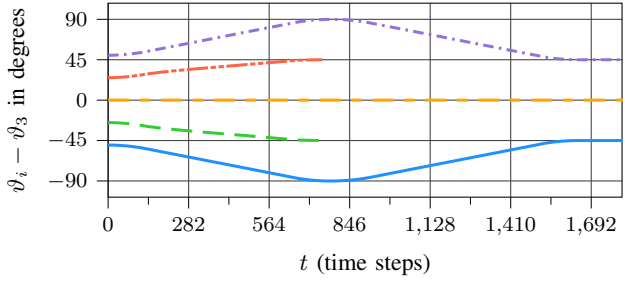
\begin{figure}[tb]
    \setlength\axisheight{0.47\linewidth}
    \setlength\axiswidth{0.95\linewidth}
    \centering
    \begin{tikzpicture}
\definecolor{vivid_blue}{HTML}{1E90FF}  
\definecolor{bright_green}{HTML}{32CD32}  
\definecolor{vibrant_orange}{HTML}{FFA500}  
\definecolor{tomato_red}{HTML}{FF6347}  
\definecolor{medium_purple}{HTML}{9370DB}  
\definecolor{calm_teal}{HTML}{20B2AA}  

\begin{axis}[
height=\axisheight,
legend cell align={left},
legend style={
    at={(0.99,0.99)}, 
    anchor=north east, 
    draw=lightgray, 
    fill opacity=0.8, 
    draw opacity=1,
    text opacity=1,
    font=\small
},
minor x tick num=1,
minor y tick num=0,
tick align=outside,
tick pos=left,
width=\axiswidth,
x grid style={darkgray},
xlabel={$t$ (time steps)},
xmajorgrids,
xmin=0.0, 
xmax=1800,
xtick distance=282,
xtick style={color=black},
y grid style={darkgray},
ymajorgrids,
ytick style={color=black},
ytick={-90, -45, 0, 45, 90},
ylabel={$\vartheta_i - \vartheta_3$ in degrees},
label style={font=\small},
tick label style={font=\footnotesize}
]

\addplot[vivid_blue, line width=0.40mm] table [x=t, y=dtheta] {./plotdata/satellite_constellation/sat_A1_dtheta.tex};

\addplot[bright_green, line width=0.40mm, dash pattern=on 8pt off 4pt,] table [x=t, y=dtheta] {./plotdata/satellite_constellation/sat_A2_dtheta.tex};

\addplot[vibrant_orange, line width=0.40mm, dash pattern=on 8pt off 4pt on 4pt off 4pt,] table [x=t, y=dtheta] {./plotdata/satellite_constellation/sat_A3_dtheta.tex};

\addplot[tomato_red, line width=0.40mm, dash pattern=on 8pt off 1pt on 2pt off 1pt on 2pt off 1pt,] table [x=t, y=dtheta] {./plotdata/satellite_constellation/sat_A4_dtheta.tex};

\addplot[medium_purple, line width=0.40mm, dash pattern=on 4pt off 2pt on 1pt off 2pt,] table [x=t, y=dtheta] {./plotdata/satellite_constellation/sat_A5_dtheta.tex};

\end{axis}
\end{tikzpicture}
    \caption{Relative angular positions of the satellites.
    From the top: $\vartheta_1 - \vartheta_3$, $\vartheta_2 - \vartheta_3$, $\vartheta_3 - \vartheta_3$, $\vartheta_4 - \vartheta_3$, and $\vartheta_5 - \vartheta_3$.}
    \label{fig:satellite_angular_position}
\end{figure}
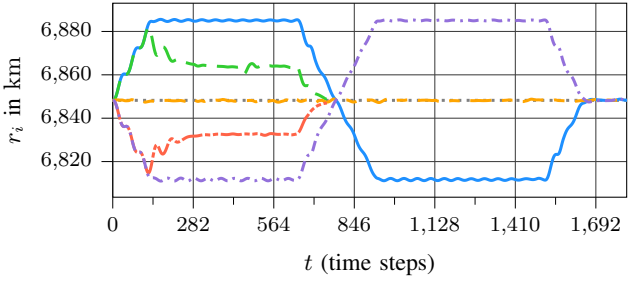
\begin{figure}[tb]
    \setlength\axisheight{0.47\linewidth}
    \setlength\axiswidth{0.95\linewidth}
    \centering
    \begin{tikzpicture}
\definecolor{vivid_blue}{HTML}{1E90FF}  
\definecolor{bright_green}{HTML}{32CD32}  
\definecolor{vibrant_orange}{HTML}{FFA500}  
\definecolor{tomato_red}{HTML}{FF6347}  
\definecolor{medium_purple}{HTML}{9370DB}  
\definecolor{calm_teal}{HTML}{20B2AA}  

\begin{axis}[
height=\axisheight,
legend cell align={left},
legend style={
    at={(0.75, .9)}, 
    anchor=north east, 
    draw=lightgray, 
    fill opacity=0.8, 
    draw opacity=1,
    text opacity=1,
    font=\small
},
minor x tick num=1,
minor y tick num=0,
tick align=outside,
tick pos=left,
width=\axiswidth,
x grid style={darkgray},
xlabel={$t$ (time steps)},
xmajorgrids,
xmin=0.0, 
xmax=1800,
xtick distance=282,
xtick style={color=black},
y grid style={darkgray},
ymajorgrids,
ytick style={color=black},
ylabel={$r_i$ in \unit{\kilo\metre}},
label style={font=\small},
tick label style={font=\footnotesize}
]
\addplot [gray, dotted, line width=0.4mm] coordinates {(-10, 6848.23395327399) (2000, 6848.23395327399)};

\addplot[vivid_blue, line width=0.40mm] table [x=t, y=r] {./plotdata/satellite_constellation/sat_A1_r.tex};

\addplot[bright_green, line width=0.40mm, dash pattern=on 8pt off 4pt,] table [x=t, y=r] {./plotdata/satellite_constellation/sat_A2_r.tex};

\addplot[vibrant_orange, line width=0.40mm, dash pattern=on 8pt off 4pt on 4pt off 4pt,] table [x=t, y=r] {./plotdata/satellite_constellation/sat_A3_r.tex};

\addplot[tomato_red, line width=0.40mm, dash pattern=on 8pt off 1pt on 2pt off 1pt on 2pt off 1pt,] table [x=t, y=r] {./plotdata/satellite_constellation/sat_A4_r.tex};

\addplot[medium_purple, line width=0.40mm, dash pattern=on 4pt off 2pt on 1pt off 2pt,] table [x=t, y=r] {./plotdata/satellite_constellation/sat_A5_r.tex};

\end{axis}
\end{tikzpicture}
    \caption{Orbital radii of the satellites. A grey dotted line marks the initial radius of the satellites. At $t=282$ from the top: $r_1$, $r_2$, $r_3$, $r_4$, and $r_5$.}
    \label{fig:satellite_states}
\end{figure}
Before analysing the theoretical properties of the proposed distributed MPC scheme, we illustrate the scheme with an example inspired by~\cite{Sin2020_PassivityBasedDistributedAcquisition, Pippia2022_Reconfigurationsatelliteconstellation}, where we reconfigure a satellite constellation.
The dynamics of the satellites are given in polar coordinates (cf.~\cite{Sin2020_PassivityBasedDistributedAcquisition}):
\begin{align*}
    \dot{r}_i &= v_i, & \dot{v}_i &= r_i \omega_i^2 - \frac{\mu}{r_i^2} + \frac{F_{r,i}}{m_i},\\
    \dot{\vartheta}_i &= \omega_i, & \dot{\omega}_i &= \frac{-2v_i\omega_i}{r_i} + \frac{F_{\vartheta,i}}{m_i r_i}, 
\end{align*}
where $r_i$ is the orbital radius, $\vartheta_i$ the angular position, $v_i$ and $\omega_i$ are the respective velocities, $F_{r,i}$ and $F_{\vartheta,i}$ are the radial and tangential thrusts, $\mu$ is the standard gravitational parameter (we choose Earth's $\mu = \SI{3.986e14}{\metre^3\per\second^2}$) and $m_i= \SI{200}{\kilo\gram}$ is the satellites' mass.
We have $x_i = \begin{bmatrix} r_i & \vartheta_i & v_i & \omega_i \end{bmatrix}$ and $u_i = \begin{bmatrix} F_{r,i} & F_{\vartheta,i} \end{bmatrix}$.
The continuous-time dynamics are discretized using the Runge-Kutta method (RK4) with a step size of $\SI{120}{\second}$. 
The cooperative task is to achieve a constellation with an angular difference of \SI{45}{\degree} on an orbit with a periodicity of $T = 47$.
The stage cost is chosen as a quadratic stage cost $\ell_i(x_i, u_i, x_{T,i}, u_{T,i}) = \Vert x_i -  x_{T,i}\Vert_Q^2 + \Vert u_i -  u_{T,i}\Vert_R^2$ where $Q$ is a diagonal matrix with diagonal entries $[2 \times 10^{-8}, 0.02, 0.0002, 0.002]$ and $R=10^{-7}I$.
We constrain the cooperation state and cooperation input trajectories such that $r_{T,i}(\tau) \approx \SI{6848.234}{\kilo\metre}$, $v_{T,i}(\tau) = \SI{0}{\frac{\metre}{\second}}$, $\omega_{T,i}(\tau) = \sqrt{\frac{\mu}{r_{T,i}^3(\tau)}}$, and $F_{T,r,i}(\tau) = F_{T,\vartheta,i}(\tau) = \SI{0}{\newton}$, which yields an orbit with the required periodicity without applying thrust.
To achieve the cooperative task, we choose $y_i(t) = \vartheta_i(t)$.
We define $\Delta\vartheta_{ij}(\tau) = y_{T,j}(\tau) - y_{T,i}(\tau) - \mathrm{rad}\,(45)$ for $j<i$ and $\Delta\vartheta_{ij}(\tau) = y_{T,i}(\tau) - y_{T,j}(\tau) - \mathrm{rad}\,(45)$ for $j>i$, where $\mathrm{rad}$ transforms degrees to radians.
We assign indices counter-clockwise on the orbit, i.e. Satellite 1 starts with the smallest $\vartheta_i$, and always define adjacent satellites on the orbit as neighbours.
Then, we choose 
$W^{\mathrm{c}}(y_T) = \frac{1}{2}\sum_{i=1}^{m} \sum_{\tau=0}^{T-1} \sum_{j\in\mathcal{N}_i} \vert \mathcal{N}_i \vert 
\Delta\vartheta_{ij}(\tau)^2  $,
where $\vert \mathcal{N}_i \vert$ is the number of neighbours of Satellite $i$.
Moreover, we impose as input constraints $\change{\vert} F_{r,i} \change{\vert}, \change{\vert} F_{\vartheta,i} \change{\vert} \le \SI{0.237}{\milli\newton}$.
State constraints are also assumed, but are unimportant for the simulation results.
We choose $N=3T$, use terminal equality constraints, and start with $m=5$ satellites with $r_i(0) \approx \SI{6848.234}{\kilo\metre}$, $\vartheta_i(0) = i \cdot \mathrm{rad}\, \SI{25}{\degree}$, $v_{i}(0) = 0$, and $\omega_i(0) = \sqrt{\frac{\mu}{r_i(0)^3}}$. 
Moreover, we use $V_i^{\Delta}(y_{T,i}, y_{T,i}^{\mathrm{pr}}) = \frac{1}{10^4T}\sum_{\tau=0}^{T-1} \Vert y_{T,i}(\tau) - y_{T,i}^{\mathrm{pr}}(\tau) \Vert^2$.
After $750$ steps, we deorbit Satellite 2 and Satellite 4, i.e. remove them from the system. 
The differences in the angular position are shown in Figure~\ref{fig:satellite_angular_position}, and Figure~\ref{fig:satellite_states} shows the orbital radius.
The satellites start transferring to orbits with the desired angular difference, and adapt without issue after two satellites are deorbited.
The scheme allowed for this easy transition since each agent's constraints were unaffected by the change in the topology and the cooperation objective function was designed with respect to neighbours.
Note that no component redesign was necessary during runtime, except that the communication topology was updated.

Because the proposed distributed MPC scheme explicitly handles periodic cooperative tasks, it was easily possible to include the angular position in the satellites' model.
This allowed us to promote directly a desired angular difference between agents.
For a scheme that only deals with cooperative tasks at equilibria, this is difficult, cf.~\cite{Sin2020_PassivityBasedDistributedAcquisition,Pippia2022_Reconfigurationsatelliteconstellation}.
Furthermore, collision avoidance constraints, while unnecessary in this simulation, could have been easily incorporated as $\Vert \vartheta_i(t) - \vartheta_j(t) \Vert \ge \vartheta_{\mathrm{min}}$.

\subsection{Guaranteed achievement of the cooperative task}\label{sec:DMPC_with_terminal_closed_loop_analysis}
We now turn to the analysis of the closed-loop system~\eqref{eq:central_global_closed_loop}.
In the following, we provide general conditions on the design of the stage cost, the cooperation objective function, the penalty on the change in the cooperation output, and the local sets of admissible cooperation outputs.
These are deliberately stated in general terms to constrain their design as little as possible.
After formally stating asymptotic stability and achievement of the cooperative task, we show in Section~\ref{ssec:sufficient_design_central} below that some intuitive choices satisfy these general conditions.

We begin with additional conditions on the cooperation objective function.
We also want to analyse the case in which the cooperative task, characterized by $\mathcal{Y}_T^{\mathrm{c}}$, cannot be achieved, for example, due to constraints or if $\mathcal{Y}_T^{\mathrm{c}} \not\subseteq \mathcal{Y}_{T}$.
For this purpose, we define the best achievable fulfilment of the cooperative task 
\begin{align}
    \smash{\mathcal{Y}_T^{W}} = \argmin_{y_T\in\mathcal{Y}_{T}} W^{\mathrm{c}}(y_T).
\end{align}
\change{Hence, $y_T$ solves the cooperative task as well as possible if $y_T \in \mathcal{Y}_T^W$.}

The following assumption states the existence of a cooperation output that reduces the cooperation objective function if the cooperative task has not been achieved as well as possible.
\begin{assum}\label{assm:better_cooperation_candidate}
    There exist $\omega > 1$, a continuous function $\psi: \mathcal{Y}_T \to \mathbb{R}_{\ge 0}$ positive definite with respect to $\mathcal{Y}_T^W$, and $c_{\psi} > 0$ such that for any $y_T \in \mathcal{Y}_T$ and $\theta\in [0, 1]$ there exists $\hat{y}_T \in \mathcal{Y}_T$ with
    \begin{subequations}
        \begin{align}
            \vert \hat{y}_T \vert_{y_T} &\le \theta c_{\psi}\psi(y_T),
            \label{eq:better_cooperation_candidate_a}
        \\
            W^{\mathrm{c}}(\hat{y}_T) - W^{\mathrm{c}}(y_T) &\le - \theta \psi(y_T)^{\omega}.
            \label{eq:better_cooperation_candidate_b}
        \end{align}
    \end{subequations}
\end{assum}

This assumption imposes a certain growth condition on $W^{\mathrm{c}}$ and a structure on $\mathcal{Y}_T$, for example (but not limited to) convexity, cf.~\cite[Assm. 7]{MatthiasKohler2023_TransientPerformanceMPC}, \cite[Assm. 5]{Soloperto2022} for a similar assumption with the common choice of $\omega = 2$.
Intuitively, it tells us that for any $y_T$ not solving the cooperative task as well as possible, we can find a better cooperation output $\hat{y}_T$ that reduces the cost in proportion to the distance of $y_T$ from solving the cooperative task.

The impact of \change{a change in the cooperation output} on the stage cost is captured by the following assumption, which relates two cooperation outputs to each other, cf.~\cite[Assm. 3]{MatthiasKohler2023_TransientPerformanceMPC} and \cite[Assm. 1]{Soloperto2022}. \change{Recall that $r_{T,i}$ is uniquely determined by $y_{T,i}$ according to Assumption~\ref{assm:unique_corresponding_equilibrium}.}
\begin{assum}\label{assm:stage_cost_comparison}
    There exist $\omega > 1$ and $c_1^{\ell_i}, c_2^{\ell_i} > 0$ satisfying
    \begin{equation}\label{eq:stage_cost_comparison}
        \ell_i(x_i, u_i, \hat{r}_{T,i}(\tau)) \le c_1^{\ell_i} \ell_i(x_i, u_i, r_{T,i}(\tau)) + c_2^{\ell_i} \vert \hat{r}_{T,i} \vert_{r_{T,i}}^{\omega}
    \end{equation}
    for all $y_T, \hat{y}_T \in \mathcal{Y}_T$, $(x_i, u_i) \in Z_i$ and $\tau\in\mathbb{I}_{0:T-1}$.
\end{assum}

This is important to trade off an increase in the tracking cost, caused by moving the reference, \change{against} a decrease in the cooperation objective function.
Note that Assumption~\ref{assm:stage_cost_comparison} holds with $\omega = 2$ for quadratic stage costs on bounded sets, cf.~\cite{Soloperto2022}.
\change{Since quadratic stage costs on bounded sets are standard in the MPC literature, Assumption~\ref{assm:stage_cost_comparison} is not restrictive.}

Furthermore, the penalty functions on the change in the cooperation output, $V_i^\Delta$, need to satisfy the following conditions.
\begin{assum}\label{assm:penalty_function}
    The functions $V_i^{\Delta}$ are continuous.
    Moreover, there exist $\omega > 1$, $c^{\Delta} > 0$ and $\alpha_{\mathrm{lb}}^{\Delta}, \alpha_{\mathrm{ub}}^{\Delta} \in \mathcal{K}_\infty$ such that for any $\hat{y}_{T}, y_{T}, y_{T}^{\mathrm{pr}} \in \mathcal{Y}_{T}$,
    \begin{subequations}
        \begin{align}
            &\alpha_{\mathrm{lb}}^{\Delta}(\vert y_T \vert_{y_T^{\mathrm{pr}}}) \le \sum_{i=1}^{m}V_i^{\Delta}(y_{T,i}, y_{T,i}^{\mathrm{pr}}) \le \alpha_{\mathrm{ub}}^{\Delta}(\vert y_T \vert_{y_T^{\mathrm{pr}}}),
            \label{eq:penalty_cooperation_change_distance}
            \\
            &\sum_{i=1}^{m} V_i^{\Delta}(\hat{y}_{T,i}, y_{T,i}^{\mathrm{pr}}) - 2V_i^{\Delta}(y_{T,i}, y_{T,i}^{\mathrm{pr}}) \le c^{\Delta} \vert \hat{y}_T \vert_{y_T}^{\omega}.
            \label{eq:penalty_cooperation_change_decrease}
        \end{align}
    \end{subequations}
\end{assum}

Condition~\eqref{eq:penalty_cooperation_change_decrease} is similar to~\eqref{eq:stage_cost_comparison} and limits the growth rate of $\sum_{i=1}^{m} V_i^\Delta$. As with Assumption~\ref{assm:stage_cost_comparison}, this means that the resulting decrease in the cooperation objective function can beat the penalty incurred when changing the cooperation output. 
As shown in  Section~\ref{ssec:sufficient_design_central}, a simple quadratic penalty function satisfies this assumption for $\omega = 2$.
\change{As mentioned above, since $\omega = 2$ corresponds to the standard case of quadratic costs on bounded sets, Assumption~\ref{assm:penalty_function} is not restrictive.}

Based on the stated assumptions, we proceed to establish closed-loop constraint satisfaction and stability, which results in closed-loop fulfilment of the cooperative task as well as possible.
First, we prove that~\eqref{eq:central_OP} is recursively feasible, and the constraints are satisfied in closed loop.
\begin{thm}\label{thm:recursive_feasibility}
    Let Assumptions~\ref{assm:unique_corresponding_equilibrium}, \ref{assm:terminal_ingredients}, and~\ref{assm:tightened_coupling_constraints} hold.
    Then, for any initial condition $x_0$ for which~\eqref{eq:central_OP} is feasible,~\eqref{eq:central_OP} is feasible for all future time steps of the closed-loop system~\eqref{eq:central_global_closed_loop}.
    Consequently,~\eqref{eq:central_global_closed_loop} satisfies the constraints, i.e. $\big(x_{i,{\mu_i}}(t), \mu_i(\xi(t))\big) \in Z_i$ and $\big(x_{i,{\mu_i}}(t), x_{\mathcal{N}_i,{\mu_i}}(t)\big)\in \mathcal{C}_i$ for all $t\in\mathbb{N}_0$.
\end{thm}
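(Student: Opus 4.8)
The plan is a proof by induction on the closed-loop time $t$, built on a shifted candidate solution adapted to periodic references and coupling constraints. Suppose $\eqref{eq:central_OP}$ is feasible at time $t$ with a feasible solution $u_i^0(\cdot\vert\xi(t))$, $y_{T,i}^0(\cdot\vert\xi(t))$, $r_{T,i}^0(\cdot\vert\xi(t))$, and let $x_i^0(\cdot\vert\xi(t))$ denote the associated predicted state trajectories, so that $x_i(t+1) = x_i^0(1\vert\xi(t))$. As candidate at $t+1$ I take the one-step time shifts $\hat y_{T,i}(\cdot) = y_{T,i}^0(\cdot+1\vert\xi(t))$ of the cooperation outputs together with the uniquely associated references $\hat r_{T,i}(\cdot) = r_{T,i}^0(\cdot+1\vert\xi(t))$; since $\mathcal{Z}_{T,i}$, $\mathcal{Y}_{T,i}$, and the $\eta_i$-tightened coupling condition defining $\mathbb{Y}_{T,i}$ are invariant under this shift, and the shift is consistent with the unique map of Assumption~\ref{assm:unique_corresponding_equilibrium}, we get $(\hat y_{T,i},\hat y_{T,\mathcal{N}_i})\in\mathcal{Y}_{T,i}$, and $y_T^{\mathrm{pr}}(\cdot\vert t+1) = \hat y_T$ is then also admissible. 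For the inputs I take the shifted sequence $\hat u_i(k) = u_i^0(k+1\vert\xi(t))$ for $k\in\mathbb{I}_{0:N-2}$, appended with $\hat u_i(N-1) = k_i^{\mathrm{f}}(\hat x_i(N-1),\hat r_{T,i}(N-1))$, where $\hat x_i$ is the trajectory generated by $\hat u_i$ from $x_i(t+1)$; by construction $\hat x_i(k) = x_i^0(k+1\vert\xi(t))$ for $k\in\mathbb{I}_{0:N-1}$.

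Most constraint checks are then routine. The individual constraints $(\hat x_i(k),\hat u_i(k))\in Z_i$ hold for $k\in\mathbb{I}_{0:N-2}$ by feasibility at time $t$, and for $k=N-1$ because $\hat x_i(N-1) = x_i^0(N\vert\xi(t))\in\mathcal{X}_i^{\mathrm{f}}(r_{T,i}^0(N\vert\xi(t))) = \mathcal{X}_i^{\mathrm{f}}(\hat r_{T,i}(N-1))$, so Assumption~\ref{assm:terminal_ingredients}a) applies; the same assumption gives the terminal constraint~\eqref{eq:central_OP_terminal_constraint}, $\hat x_i(N)\in\mathcal{X}_i^{\mathrm{f}}(\hat r_{T,i}(N))$, and hence $\hat u_i\in\mathbb{U}_i^N(x_i(t+1))$. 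Because all agents are shifted simultaneously, the coupling constraints~\eqref{eq:central_OP_coupling_constraints} for $k\in\mathbb{I}_{0:N-1}$ become $(\hat x_i(k),\hat x_{\mathcal{N}_i}(k)) = (x_i^0(k+1\vert\xi(t)),x_{\mathcal{N}_i}^0(k+1\vert\xi(t)))\in\mathcal{C}_i$, which holds by feasibility at time $t$.

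The one delicate point — and the step I expect to be the crux — is~\eqref{eq:central_OP_coupling_constraints} at the terminal index $k=N$. For $j\in\mathcal{N}_i\cup\{i\}$ we have $\hat x_j(N-1)\in\mathcal{X}_j^{\mathrm{f}}(\hat r_{T,j}(N-1))$ and $\hat x_j(N) = f_j(\hat x_j(N-1),k_j^{\mathrm{f}}(\hat x_j(N-1),\hat r_{T,j}(N-1)))$; moreover, since $(y_{T,i}^0,y_{T,\mathcal{N}_i}^0)\in\mathcal{Y}_{T,i}\subseteq\mathbb{Y}_{T,i}$, the corresponding cooperation state trajectories satisfy $(x_{T,i}^0(\tau),x_{T,\mathcal{N}_i}^0(\tau))\in\mathcal{C}_i\ominus\mathcal{B}_{\eta_i}$ for all $\tau$, hence so do their shifts $\hat r_{T,i}$. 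Assumption~\ref{assm:tightened_coupling_constraints}, applied at the period index matching $N-1$, then yields exactly $(\hat x_i(N),\hat x_{\mathcal{N}_i}(N))\in\mathcal{C}_i$ via the second inclusion in~\eqref{eq:terminal_coupling_constraints}. This closes the induction step; with feasibility at $t=0$ given by hypothesis, $\eqref{eq:central_OP}$ is feasible for all $t$. Closed-loop constraint satisfaction is immediate by reading off the $k=0$ entries of a feasible solution at each $t$: $(x_{i,\mu_i}(t),\mu_i(\xi(t))) = (x_i^0(0\vert\xi(t)),u_i^0(0\vert\xi(t)))\in Z_i$ and $(x_{i,\mu_i}(t),x_{\mathcal{N}_i,\mu_i}(t))\in\mathcal{C}_i$ from~\eqref{eq:central_OP_coupling_constraints} at $k=0$. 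Compared with single-agent MPC for tracking, where Assumption~\ref{assm:terminal_ingredients} alone closes the feasibility argument, the additional work here is precisely the terminal-step coupling constraint, for which Assumption~\ref{assm:tightened_coupling_constraints} is tailored — using that the admissible cooperation outputs were constructed (via $\mathbb{Y}_{T,i}$) to respect the $\eta_i$-tightened coupling and that the terminal sets are correspondingly small; a secondary technical point to spell out carefully is the shift-invariance of the admissible sets and the consistency of the shifted reference under Assumption~\ref{assm:unique_corresponding_equilibrium}.
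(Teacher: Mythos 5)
Your proposal is correct and follows essentially the same route as the paper's proof: a shifted candidate (cooperation outputs shifted by one step, inputs shifted and appended with the terminal controller $k_i^{\mathrm{f}}$), with Assumption~\ref{assm:terminal_ingredients} closing the individual and terminal constraints and Assumption~\ref{assm:tightened_coupling_constraints} covering the newly appended terminal step of the coupling constraint~\eqref{eq:central_OP_coupling_constraints}. You simply spell out in detail what the paper compresses into a reference to the standard MPC argument, and you correctly identify the terminal-index coupling constraint as the only non-standard step.
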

\begin{proof}
    At $t=0$,~\eqref{eq:central_OP} is feasible.
    Assume for $t\in\mathbb{N}$, that~\eqref{eq:central_OP} was feasible at $t-1$.
    Then, the shifted previously optimal cooperation output $y_T^0(\cdot+1\vert \xi(t-1))$ is a feasible candidate solution of~\eqref{eq:central_OP}.
    Due to Assumptions~\ref{assm:terminal_ingredients}, and~\ref{assm:tightened_coupling_constraints}, a corresponding feasible input sequence is given by shifting the previously optimal one and appending the terminal controller (as is standard in MPC, cf.~\cite{Gruene2017,Rawlings2020}), i.e. $\big(u_i^0(1\vert \xi(t-1)), \dots, u_i^0(N-1 \vert \xi(t-1)), k_i^{\mathrm{f}}\big(x_{i, u_i^0(\cdot\vert\xi(t-1))}(N, x_i(t-1)), r_T^0(N\vert \xi(t-1))\big)\big)$.
    Constraint satisfaction of the closed loop follows from the definition of $\mathbb{U}_i(x_i(t))$ and~\eqref{eq:central_OP_coupling_constraints} with $k=0$.
\end{proof}

The following lemma shows that if the agents are sufficiently close to a cooperation reference, then the stage cost upper bounds the tracking part of the objective function~\eqref{eq:central_OP_objective}.
This is later useful to bound the increase in the tracking part when the cooperation output is incrementally moved.
\begin{lem}\label{lem:tracking_cost_upper_bound}
    Let Assumptions~\ref{assm:unique_corresponding_equilibrium}--\ref{assm:tightened_coupling_constraints} hold.
    Consider an optimization problem similar to~\eqref{eq:central_OP}, but with $y_T$ (and corresponding $r_{T} = (x_T, u_T)$) fixed as a parameter.
    Then, there exists $\varepsilon > 0$ such that for any $x \in X$ with $\sum_{i=1}^{m} \ell_i'(x_i, r_{T,i}(0)) \le \varepsilon$,
    this optimization problem is feasible and its solution $\tilde{u}$ satisfies with $c_i^{\mathrm{f}}$ from Assumption~\ref{assm:terminal_ingredients}
    \begin{equation}\label{eq:tracking_cost_upper_bound}
        \sum_{i=1}^{m} J_i^{\mathrm{tr}}(x_i, \tilde{u}_i, r_{T,i}) \le \sum_{i=1}^{m} c_i^{\mathrm{f}}\ell_i'(x_i, r_{T,i}(0)).
    \end{equation}
\end{lem}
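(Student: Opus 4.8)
The plan is to reduce the modified problem to the minimization of $\sum_{i=1}^m J_i^{\mathrm{tr}}(x_i, u_i, r_{T,i})$ over the feasible inputs (once $y_T$ is fixed, the terms $\lambda(N)(V_i^{\Delta}+W_i^{\mathrm{c}})$ in~\eqref{eq:central_OP_objective} are constants), to construct for each agent an admissible candidate input whose tracking cost is at most $c_i^{\mathrm{f}}\ell_i'(x_i, r_{T,i}(0))$, to verify that for $\varepsilon$ small enough the stacked candidate also satisfies the coupling constraints~\eqref{eq:central_OP_coupling_constraints}, and finally to conclude by optimality of $\tilde{u}$. Throughout I use that admissible stage costs are nonnegative, which follows from $\ell_i(x_i,u_i,r_{T,i}(\tau))\ge\ell_i'(x_i,r_{T,i}(\tau))\ge\alpha^{\ell_i}_{\mathrm{lb}}(\vert x_i\vert_{x_{T,i}(\tau)})\ge 0$ by Assumption~\ref{assm:stage_cost_lower_and_upper_bound}, and that $\sum_j\ell_j'(x_j,r_{T,j}(0))\le\varepsilon$ therefore forces $\ell_i'(x_i,r_{T,i}(0))\le\varepsilon$ for every $i$.

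For the candidate I distinguish the two situations in Assumption~\ref{assm:terminal_ingredients}. If $c_i^{\mathrm{b}}>0$, I pick $\varepsilon$ small enough that $\ell_i'(x_i,r_{T,i}(0))\le\varepsilon$ forces $\vert x_i\vert_{x_{T,i}(0)}\le c_i^{\mathrm{b}}$ via Assumption~\ref{assm:stage_cost_lower_and_upper_bound}, hence $x_i\in\mathcal{X}_i^{\mathrm{f}}(r_{T,i}(0))$ by~\eqref{eq:terminal_non_empty_interior}; the candidate $u_i^{\mathrm{f}}$ is then the closed loop of the terminal controller, $u_i^{\mathrm{f}}(k)=k_i^{\mathrm{f}}(x_{i,u_i^{\mathrm{f}}}(k),r_{T,i}(k))$. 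By induction with Assumption~\ref{assm:terminal_ingredients}a), the predicted states stay in $\mathcal{X}_i^{\mathrm{f}}(r_{T,i}(k))$ and satisfy $(x_{i,u_i^{\mathrm{f}}}(k),u_i^{\mathrm{f}}(k))\in Z_i$, so the individual and terminal constraints~\eqref{eq:central_OP_terminal_constraint} hold; telescoping~\eqref{eq:terminal_cost_decrease} over $k=0,\dots,N-1$ and using~\eqref{eq:terminal_cost_upper_bound} gives $J_i^{\mathrm{tr}}(x_i,u_i^{\mathrm{f}},r_{T,i})\le V_i^{\mathrm{f}}(x_i,r_{T,i}(0))\le c_i^{\mathrm{f}}\ell_i'(x_i,r_{T,i}(0))$. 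If $c_i^{\mathrm{b}}=0$ (terminal equality constraints), I additionally require $\varepsilon\le\varepsilon_i^{\mathrm{f}}$ and $N\ge N_i^{\mathrm{f}}$ and take $u_i^{\mathrm{f}}$ to be the admissible input of Assumption~\ref{assm:terminal_ingredients}c) on the first $N_i^{\mathrm{f}}$ steps, which steers $x_i$ to $x_{T,i}(N_i^{\mathrm{f}})$ with stage-cost sum at most $c_i^{\mathrm{f}}\ell_i'(x_i,r_{T,i}(0))$, followed by the terminal controller $k_i^{\mathrm{f}}(\cdot,r_{T,i}(\cdot))$ on the remaining steps; the latter keeps the state on the periodic reference $x_{T,i}(\cdot)$ (the successor of $x_{T,i}(\tau)$ under $k_i^{\mathrm{f}}$ must lie in the singleton $\mathcal{X}_i^{\mathrm{f}}(r_{T,i}(\tau+1))$) and incurs zero stage cost (by~\eqref{eq:terminal_cost_decrease} evaluated along the reference, $V_i^{\mathrm{f}}\equiv 0$ there, and $\ell_i\ge 0$), and $V_i^{\mathrm{f}}(x_{T,i}(N),r_{T,i}(N))=0$, so again $J_i^{\mathrm{tr}}(x_i,u_i^{\mathrm{f}},r_{T,i})\le c_i^{\mathrm{f}}\ell_i'(x_i,r_{T,i}(0))$.

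It remains to check the coupling constraints for the stacked candidate $u^{\mathrm{f}}=\col_{i=1}^m u_i^{\mathrm{f}}$. In both cases each predicted stage cost over $k\in\mathbb{I}_{0:N-1}$ is bounded by $c_i^{\mathrm{f}}\ell_i'(x_i,r_{T,i}(0))$ — in the first case by telescoping~\eqref{eq:terminal_cost_decrease} one step further, in the second from Assumption~\ref{assm:terminal_ingredients}c) on the reach phase and zero stage cost afterwards — so Assumption~\ref{assm:stage_cost_lower_and_upper_bound} yields $\vert x_{j,u_j^{\mathrm{f}}}(k)\vert_{x_{T,j}(k)}\le(\alpha^{\ell_j}_{\mathrm{lb}})^{-1}(c_j^{\mathrm{f}}\varepsilon)$ for all $j$ and all $k\in\mathbb{I}_{0:N-1}$, which tends to $0$ as $\varepsilon\to 0$. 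Since $y_T\in\mathcal{Y}_T$, the definition of $\mathbb{Y}_{T,i}$ together with Assumption~\ref{assm:unique_corresponding_equilibrium} gives $(x_{T,i}(k),x_{T,\mathcal{N}_i}(k))\in\mathcal{C}_i\ominus\mathcal{B}_{\eta_i}$ for all $k$, so choosing $\varepsilon$ small enough that the stacked deviation over $\mathcal{N}_i\cup\{i\}$ has norm at most $\eta_i$ ensures, by the definition of $\ominus$, that $(x_{i,u_i^{\mathrm{f}}}(k),x_{\mathcal{N}_i,u_{\mathcal{N}_i}^{\mathrm{f}}}(k))\in\mathcal{C}_i$ for $k\in\mathbb{I}_{0:N-1}$; at $k=N$ all candidate states lie in their terminal sets, so the same inclusion follows from Assumption~\ref{assm:tightened_coupling_constraints}. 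Taking $\varepsilon$ as the minimum of these finitely many thresholds, $u^{\mathrm{f}}$ is feasible; feasibility and attainment of the minimum then hold by standard compactness arguments, and optimality of $\tilde{u}$ gives $\sum_i J_i^{\mathrm{tr}}(x_i,\tilde{u}_i,r_{T,i})\le\sum_i J_i^{\mathrm{tr}}(x_i,u_i^{\mathrm{f}},r_{T,i})\le\sum_i c_i^{\mathrm{f}}\ell_i'(x_i,r_{T,i}(0))$, which is~\eqref{eq:tracking_cost_upper_bound}.

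The main obstacle, compared with the single-agent MPC-for-tracking case, is the coupling constraints: the terminal-ingredient candidate is only guaranteed admissible with respect to the individual and terminal constraints, so one must additionally exploit that for small $\varepsilon$ the candidate stays uniformly close to the periodic reference and that the sets $\mathcal{Y}_{T,i}$ were designed with the $\eta_i$-margin in $\mathcal{C}_i\ominus\mathcal{B}_{\eta_i}$ precisely so that this closeness (resp.\ Assumption~\ref{assm:tightened_coupling_constraints} at the terminal step) suffices to recover~\eqref{eq:central_OP_coupling_constraints}. A secondary point is the terminal-equality sub-case, which requires the horizon bound $N\ge N_i^{\mathrm{f}}$ that is standard in that setting.
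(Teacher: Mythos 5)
Your proof is correct and follows essentially the same route as the paper: for $c_i^{\mathrm{b}}>0$ pick $\varepsilon$ so that $x_i\in\mathcal{X}_i^{\mathrm{f}}(r_{T,i}(0))$, use the closed loop of the terminal controller as candidate, telescope \eqref{eq:terminal_cost_decrease} and apply \eqref{eq:terminal_cost_upper_bound}, and invoke Assumption~\ref{assm:terminal_ingredients}c) for the terminal-equality case. The only difference is that you spell out the coupling-constraint verification (via the $\mathcal{C}_i\ominus\mathcal{B}_{\eta_i}$ margin and uniform closeness to the reference) where the paper simply appeals to Assumptions~\ref{assm:terminal_ingredients} and~\ref{assm:tightened_coupling_constraints}; this is a welcome elaboration, not a different argument.
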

\begin{proof}
    From Assumptions~\ref{assm:stage_cost_lower_and_upper_bound} and \ref{assm:terminal_ingredients}, if $c_i^{\mathrm{b}} > 0$ for all $i\in\mathbb{I}_{1:m}$, then choosing $\varepsilon>0$ such that $(\alpha_{\textrm{lb}}^{\ell_i})^{-1}(\varepsilon) \le c_i^{\mathrm{b}}$ for all $i\in\mathbb{I}_{1:m}$ implies $x_i \in \mathcal{X}_i^{\mathrm{f}}(r_{T,i}(0))$.
    The following steps are standard, cf.~\cite[Prop. 2.35]{Rawlings2020}.
    Within the terminal set, due to Assumptions~\ref{assm:terminal_ingredients} and~\ref{assm:tightened_coupling_constraints}, the terminal control law $k_i^{\mathrm{f}}$ generates a feasible input sequence in the above considered optimization problem.
    Thus, from~\eqref{eq:terminal_cost_decrease}, the terminal costs provide an upper bound on $J_i^{\mathrm{tr}}(x_i, u_i, r_{T,i})$.
    Finally, applying~\eqref{eq:terminal_cost_upper_bound} yields the claimed bound.
    Otherwise, if $c_i^{\mathrm{b}} = 0$ for some $i\in\mathbb{I}_{1:m}$, i.e. terminal equality constraints are used, then the claim follows from~\cite[Prop. 4]{jkoehler_nonlinear_dynamic_2020} and the arguments above for all other agents with $c_i^{\mathrm{b}} > 0$.
\end{proof}

The following theorem states that the stage costs and the cost of changing the cooperation outputs upper bound the cooperation outputs' distance to $\mathcal{Y}_T^W$, i.e. achieving the cooperative task as well as possible.
Hence, one can see the cooperation output in each time step as an intermediate goal towards achieving the cooperative task.
A similar result has been shown in~\cite{Soloperto2022} for tracking of externally given references without terminal constraints and in~\cite{MatthiasKohler2023_TransientPerformanceMPC} with terminal constraints, both without a dependence on $y_{T}^{\mathrm{pr}}$.
\begin{thm}\label{thm:stage_cost_upper_bounds_cooperation_distance}
    Let Assumptions~\ref{assm:compact_cooperation_sets}--\ref{assm:penalty_function} hold with the same $\omega > 1$.
    For any $N\in\mathbb{N}_0$ there exists $\eta_\ell \in \mathcal{K}$ such that for any $x\in\mathcal{X}_N$ and $y_{T}^{\mathrm{pr}} \in \mathcal{Y}_{T}$ the inequality
    \begin{align}\label{eq:stage_cost_upper_bounds_cooperation_distance}
        &\sum_{i=1}^m \ell_i(x_i, \mu_i(\xi), r_{T,i}^0(0 \vert \xi)) + \lambda(N)V_i^{\Delta}(y_{T,i}^0(\cdot \vert \xi), y_{T,i}^{\mathrm{pr}}) \notag\\
        &\ge \eta_\ell(\vert y_T^0(\cdot \vert \xi) \vert_{\mathcal{Y}_T^W})
    \end{align}
    holds. If $\xi = \xi(0)$, then~\eqref{eq:stage_cost_upper_bounds_cooperation_distance} holds also with $V_i^{\Delta} = 0$.
\end{thm}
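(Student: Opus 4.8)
The plan is to argue by contradiction via a compactness/continuity argument. Suppose the claimed $\eta_\ell \in \mathcal{K}$ does not exist. Then there is a sequence of initial data $(x^{(n)}, y_T^{\mathrm{pr},(n)}) \in \mathcal{X}_N \times \mathcal{Y}_T$ such that the left-hand side of~\eqref{eq:stage_cost_upper_bounds_cooperation_distance} tends to $0$ while $\vert y_T^0(\cdot \vert \xi^{(n)}) \vert_{\mathcal{Y}_T^W}$ stays bounded away from $0$. Since $\mathcal{X}_N$ is (contained in) the compact set $X$, the sets $\mathcal{Y}_T$ and $\mathcal{Z}_{T,i}$ are compact by Assumption~\ref{assm:compact_cooperation_sets}, the admissible input sequences live in the compact $U_i^N$, and all data functions are continuous, we may pass to a convergent subsequence so that $x^{(n)} \to x^\star$, $y_T^{\mathrm{pr},(n)} \to y_T^{\mathrm{pr},\star}$, and the optimizers converge, say $u^0(\cdot\vert\xi^{(n)}) \to u^\star$, $y_T^0(\cdot\vert\xi^{(n)}) \to y_T^\star$, $r_T^0(\cdot\vert\xi^{(n)}) \to r_T^\star$ (the link $r_{T,i} \leftrightarrow y_{T,i}$ is continuous by Assumption~\ref{assm:unique_corresponding_equilibrium}). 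In the limit, the first term forces $\ell_i(x_i^\star, \mu_i^\star, r_{T,i}^\star(0)) = 0$ for each $i$, hence by Assumption~\ref{assm:stage_cost_lower_and_upper_bound} (applied through $\ell_i'$) we get $x_i^\star = x_{T,i}^\star(0)$, i.e. the closed-loop state already sits on the cooperation reference; and the second term forces $\sum_i V_i^\Delta(y_{T,i}^\star, y_{T,i}^{\mathrm{pr},\star}) = 0$, hence $y_T^\star = y_T^{\mathrm{pr},\star}$ by~\eqref{eq:penalty_cooperation_change_distance} in Assumption~\ref{assm:penalty_function}.

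The crux is then to show that such a limit point cannot have $y_T^\star \notin \mathcal{Y}_T^W$, which is where Assumption~\ref{assm:better_cooperation_candidate} enters. Since $x^\star$ already lies on the reference $r_T^\star$ corresponding to $y_T^\star$, Lemma~\ref{lem:tracking_cost_upper_bound} applies (the tracking cost at $x^\star$ with $y_T^\star$ fixed is $0$, certainly below $\varepsilon$), so the tracking part of the objective can be made small while keeping $y_T = y_T^\star$. Now invoke Assumption~\ref{assm:better_cooperation_candidate}: if $y_T^\star \notin \mathcal{Y}_T^W$ then $\psi(y_T^\star) > 0$, and for small $\theta > 0$ there is $\hat{y}_T \in \mathcal{Y}_T$ with $\vert \hat{y}_T \vert_{y_T^\star} \le \theta c_\psi \psi(y_T^\star)$ and $W^{\mathrm{c}}(\hat{y}_T) - W^{\mathrm{c}}(y_T^\star) \le -\theta\psi(y_T^\star)^\omega$. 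Building a candidate for~\eqref{eq:central_OP} at $\xi^\star$ that uses $\hat{y}_T$ (with the corresponding reference $\hat{r}_T$, close to $r_T^\star$ by Remark~\ref{rem:unique_corresponding_equilibrium}) and the tracking input from Lemma~\ref{lem:tracking_cost_upper_bound}, the change in objective is bounded by: an increase in the tracking cost of order $\sum_i c_i^{\mathrm{f}} \ell_i'(x_i^\star, \hat r_{T,i}(0))$, which via Assumption~\ref{assm:stage_cost_comparison} (with $\ell_i(x_i^\star,\cdot,r_{T,i}^\star(0)) = 0$) is $O(\vert\hat{r}_T\vert_{r_T^\star}^\omega) = O(\theta^\omega)$; plus an increase of $\lambda(N)$ times the change in $\sum_i V_i^\Delta(\cdot, y_{T,i}^{\mathrm{pr},\star})$, which by~\eqref{eq:penalty_cooperation_change_decrease} (using $y_T^\star = y_T^{\mathrm{pr},\star}$, so the "$2V_i^\Delta$" term is $0$) is $O(\vert\hat{y}_T\vert_{y_T^\star}^\omega) = O(\theta^\omega)$; minus $\lambda(N)\theta\psi(y_T^\star)^\omega$ from~\eqref{eq:better_cooperation_candidate_b}. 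For $\theta$ sufficiently small the linear-in-$\theta$ decrease dominates the $O(\theta^\omega)$ increases (here $\omega > 1$ is essential), contradicting optimality of $y_T^\star$ at $\xi^\star$. Hence $y_T^\star \in \mathcal{Y}_T^W$, so $\vert y_T^\star \vert_{\mathcal{Y}_T^W} = 0$, contradicting $\vert y_T^0(\cdot\vert\xi^{(n)})\vert_{\mathcal{Y}_T^W}$ bounded away from $0$. This yields a lower bound that is a positive, nondecreasing function of $\vert y_T^0\vert_{\mathcal{Y}_T^W}$, from which a $\mathcal{K}$-function $\eta_\ell$ can be extracted in the standard way. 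For the case $\xi = \xi(0)$, the term $V_i^\Delta$ is simply absent from the objective, so the argument goes through verbatim with $V_i^\Delta \equiv 0$ (and one does not even need $y_T^\star = y_T^{\mathrm{pr},\star}$).

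The main obstacle I anticipate is making the competing-candidate construction fully rigorous at the limit point while respecting all constraints of~\eqref{eq:central_OP} simultaneously — in particular the coupling constraints~\eqref{eq:central_OP_coupling_constraints} and the terminal constraint~\eqref{eq:central_OP_terminal_constraint} for the perturbed reference $\hat r_T$. This is precisely what the tightening by $\eta_i$ in the definition of $\mathbb{Y}_{T,i}$ and Assumption~\ref{assm:tightened_coupling_constraints} are there to handle: for $\theta$ small enough, $\hat r_T$ stays in $\mathcal{Z}_{T,i}$ and its associated terminal tube still satisfies the (untightened) coupling constraints, so Lemma~\ref{lem:tracking_cost_upper_bound}'s feasibility conclusion transfers. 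A secondary subtlety is that $\mu_i(\xi)$ and the optimizer need not be continuous in $\xi$, which is why the argument is run through subsequential limits of the optimizers rather than by claiming continuity of the value or the feedback; one only uses lower semicontinuity of the objective and closedness of the feasible set, both of which follow from continuity of $f_i, h_i, \ell_i, V_i^{\mathrm{f}}, W_i^{\mathrm{c}}, V_i^\Delta$ and compactness of the relevant sets.
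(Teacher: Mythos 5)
Your proposal uses the same core mechanism as the paper's proof: take the improved cooperation output $\hat{y}_T$ from Assumption~\ref{assm:better_cooperation_candidate}, use Lemma~\ref{lem:tracking_cost_upper_bound} to obtain a feasible candidate whose tracking cost is $O(\theta^{\omega})$ (via Remark~\ref{rem:unique_corresponding_equilibrium} and Assumption~\ref{assm:stage_cost_comparison}), bound the increase of $\sum_i V_i^{\Delta}$ by $O(\theta^{\omega})$ via~\eqref{eq:penalty_cooperation_change_decrease}, and let the $-\lambda(N)\theta\psi^{\omega}$ decrease of $W^{\mathrm{c}}$ dominate because $\omega>1$. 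The packaging differs: the paper runs a direct quantitative contradiction at an arbitrary violating point (which is why Assumption~\ref{assm:penalty_function} carries the peculiar $-2V_i^{\Delta}$ term --- the contradiction hypothesis~\eqref{eq:stage_cost_upper_bounds_cooperation_distance_contradiction} is substituted into the tracking bound and supplies the second copy of $-\bar V_i^{\Delta}$), whereas you pass to a subsequential limit where $V_i^{\Delta}(y_T^{\star},y_T^{\mathrm{pr},\star})=0$ and the factor of two is free. Your route is arguably cleaner conceptually, but it is nonconstructive: the paper's explicit choice $\eta_{\ell}=\delta\tfrac{c_{\theta}^{W}}{2}\tilde\eta_{\ell}$ is reused almost verbatim in the proof of Theorem~\ref{thm:exponential_stability} to show that $\eta_{\ell}$ can be taken quadratic, and an $\eta_{\ell}$ extracted abstractly from an infimum over sublevel sets would not support that refinement.

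There is one step that does not hold as written: you close the contradiction by saying the cheaper candidate at $\xi^{\star}$ ``contradicts optimality of $y_T^{\star}$ at $\xi^{\star}$.'' But $(u^{\star},y_T^{\star})$ is only a limit of optimizers of the problems at $\xi^{(n)}$; for it to be optimal at $\xi^{\star}$ you need $\mathcal{J}(\xi^{(n)})\to\mathcal{J}(\xi^{\star})$, i.e.\ upper semicontinuity of the value function, which requires inner semicontinuity of the feasible-set map (not just the closedness and lower semicontinuity you invoke --- those give the wrong inequality, $\mathcal{J}(\xi^{\star})\le\liminf\mathcal{J}(\xi^{(n)})$). The repair is standard and stays inside your framework: build the candidate $(\hat u,\hat y_T)$ from $\hat y_T$ (the Assumption~\ref{assm:better_cooperation_candidate} perturbation of $y_T^{\star}$) at $\xi^{(n)}$ for large $n$; Lemma~\ref{lem:tracking_cost_upper_bound} still applies because $\sum_i\ell_i'(x_i^{(n)},\hat r_{T,i}(0))$ converges to an $O(\theta^{\omega})$ quantity, the $-2V_i^{\Delta}$ form of~\eqref{eq:penalty_cooperation_change_decrease} absorbs the vanishing $V_i^{\Delta}(y_T^{0,(n)},y_T^{\mathrm{pr},(n)})$ terms, and the resulting cost difference is negative for $n$ large, contradicting optimality at $\xi^{(n)}$ directly. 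With that adjustment (and the caveat about constructiveness above) your argument goes through.
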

\begin{proof}
    We prove this by contradiction.
    Let $N\in\mathbb{N}_0$-
    Abbreviate $y_T^{0} = y_T^{0}(\cdot \vert \xi)$ and $\bar{V}_{i}^{\Delta}=\lambda(N)V_i^{\Delta}$. 
    Suppose for all $\eta_\ell \in \mathcal{K}$ there exist $x\in\mathcal{X}_N$ and $y_{T}^{\mathrm{pr}}\in \mathcal{Y}_{T}$ such that
    \begin{align}\label{eq:stage_cost_upper_bounds_cooperation_distance_contradiction}
        &\smash[b]{\sum_{i=1}^m} \ell_i(x_i, \mu_i(\xi), r_{T,i}^0(0 \vert \xi)) + \bar{V}_{i}^{\Delta}(y_{T,i}^0, y_{T,i}^{\mathrm{pr}}) < \eta_\ell(\vert y_T^0 \vert_{\mathcal{Y}_T^W}).
    \end{align}
    Consider $\hat{y}_T = \hat{y}_T(\cdot\vert\xi)$ from Assumption~\ref{assm:better_cooperation_candidate}, based on $y_T^{0}$.
    Since $\mathcal{Y}_T$ is compact and $\psi$ continuous, there exists $\gamma_W = \sup_{y_T \in \mathcal{Y}_T} \vert y_T \vert_{\mathcal{\mathcal{Y}}_T^W}$, and $\gamma_{\psi} = \sup_{y_T \in \mathcal{Y}_T} \psi(y_T)$.
    Define $L_i = \max(L_{x,i}, L_{u,i})$, and $c^{\ell} = \max_i(c_1^{\ell_i}, c_2^{\ell_i}L_i^{\omega})$, then
    \begin{align}
        &\sum_{i=1}^{m} \ell'(x_i, \hat{r}_{T,i}(0\vert\xi)) 
        \le \sum_{i=1}^{m} \ell(x_i, \mu_i(\xi), \hat{r}_{T,i}(0\vert\xi))
        \notag
    \\
        &\stackrel{\text{Rem.}\,\ref{rem:unique_corresponding_equilibrium},\,\eqref{eq:stage_cost_comparison}}{\le} \sum_{i=1}^{m} c_1^{\ell_i} \ell_i(x_i, \mu_i(\xi), r_{T,i}^0(0 \vert \xi)) + c_2^{\ell_i}L_i^{\omega} \vert \hat{y}_{T,i} \vert_{y_{T,i}^0}^{\omega}
        \notag
    \\
        &\le c^{\ell} \big( \vert \hat{y}_{T} \vert_{y_{T}^0}^{\omega} + \smash{\sum_{i=1}^{m}}  \ell_i(x_i, \mu_i(\xi), r_{T,i}^0(0 \vert \xi)) \big)
        \notag
    \\
        &\stackrel{\mathclap{\eqref{eq:stage_cost_upper_bounds_cooperation_distance_contradiction},\eqref{eq:better_cooperation_candidate_a}}}{<}
        \; c^{\ell} \big( \eta_{\ell}(\vert y_T^0 \vert_{\mathcal{Y}_T^W}) - \smash[b]{\sum_{i=1}^{m}}\bar{V}_i^{\Delta}(y_{T,i}^0, y_{T,i}^{\mathrm{pr}}) + \theta^{\omega} c_{\psi}^{\omega} \psi(y_T^0)^{\omega} \big)
        \label{eq:stage_cost_upper_bounds_cooperation_distance_intermed_a}
    \\
        &\le c^{\ell} \big( \eta_{\ell}(\gamma_W) + \theta^{\omega} c_{\psi}^{\omega}\gamma_{\psi}^{\omega} \big)
        \le \varepsilon,
        \notag
    \end{align}
    with $\varepsilon$ from Lemma~\ref{lem:tracking_cost_upper_bound}, and where the last inequality follows if $\theta^{\omega} \le \frac{\varepsilon}{2c^{\ell}c_{\psi}^{\omega}\gamma_{\psi}^{\omega}}$, and $\eta_{\ell}$ such that $\eta_{\ell}(\gamma_W) \le \frac{\varepsilon}{2c^{\ell}}$.
    Hence, Lemma~\ref{lem:tracking_cost_upper_bound} implies the existence of a feasible candidate $(\hat{u}, \hat{y}_T)$, and defining $c^{\mathrm{f}} = \max_i(c_i^{\mathrm{f}})$, we have from Lemma~\ref{lem:tracking_cost_upper_bound}
    \begin{align}
        &\sum_{i=1}^{m} J_i^{\mathrm{tr}}(x_i, \hat{u}_i, \hat{r}_{T,i}(\cdot\vert\xi)) \stackrel{\eqref{eq:tracking_cost_upper_bound}}{\le} \sum_{i=1}^{m} c_i^{\mathrm{f}} \ell_i'(x_i, \hat{r}_{T,i}(0\vert\xi))
        \notag
        \\
        &\stackrel{\eqref{eq:stage_cost_upper_bounds_cooperation_distance_intermed_a}}{<} 
        c^{\mathrm{f}} c^{\ell} \big( \eta_{\ell}(\vert y_T^0 \vert_{\mathcal{Y}_T^W}) - \smash[b]{\sum_{i=1}^{m}} \bar{V}_i^{\Delta}(y_{T,i}^0, y_{T,i}^{\mathrm{pr}}) + \theta^{\omega} c_{\psi}^{\omega} \psi(y_T^0)^{\omega} \big).
        \label{eq:tracking_cost_upper_bound_proof_aux}
    \end{align}
    Now, compare the cost of $(\hat{u}, \hat{y}_T)$ with that of $(u^0, y_T^0) = (u^0(\cdot\vert \xi), y_T^0(\cdot\vert \xi))$. 
    Define $c^{\mathrm{f}\ell} = \max(c^{\mathrm{f}}c^{\ell}, 1)$. 
    Since $J_i^{\mathrm{tr}}$ is non-negative, and with $c_N^{\Delta} = \lambda(N)c^{\Delta}$ as well as $\theta_N = \lambda(N)\theta$,
    \begin{align}\label{eq:stage_cost_upper_bounds_cooperation_distance_before_combination}
        &{\sum_{i=1}^{m}} J_i(x_i, \hat{u}_i, \hat{y}_{T,i}, y_{T,i}^{\mathrm{pr}}, \hat{y}_{T, \mathcal{N}_i}) - J_i(x_i, u_i^0, y_{T,i}^0, y_{T,i}^{\mathrm{pr}}, y_{T,\mathcal{N}_i}^0) 
        \notag\\
        &\le \smash[t]{\sum_{i=1}^{m}} J_i^{\mathrm{tr}}(x_i, \hat{u}_i, \hat{r}_{T,i}(\cdot\vert\xi)) \hspace{-0.5pt} + \hspace{-0.5pt} \bar{V}_i^{\Delta}(\hat{y}_{T,i}, y_{T,i}^{\mathrm{pr}}) \hspace{-0.5pt} - \hspace{-0.5pt} \bar{V}_i^{\Delta}(y_{T,i}^0, y_{T,i}^{\mathrm{pr}})
        \notag\\
        &\phantom{\le} + \lambda(N) \big(W^{\mathrm{c}}(\hat{y}_T) - W^{\mathrm{c}}(y_T^0)\big)
        \notag\\
        &\stackrel{\eqref{eq:tracking_cost_upper_bound_proof_aux}}{<} c^{\mathrm{f}\ell} \big( \eta_{\ell}(\vert y_T^0 \vert_{\mathcal{Y}_T^W}) - \smash[t]{\sum_{i=1}^{m}} \bar{V}_i^{\Delta}(y_{T,i}^0, y_{T,i}^{\mathrm{pr}}) + \theta^{\omega} c_{\psi}^{\omega} \psi(y_T^0)^{\omega} \big) \notag\\
        &\phantom{\stackrel{\eqref{eq:tracking_cost_upper_bound_proof_aux}}{<}} + c^{\mathrm{f}\ell}\big( \smash{\sum_{i=1}^{m}} \bar{V}_i^{\Delta}(\hat{y}_{T,i}, y_{T,i}^{\mathrm{pr}}) - \bar{V}_i^{\Delta}(y_{T,i}^0, y_{T,i}^{\mathrm{pr}}) \big) \notag\\
        &\phantom{\stackrel{\eqref{eq:tracking_cost_upper_bound_proof_aux}}{<}} + \lambda(N) \big(W^{\mathrm{c}}(\hat{y}_T) - W^{\mathrm{c}}(y_T^0)\big)
        \notag\\
        &\stackrel{\eqref{eq:better_cooperation_candidate_b}}{\le} c^{\mathrm{f}\ell} \big( \eta_{\ell}(\vert y_T^0 \vert_{\mathcal{Y}_T^W}) + \theta^{\omega} c_{\psi}^{\omega} \psi(y_T^0)^{\omega} \big) - \theta_N \psi(y_T^0)^{\omega} \notag\\
        &\phantom{\stackrel{\eqref{eq:tracking_cost_upper_bound_proof_aux}}{<}} + c^{\mathrm{f}\ell}\big( \smash{\sum_{i=1}^{m}} \bar{V}_i^{\Delta}(\hat{y}_{T,i}, y_{T,i}^{\mathrm{pr}}) - 2\bar{V}_i^{\Delta}(y_{T,i}^0, y_{T,i}^{\mathrm{pr}}) \big)
        \notag\\
        &\stackrel{\eqref{eq:penalty_cooperation_change_decrease}}{\le} \hspace{-0.1em} c^{\mathrm{f}\ell} \big( \eta_{\ell}(\vert y_T^0 \vert_{\mathcal{Y}_T^W}) \hspace{-2pt} + \hspace{-2pt} c_N^{\Delta} \vert \hat{y}_T \vert_{y_T^0}^{\omega} \big) \hspace{-2pt} + \hspace{-2pt} (c^{\mathrm{f}\ell}c_{\psi}^{\omega}\theta^{\omega} \hspace{-2pt} - \hspace{-2pt} \theta_N) \psi(y_T^0)^{\omega}
        \notag\\
        &\stackrel{\eqref{eq:better_cooperation_candidate_a}}{\le} \hspace{-0.1em} c^{\mathrm{f}\ell} \eta_{\ell}(\vert y_T^0 \vert_{\mathcal{Y}_T^W})  + \big(c^{\mathrm{f}\ell} c_{\psi}^{\omega}(c_N^{\Delta} + 1) \theta^{\omega} - \theta_N \big)\psi(y_T^0)^{\omega}
        \notag\\
        &\le c^{\mathrm{f}\ell} \big(\eta_{\ell}(\vert y_T^0 \vert_{\mathcal{Y}_T^W}) - c_{\theta}^{W} \psi(y_T^0)^{\omega}\big)
    \end{align}
    with $c_{\theta}^{W} = \theta \big(\lambda(N) - c^{\mathrm{f}\ell} c_{\psi}^{\omega}(c_N^{\Delta}+1)\theta^{\omega-1} \big)$, which is positive if $\theta^{\omega-1} < \frac{\lambda(N)}{c^{\mathrm{f}\ell}c_{\psi}^{\omega}(c_{\change{N}}^{\Delta}+1)}$.
    Since $(\cdot)^{\omega} \circ \psi$ is continuous positive definite with respect to $\mathcal{Y}_T^W$ on $\mathcal{Y}_T$, and $\mathcal{Y}_T$ is compact, there exists $\tilde{\eta}_{\ell}\in\mathcal{K}$ (cf.~\cite{Kellett2014}) such that 
    $\psi(y_T)^{\omega} \ge \tilde{\eta}_{\ell}(\vert y_T \vert_{\mathcal{Y}_T^W})$.
    Finally, choosing $\eta_{\ell} = \delta\frac{c_{\theta}^{W}}{2} \tilde{\eta}_{\ell}$,
    with $\delta \in (0, 1]$ so that the earlier condition $\eta_{\ell}(\gamma_W) \le \frac{\epsilon}{2c^{\ell}}$ holds,
    yields 
    \begin{align*}
        &\smash[b]{\sum_{i=1}^{m}} J_i(x_i, \hat{u}_i, \hat{y}_{T,i}, y_{T,i}^{\mathrm{pr}}, \hat{y}_{T, \mathcal{N}_i}) - J_i(x_i, u_i^0, y_{T,i}^0, y_{T,i}^{\mathrm{pr}}, y_{T,\mathcal{N}_i}^0)
        \\
        &< - \frac{c^{\mathrm{f}\ell}c_{\theta}^{W}}{2} \tilde{\eta}_{\ell}(\vert y_T^0 \vert_{\mathcal{Y}_T^W}) \le 0.
    \end{align*}
    Hence, the objective function of $(\hat{u}, \hat{y}_T)$ is \change{less} than that of $(u^0(\cdot\vert \xi), y_T^0(\cdot\vert \xi))$, which is a contradiction.
    Furthermore, if $\xi = \xi(0)$, then~\eqref{eq:stage_cost_upper_bounds_cooperation_distance} follows from the same derivation with $V_i^{\Delta} = 0$ since $V_i^{\Delta}$ is omitted in~\eqref{eq:central_OP} in this case.
\end{proof}

In the following, we will use Lyapunov-based arguments to show stability of a set on which the cooperative task is achieved as well as possible for the closed-loop system~\eqref{eq:central_global_closed_loop}.
However, we cannot use a Lyapunov function that depends solely on the state $x$, since the closed-loop input is also influenced by $y_T^{\mathrm{pr}}$.
Hence, for the same state $x$, different values of $y_T^{\mathrm{pr}}$ in general result in different inputs.
This situation bears some resemblance to that encountered in suboptimal MPC, where the optimization problem is not solved to global optimality and may return different inputs for the same state (e.g. based on a different initial guess); see, e.g.~\cite{Allan2017}, \cite[Sec. 2.7]{Rawlings2020}.
We use a Lyapunov candidate based on $\xi = (x, y_T^{\mathrm{pr}})$.
Consider the function 
$V(\xi) = \mathcal{J}(\xi) - W^{\mathrm{c}}_N$
where $W^{\mathrm{c}}_N = \lambda(N)W^{\mathrm{c}}_0$ with
\begin{equation}\label{eq:def_best_reachable_cost}
    W^{\mathrm{c}}_0 = \smash{\min_{y_T\in\mathcal{Y}_{T}} W^{\mathrm{c}}(y_T)},
\end{equation}
i.e. $W^{\mathrm{c}}_0 = W^{\mathrm{c}}(y_T)$ for all $y_T\in \mathcal{Y}_T^W$.
Moreover, define $\xi_T(k) = [\xi(k)^\top, \dots, \xi(k+T-1)^\top]^\top$, i.e. $T$ instances of $\xi$ collected.
Then, our Lyapunov function candidate is 
\begin{equation*}
    V_T(\xi_T(t)) = \sum_{\tau=t}^{t+T-1} V(\xi(\tau)),
\end{equation*}
and we simply write $V_T(\xi_T)$ if no specific start time of the sequence is considered.

Define the set of feasible state and input trajectories yielding an output achieving the cooperative task as well as possible:
\begin{align*}
    \mathcal{Z}_T^W = \{ &r_T \in \textstyle{\prod}_{i=1}^{m} \mathcal{Z}_{T,i} \mid (x_{T,i}(\tau), x_{T,\mathcal{N}_i}(\tau))\in\mathcal{C}_i\ominus\mathcal{B}_{\eta_i}, \\
    &h(x_T(\tau), u_T(\tau)) \in \mathcal{Y}_T^W,\;\forall \tau \in \mathbb{I}_{0:T-1},\; \forall i \in \mathbb{I}_{1:m} \}
\end{align*}
with $\eta_i$ from Assumption~\ref{assm:tightened_coupling_constraints}.
Let $\mathcal{X}_T^W$ denote the projection of $\mathcal{Z}_T^W$ onto $\textstyle{\prod}_{i=1}^{m} X_i^{T}$, and define the set 
\begin{align}\label{eq:def_stable_set}
    &\Xi_T^W = \{x_T^W \in \mathcal{X}_T^W,\; y_{T}^{W} \in (\mathcal{Y}_T^W)^{T} \mid \notag\\
    &\hspace{2em}\exists u_T^W\in \textstyle{\prod}_{i=1}^{m} U_i^{T},\; (x_T^W, u_T^W) \in \mathcal{Z}_T^W,\\
    &\hspace{2em} h(x_T^W(\tau), u_T^W(\tau)) = y_{T,k+1}^{W}(\tau-k),\; \forall \tau, k \in \mathbb{I}_{0:T-1} \notag \}.
\end{align}
Note that $y_{T}^{W}$ is a stacked vector of $T$ periodic trajectories, each of length $T$, and $y_{T,k}^{W}$ denotes the $k$-th of these.
Definition~\eqref{eq:def_stable_set} is such that each of these periodic trajectories is shifted by one time step, i.e. $y_{T,k+1}^W(\tau) = y_{T, k}^W(\tau+1)$ for $k\in\mathbb{I}_{1:T}$ and $y_{T, 1}^W(\tau) = y_{T, T}^W(\tau+1)$.
Recall $y_{T}^{\mathrm{pr}}(\cdot \vert t+1) = y_T^0(\cdot+1\vert\xi(t))$ for $t\in\mathbb{N}_0$.
Then, our goal is to show asymptotic stability of $\Xi_T^W$ for the (extended) closed-loop system
\begin{align}\label{eq:extended_closed_loop}
    &\xi_T(t+1) =
    \begin{bmatrix}
        f(x(t), \mu(\xi(t))) \\
        y_{T}^{0}(\cdot+1 \vert \xi(t)) \vspace{-0.5em}\\
        \vdots \\
        f(x(t+T-1), \mu(\xi(t+T-1))) \\
        y_{T}^{0}(\cdot+1 \vert \xi(t+T-1)) \\
    \end{bmatrix}
\end{align}
with $x(0) = x_0$ and $y_{T}^{\mathrm{pr}}(\cdot \vert 0)$ arbitrary.
That is, a state trajectory for which there exists an input trajectory such that they are feasible and generate a periodic output trajectory satisfying the cooperative task as well as possible.
Moreover, this output is equal to the $T$ shifted previous cooperation outputs, thus, the closed-loop state generates this output.

First, we show an upper bound on $V_T(\xi_T)$.
\begin{lem}\label{lem:Lyapunov_upper_bound}
    Let Assumptions~\ref{assm:compact_cooperation_sets}--\ref{assm:tightened_coupling_constraints}, and \ref{assm:penalty_function} hold.
    Then, there exists $\alpha_{\mathrm{ub}} \in \mathcal{K}_\infty$ such that $V_T(\xi_T) \le \alpha_{\mathrm{ub}}(\vert \xi_T \vert_{\Xi_T^W})$ for all $\xi_T$ whose first component $x$ satisfies $x\in\mathcal{X}_N$.
\end{lem}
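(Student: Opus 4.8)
The plan is to combine a local upper bound valid in a neighbourhood of $\Xi_T^W$ with a uniform bound on the compact feasible domain, and to glue the two into a single $\mathcal{K}_\infty$ function in the standard way. Throughout I only consider $\xi_T$ with all state components $x(t),\dots,x(t+T-1)$ in $\mathcal{X}_N$ (equivalently $V_T(\xi_T)<\infty$); this is the relevant case since it holds along closed-loop trajectories by Theorem~\ref{thm:recursive_feasibility}, and in the local regime below feasibility of all components is in fact obtained for free.

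\emph{Local bound.} Given $\xi_T$, pick a closest point $(x_T^W,y_T^W)\in\Xi_T^W$, write $\epsilon=\vert\xi_T\vert_{\Xi_T^W}$, and let $u_T^W$ be such that $(x_T^W,u_T^W)\in\mathcal{Z}_T^W$. For $k\in\mathbb{I}_{0:T-1}$ and $\tau=t+k$, take as candidate cooperation output for~\eqref{eq:central_OP} at $\xi(\tau)$ the block $\bar{y}_T^{(k)}:=y_{T,k+1}^W$ of $y_T^W$. Then $\bar{y}_T^{(k)}\in\mathcal{Y}_T^W$, so $W^{\mathrm{c}}(\bar{y}_T^{(k)})=W^{\mathrm{c}}_0=W^{\mathrm{c}}_N/\lambda(N)$; by the shift structure built into the definition of $\Xi_T^W$ and the uniqueness in Assumption~\ref{assm:unique_corresponding_equilibrium}, the corresponding reference $\bar{r}_T^{(k)}=(\bar{x}_T^{(k)},\bar{u}_T^{(k)})$ satisfies $\bar{x}_T^{(k)}(0)=x_T^W(k)$ and, since $(x_T^W,u_T^W)\in\mathcal{Z}_T^W$ and the tightened coupling property is shift-invariant, $(\bar{x}_{T,i}^{(k)}(\sigma),\bar{x}_{T,\mathcal{N}_i}^{(k)}(\sigma))\in\mathcal{C}_i\ominus\mathcal{B}_{\eta_i}$ for all $\sigma$. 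Up to equivalence of the component metrics there is a constant $\kappa$ with $\vert x(\tau)\vert_{\bar{x}_T^{(k)}(0)}\le\kappa\epsilon$ and $\vert\bar{y}_T^{(k)}\vert_{y_T^{\mathrm{pr}}(\cdot\vert\tau)}\le\kappa\epsilon$. Using the upper bound in Assumption~\ref{assm:stage_cost_lower_and_upper_bound}, $\sum_{i}\ell_i'(x_i(\tau),\bar{r}_{T,i}^{(k)}(0))\le\sum_i\alpha_{\mathrm{ub}}^{\ell_i}(\kappa\epsilon)$, which is $\le\varepsilon$ (with $\varepsilon$ from Lemma~\ref{lem:tracking_cost_upper_bound}) once $\epsilon\le\rho$ for a small enough $\rho$. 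Lemma~\ref{lem:tracking_cost_upper_bound}, together with the tightened coupling property just noted and Assumption~\ref{assm:tightened_coupling_constraints}, then yields a candidate $(\tilde{u},\bar{y}_T^{(k)})$ feasible for~\eqref{eq:central_OP} at $\xi(\tau)$ (so in particular $x(\tau)\in\mathcal{X}_N$) with $\sum_i J_i^{\mathrm{tr}}(x_i(\tau),\tilde{u}_i,\bar{r}_{T,i}^{(k)})\le\sum_i c_i^{\mathrm{f}}\alpha_{\mathrm{ub}}^{\ell_i}(\kappa\epsilon)$. Evaluating the objective of~\eqref{eq:central_OP} at this candidate and using $W^{\mathrm{c}}(\bar{y}_T^{(k)})=W^{\mathrm{c}}_N/\lambda(N)$ together with~\eqref{eq:penalty_cooperation_change_distance} gives
\begin{equation*}
    V(\xi(\tau)) = \mathcal{J}(\xi(\tau)) - W^{\mathrm{c}}_N \le \sum_{i=1}^{m} c_i^{\mathrm{f}}\alpha_{\mathrm{ub}}^{\ell_i}(\kappa\epsilon) + \lambda(N)\alpha_{\mathrm{ub}}^{\Delta}(\kappa\epsilon).
\end{equation*}
Summing over $k\in\mathbb{I}_{0:T-1}$ yields $V_T(\xi_T)\le\bar{\alpha}(\epsilon)$ for $\epsilon\le\rho$, with $\bar{\alpha}(s)=T\big(\sum_{i=1}^{m}c_i^{\mathrm{f}}\alpha_{\mathrm{ub}}^{\ell_i}(\kappa s)+\lambda(N)\alpha_{\mathrm{ub}}^{\Delta}(\kappa s)\big)\in\mathcal{K}_\infty$.

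\emph{Uniform bound and combination.} Since $\mathcal{X}_N$ is compact and $\mathcal{Y}_T$ is compact, the feasible graph of~\eqref{eq:central_OP} over $x\in\mathcal{X}_N$ and $y_T^{\mathrm{pr}}\in\mathcal{Y}_T$ is compact and the objective is continuous on it, so $\mathcal{J}(\xi)\le\bar{J}$ for some constant $\bar{J}$ on this domain, hence $V_T(\xi_T)\le T(\bar{J}-W^{\mathrm{c}}_N)=:\bar{V}$ there; and $\vert\xi_T\vert_{\Xi_T^W}$ is bounded on the (bounded) domain since $\Xi_T^W$ is nonempty. Then $\alpha_{\mathrm{ub}}(s)=\bar{\alpha}(\min\{s,\rho\})+(\bar{V}/\rho)\,s$ is a $\mathcal{K}_\infty$ function for which $V_T(\xi_T)\le\bar{\alpha}(\vert\xi_T\vert_{\Xi_T^W})\le\alpha_{\mathrm{ub}}(\vert\xi_T\vert_{\Xi_T^W})$ when $\vert\xi_T\vert_{\Xi_T^W}\le\rho$, and $V_T(\xi_T)\le\bar{V}=(\bar{V}/\rho)\rho\le\alpha_{\mathrm{ub}}(\vert\xi_T\vert_{\Xi_T^W})$ otherwise, proving the claim. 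I expect the main obstacle to be the local step: one must read off from a closest point of $\Xi_T^W$, exploiting its shift structure, a candidate cooperation output that is simultaneously a minimiser of $W^{\mathrm{c}}$, close to the relevant previous cooperation output $y_T^{\mathrm{pr}}(\cdot\vert\tau)$, and has a reference that is both close to the relevant state component $x(\tau)$ and compatible with the tightened coupling constraints, so that Lemma~\ref{lem:tracking_cost_upper_bound} and Assumption~\ref{assm:tightened_coupling_constraints} can be invoked; the uniform bound and the gluing are routine.
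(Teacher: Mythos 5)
Your proposal is correct and follows essentially the same route as the paper's proof: both construct a candidate from a closest point of $\Xi_T^W$ (exploiting its shift structure), invoke Lemma~\ref{lem:tracking_cost_upper_bound} together with the cancellation $\lambda(N)W^{\mathrm{c}}(y_{T,\tau+1}^W)=W_N^{\mathrm{c}}$ and the bound~\eqref{eq:penalty_cooperation_change_distance}, sum over the $T$ instances, and then pass from the local bound to a global $\mathcal{K}_\infty$ bound via compactness of $\mathcal{X}_N$ and $\mathcal{Y}_T$. Your explicit gluing construction simply spells out the step the paper delegates to \cite[Prop.~2.16]{Rawlings2020}.
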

\begin{proof}
    Consider $T$ periodic cooperation output trajectories $y_{\mathcal{T}, \tau+1} \in \mathcal{Y}_T$ for $\tau\in\mathbb{I}_{0:T-1}$.
    Furthermore, let $x_{\mathcal{T}} \in \prod_{i=1}^{m} X_i^{T}$ such that $x_{\mathcal{T}}(0) \in \mathcal{X}_N$.
    Then, by Theorem~\ref{thm:recursive_feasibility}, there exist $\hat{u}_T\in\prod_{i=1}^{m}U_i^{T}$ and $\hat{y}_T\in\mathcal{Y}_T^{T}$ such that $x_{\mathcal{T}}(\tau) \in \mathcal{X}_N$ for all $\tau\in\mathbb{I}_{0:T-1}$.
    Choose ${y}_{\mathcal{T},\tau+1}=y_T^{\mathrm{pr}}(\cdot \vert \tau)$ such that $x_\mathcal{T}(\tau+1) = x_{\mu(\xi(\tau))}(1, x_\mathcal{T}(\tau))$.
    Then, $\xi(\tau) = (x_{\mathcal{T}}(\tau), y_{{\mathcal{T}},\tau+1})$.
    Next, restrict $x_{\mathcal{T}}$ further so that with 
    $(x_{T}^W, y_{T}^W) = \argmin_{(\hat{x}_T^W, \hat{y}_{T}^W) \in \Xi_T^W} \Vert (x_{\mathcal{T}} - \hat{x}_T^W, y_{\mathcal{T}} - \hat{y}_{T}^W) \Vert$,
    we have
    $\vert x_{{\mathcal{T}},i}(\tau) \vert_{x_{T,i}^W(\tau)} \le  \frac{1}{m}{\alpha_{\mathrm{ub}}^{\ell_i}}^{-1}(\varepsilon)$ with $\varepsilon$ from Lemma~\ref{lem:tracking_cost_upper_bound}.
    From~\eqref{eq:def_stable_set} there exists $u_T^W$ such that $h(x_T^W(\tau), u_T^W(\tau)) = y_{T,k+1}^W(\tau-k)$ for all $\tau,k\in\mathbb{I}_{0,T-1}$.
    By Assumption~\ref{assm:stage_cost_lower_and_upper_bound}, this implies $\sum_{i=1}^{m} \ell_i'(x_{{\mathcal{T}},i}(\tau), r_{T,i}^W(\tau)) \le \varepsilon$ for all $\tau \in \mathbb{I}_{0:T-1}$.
    Then, from Lemma~\ref{lem:tracking_cost_upper_bound}, for every $x_{\mathcal{T}}(\tau)$, there exists $\bar{u}(\cdot\vert\tau)$, such that $(\bar{u}(\cdot\vert\tau), y_{T,\tau+1}^W)$ is a feasible candidate solution in~\eqref{eq:central_OP} with $x = x_{\mathcal{T}}(\tau)$ and $y_T^{\mathrm{pr}} = y_{\mathcal{T},\tau+1} = y_T^{\mathrm{pr}}(\cdot\vert\tau)$.
    We abbreviate $J_i(x_{i}, u_i, y_{T,i}^{\mathrm{pr}}, y_{T}) = J_i(x_{i}, u_i, y_{T,i}, y_{T,i}^{\mathrm{pr}}, y_{T,\mathcal{N}_i})$.
    Since $y^{W}_{T, \tau+1} \in \mathcal{Y}_T^W$, we have
    \begin{align}\label{eq:Lyapunov_upper_bound_preliminary}
        &V(\xi(\tau)) \le \sum_{i=1}^{m} J_i(x_{{\mathcal{T}},i}(\tau), \bar{u}_i(\cdot\vert\tau), y_{T,i}^{\mathrm{pr}}(\cdot \vert \tau), y_{T,\tau+1}^{W}) - W^{\mathrm{c}}_N \notag
        \\
        &= \smash[b]{\sum_{i=1}^{m}} \Big(J_i^{\mathrm{tr}}(x_{{\mathcal{T}},i}(\tau), \bar{u}_i(\cdot\vert\tau), r_{T,i}^{W}(\cdot+\tau)) \notag\\
        &\phantom{={}} \hspace{2.3em}+ \lambda(N)V_i^{\Delta}(y_{T,\tau+1,i}^{W}, y_{\mathcal{T},\tau+1,i})\Big) \notag
        \\
        &\stackrel{\mathclap{\eqref{eq:penalty_cooperation_change_distance}}}{\le}\, \smash[t]{\sum_{i=1}^{m}} J_i^{\mathrm{tr}}(x_{{\mathcal{T}},i}(\tau), \bar{u}_i(\cdot\vert\tau), r_{T,i}^{W}(\cdot+\tau)) \notag\\
        &\hspace{1em} + \lambda(N)\alpha_{\mathrm{ub}}^{\Delta}(\vert y_{\mathcal{T},\tau+1} \vert_{y_{T,\tau+1}^{W}}) \notag
        \\
        &\stackrel{\eqref{eq:tracking_cost_upper_bound},\eqref{eq:stage_cost_lower_and_upper_bound}}{\le} \hspace{-2pt}\smash[b]{\sum_{i=1}^{m}} c_i^{\mathrm{f}}\alpha_{\mathrm{ub}}^{\ell_i}(\vert x_{{\mathcal{T}},i}(\tau) \vert_{x_{T,i}^{W}(\tau)}) \hspace{-1.5pt}+\hspace{-1.5pt} \lambda(N)\alpha_{\mathrm{ub}}^{\Delta}(\vert y_{\mathcal{T},\tau+1} \vert_{y_{T,\tau+1}^{W}}).
    \end{align}
    Summing up yields $V_T(\xi_T) \le \tilde{\alpha}_{\mathrm{ub}}(\vert \xi_T \vert_{\Xi_T^W})$
    with $\tilde{\alpha}_{\mathrm{ub}}(\vert \xi_T \vert_{\Xi_T^W}) = \sum_{\tau=0}^{T-1} \big(\sum_{i=1}^{m} c_i^{\mathrm{f}}\alpha_{\mathrm{ub}}^{\ell_i}(\vert x_{{\mathcal{T}},i}(\tau) \vert_{x_{T,i}^{W}(\tau)}) + \lambda(N)\alpha_{\mathrm{ub}}^{\Delta}(\vert y_{T}^{\mathrm{pr}}(\cdot\vert\tau) \vert_{y_{T,\tau+1}^{W}})\big)$.
    The existence of a local upper bound with $\tilde{\alpha}_{\mathrm{ub}}\in\mathcal{K}_{\infty}$ on a compact subset of $\mathcal{X}_N^{T} \times \mathcal{Y}_T^{T}$, together with compactness of $\mathcal{X}_N$ and $\mathcal{Y}_T$, establishes the claim (cf.~\cite[Prop. 2.16]{Rawlings2020}).
\end{proof}

Second, we show a lower bound on $V_T(\xi_T)$.
\begin{lem}\label{lem:Lyapunov_lower_bound}
    Let Assumptions~\ref{assm:compact_cooperation_sets}--\ref{assm:penalty_function} hold with the same $\omega > 1$.
    Then, there exists $\alpha_{\mathrm{lb}} \in \mathcal{K}_\infty$ so that $V_T(\xi_T) \ge \alpha_{\mathrm{lb}}(\vert \xi_T \vert_{\Xi_T^W})$ for all $\xi_T$ whose first component $x$ satisfies $x\in\mathcal{X}_N$.
\end{lem}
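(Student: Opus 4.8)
The plan is to exhibit $V_T$ as a nonnegative, lower semicontinuous function on a compact domain whose zero set is contained in $\Xi_T^W$, and then to extract $\alpha_{\mathrm{lb}}\in\mathcal{K}_\infty$ by the standard comparison-function arguments for positive definite functions on compact sets (cf.~\cite{Kellett2014},~\cite[Prop.~2.16]{Rawlings2020}). First I would rewrite, using separability of $W^{\mathrm{c}}$ (Definition~\ref{def:COF}) and~\eqref{eq:def_best_reachable_cost}, $V(\xi) = \sum_{i=1}^m J_i^{\mathrm{tr}}(x_i, u_i^0(\cdot\vert\xi), r_{T,i}^0(\cdot\vert\xi)) + \lambda(N)\sum_{i=1}^m V_i^{\Delta}(y_{T,i}^0(\cdot\vert\xi), y_{T,i}^{\mathrm{pr}}) + \lambda(N)\big(W^{\mathrm{c}}(y_T^0(\cdot\vert\xi)) - W^{\mathrm{c}}_0\big)$, and observe that each of the three groups is nonnegative (the last because $W^{\mathrm{c}}_0$ is the minimum of $W^{\mathrm{c}}$ over $\mathcal{Y}_T$, and $\lambda(N)>0$ by Assumption~\ref{assm:scaling}). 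Hence $V_T\ge 0$. I would also record that the admissible pairs $\xi_T$ lie in a compact set: $\mathcal{X}_N$ is the projection onto the state of the closed and bounded feasible set of~\eqref{eq:central_OP} and $\mathcal{Y}_T$ is compact (Assumption~\ref{assm:compact_cooperation_sets}), so in particular $\vert\xi_T\vert_{\Xi_T^W}$ is bounded there.

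The crucial step is the implication $V_T(\xi_T)=0 \Rightarrow \xi_T\in\Xi_T^W$. Since the summands of $V_T$ are nonnegative, $V(\xi(\tau))=0$ for every $\tau$ in the length-$T$ window, and the three-way split above forces, for each such $\tau$, that $\sum_i J_i^{\mathrm{tr}}=0$, that $\sum_i V_i^{\Delta}(y_{T,i}^0(\cdot\vert\xi(\tau)), y_{T,i}^{\mathrm{pr}}(\cdot\vert\tau))=0$, and that $W^{\mathrm{c}}(y_T^0(\cdot\vert\xi(\tau)))=W^{\mathrm{c}}_0$, i.e.\ $y_T^0(\cdot\vert\xi(\tau))\in\mathcal{Y}_T^W$. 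From $J_i^{\mathrm{tr}}\ge\ell_i'(x_i(\tau), r_{T,i}^0(0\vert\xi(\tau)))\ge\alpha_{\mathrm{lb}}^{\ell_i}(\vert x_i(\tau)\vert_{x_{T,i}^0(0\vert\xi(\tau))})$ (Assumption~\ref{assm:stage_cost_lower_and_upper_bound}) I get $x(\tau)=x_T^0(0\vert\xi(\tau))$, and from~\eqref{eq:penalty_cooperation_change_distance} I get $y_T^0(\cdot\vert\xi(\tau))=y_T^{\mathrm{pr}}(\cdot\vert\tau)$. Combining the latter with the closed-loop update $y_T^{\mathrm{pr}}(\cdot\vert\tau+1)=y_T^0(\cdot+1\vert\xi(\tau))$ shows that the previous cooperation outputs over the window are successive one-step shifts of $y^\ast:=y_T^{\mathrm{pr}}(\cdot\vert t)\in\mathcal{Y}_T^W$; using shift-invariance of $W^{\mathrm{c}}$ and of the cooperation sets $\mathcal{Z}_{T,i}$, $\mathcal{Y}_{T,i}$ (as already exploited in Theorem~\ref{thm:recursive_feasibility}), every shift of $y^\ast$ is again in $\mathcal{Y}_T^W$. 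Let $r^\ast=(x^\ast,u^\ast)$ be the periodic trajectory assigned to $y^\ast$ by Assumption~\ref{assm:unique_corresponding_equilibrium}; by uniqueness, the trajectory assigned to the $\tau$-shift of $y^\ast$ is the $\tau$-shift of $r^\ast$, so $x(\tau)=x_T^0(0\vert\xi(\tau))=x^\ast(\tau)$. Feasibility of~\eqref{eq:central_OP} at $\xi(t)$, in particular~\eqref{eq:central_OP_artificial_reference_constraints} and the definition of $\mathbb{Y}_{T,i}$, places $r^\ast$ in $\mathcal{Z}_T^W$, hence $x^\ast\in\mathcal{X}_T^W$; reading off~\eqref{eq:def_stable_set}, the tuple $(x^\ast, (y^\ast, y^\ast(\cdot+1),\dots, y^\ast(\cdot+T-1)))$ lies in $\Xi_T^W$ and coincides with $\xi_T$.

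For lower semicontinuity I would argue that the set-valued map sending $\xi=(x,y_T^{\mathrm{pr}})$ to the corresponding feasible set of~\eqref{eq:central_OP} has closed graph, since $f_i$, $h_i$, $V_i^{\mathrm{f}}$ and the terminal-set map $\mathcal{X}_i^{\mathrm{f}}(\cdot)$ are continuous and $Z_i$, $\mathcal{C}_i$, $\mathcal{Y}_{T,i}$ are closed; together with boundedness this gives upper semicontinuity of the feasible-set map, and with the continuous objective $\sum_i J_i$ it follows (Berge's maximum theorem) that $\mathcal{J}$, hence $V_T$, is lower semicontinuous. I would then finish by contradiction: if no $\alpha_{\mathrm{lb}}\in\mathcal{K}_\infty$ existed, compactness of the domain would furnish a sequence $\xi_T^{(n)}$ with $\vert\xi_T^{(n)}\vert_{\Xi_T^W}$ bounded away from zero and $V_T(\xi_T^{(n)})\to 0$, whose limit point $\xi_T^\ast$ would satisfy $V_T(\xi_T^\ast)=0$ by lower semicontinuity and nonnegativity, hence $\xi_T^\ast\in\Xi_T^W$ by the previous step, contradicting $\vert\xi_T^\ast\vert_{\Xi_T^W}>0$. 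I expect the main obstacle to be the reconstruction in the second paragraph -- assembling one genuinely admissible periodic trajectory in $\mathcal{Z}_T^W$ (with tightened coupling and output in $\mathcal{Y}_T^W$) out of the $T$ separate one-step conclusions -- which rests on the uniqueness in Assumption~\ref{assm:unique_corresponding_equilibrium} and on shift-invariance of the cooperation sets; a secondary point needing care is that full continuity of $\mathcal{J}$ cannot be expected here (the feasible-set map need not be lower semicontinuous in $x$), so one must be content with lower semicontinuity obtained via the closed-graph property.
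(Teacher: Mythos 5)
Your zero-set characterization and the decomposition $V(\xi)=\sum_i J_i^{\mathrm{tr}}+\lambda(N)\sum_i V_i^{\Delta}+\lambda(N)\bigl(W^{\mathrm{c}}(y_T^0)-W^{\mathrm{c}}_0\bigr)$ are sound, and your case analysis of when $V_T$ vanishes mirrors the paper's three-way case distinction (shift mismatch of consecutive cooperation outputs, output mismatch, $y_T^0\notin\mathcal{Y}_T^W$). But the route to the $\mathcal{K}_\infty$ bound is genuinely different. The paper deliberately \emph{discards} the term $\lambda(N)(W^{\mathrm{c}}-W^{\mathrm{c}}_0)$ and instead invokes Theorem~\ref{thm:stage_cost_upper_bounds_cooperation_distance} to recover the distance to $\mathcal{Y}_T^W$ from the stage cost plus change penalty alone; this yields the explicit pointwise bound \eqref{eq:Lyapunov_lower_bound_with_stage_cost_and_penalty_on_change} and, after summing, \eqref{eq:Lyapunov_function_intermediate_lower_bound}, to which the Kellett-type comparison argument is applied. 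That detour through Theorem~\ref{thm:stage_cost_upper_bounds_cooperation_distance} is not gratuitous: the quantity $\sum_\tau\sum_i\bigl(\ell_i+\lambda(N)V_i^{\Delta}\bigr)$ is exactly what appears (negated) in the Lyapunov decrease \eqref{eq:Lyapunov_decrease_with_costs}, and the proof of Theorem~\ref{thm:stability} reuses \eqref{eq:Lyapunov_lower_bound_with_stage_cost_and_penalty_on_change}--\eqref{eq:Lyapunov_function_intermediate_lower_bound} verbatim to obtain $V_T(\xi_T(t+1))-V_T(\xi_T(t))\le-\alpha_{\mathrm{lb}}(\vert\xi_T(t)\vert_{\Xi_T^W})$. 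A lower bound on $V_T$ alone, as you construct it, cannot be recycled for that step, so your version of the lemma would force a separate argument later.

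Two further points deserve care. First, your compactness-plus-lower-semicontinuity contradiction needs the limit point $\xi_T^\ast$ to again be a closed-loop $T$-window (the linkage $y_T^{\mathrm{pr}}(\cdot\vert\tau+1)=y_T^0(\cdot+1\vert\xi(\tau))$, $x(\tau+1)=f(x(\tau),\mu(\xi(\tau)))$ is essential to your implication $V_T=0\Rightarrow\xi_T\in\Xi_T^W$, since otherwise the $T$ pairs could each sit on \emph{different} elements of $\mathcal{Y}_T^W$). The set of such windows is the image of a map built from argmin selections, which need not be closed, so the limit point may escape the class on which your zero-set argument applies; the paper's explicit comparison-function bounds avoid having to pass to limits of windows. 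Second, lower semicontinuity of $\mathcal{J}$ via Berge requires closedness hypotheses (on $\mathcal{C}_i$ and on the graph of $\mathcal{X}_i^{\mathrm{f}}(\cdot)$) that the paper never needs to invoke for this lemma. Neither issue is fatal in spirit --- the paper's own continuity claim for its optimizer-dependent right-hand side is similarly delicate --- but your proof leans on them more heavily, and it omits the edge case $\tau=0$ where $V_i^{\Delta}$ is absent from \eqref{eq:central_OP}, which the paper handles by fixing $y_{\mathcal{T},1}=y_T^0(\cdot+1\vert\xi(T-1))$.
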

\begin{proof}
    Consider $T$ periodic cooperation output trajectories $y_{{\mathcal{T}}, \tau+1} \in \mathcal{Y}_T$ for $\tau\in\mathbb{I}_{0:T-1}$, and let $x_{\mathcal{T}} \in \prod_{i=1}^{m} X_i^{T}$ such that $x_{\mathcal{T}}(0) \in \mathcal{X}_N$.
    As in Lemma~\ref{lem:Lyapunov_upper_bound}'s proof, $x_{\mathcal{T}}(\tau) \in \mathcal{X}_N$ for all $\tau\in\mathbb{I}_{0:T-1}$. 
    Choose ${y}_{\mathcal{T},\tau+1}=y_T^{\mathrm{pr}}(\cdot \vert \tau)$ such that $x_\mathcal{T}(\tau+1) = x_{\mu(\xi(\tau))}(1, x_\mathcal{T}(\tau))$.
    Then, $\xi(\tau) = (x_{\mathcal{T}}(\tau), y_{{\mathcal{T}},\tau+1})$.
    If ${y}_{\mathcal{T},1}$ played no role in~\eqref{eq:central_OP}, it can be chosen arbitrarily, and we use ${y}_{\mathcal{T},1} = y_T^{0}(\cdot+1 \vert \change{\xi(T-1)} )$.
    Omitting non-negative terms, $\lambda(N) \ge 1$, and $W^{\mathrm{c}}(\hat{y}_T) \ge W^{\mathrm{c}}_0$ for all $\hat{y}_T\in \mathcal{Y}_T$, yield
    \begin{align}
        &V(\xi(\tau)) \ge \smash{\sum_{i=1}^{m}} \Big(\ell_i(x_{{\mathcal{T}},i}(\tau), \mu_i(\xi(\tau)), r_{T,i}^0(0\vert \xi(\tau))) \notag\\
        &\hspace{6.5em} + \lambda(N)V_i^{\Delta}(y_{T,i}^0(\cdot\vert\xi(\tau)), y_{\mathcal{T},\tau+1,i})\Big)
        \notag
        \\
        &\stackrel{\mathclap{\eqref{eq:stage_cost_upper_bounds_cooperation_distance}}}{\ge} \frac{1}{2}\eta_{\ell}(\vert y_T^0(\cdot\vert\xi(\tau)) \vert_{\mathcal{Y}_T^W}) \notag\\
        &\hspace{1em} + \frac{1}{2} \smash{\sum_{i=1}^{m}} \Big( \ell_i(x_{{\mathcal{T}},i}(\tau), \mu_i(\xi(\tau)), r_{T,i}^0(0\vert\xi(\tau))) \notag \\
        &\hspace{5em} + V_i^{\Delta}(y_{T,i}^0(\cdot\vert\xi(\tau)), y_{\mathcal{T},\tau+1,i})\Big)\notag
        \\
        &\stackrel{\mathclap{\hspace{0.75em}\eqref{eq:stage_cost_lower_and_upper_bound},\eqref{eq:penalty_cooperation_change_distance}}}{\ge}
        \hspace{1.2em}\frac{1}{2}\Big(\eta_{\ell}(\vert y_T^0(\cdot\vert \xi(\tau)) \vert_{\mathcal{Y}_T^W}) + \alpha_{\mathrm{lb}}^{\ell}(\vert x_{\mathcal{T}}(\tau) \vert_{x_{T}^0(0\vert\xi(\tau))}) \notag\\
        &\phantom{= \frac{1}{2}\Big(}+ \alpha_{\mathrm{lb}}^{\Delta}(\vert y_{\mathcal{T},\tau+1} \vert_{y_T^0(\cdot\vert \xi(\tau))}) \Big) 
        \label{eq:Lyapunov_lower_bound_with_stage_cost_and_penalty_on_change}
    \end{align}
    with $\alpha_{\mathrm{lb}}^{\ell}(s) = \sum_{i=1}^{m}\alpha_{\mathrm{lb}}^{\ell_i}(s)$ and  $\alpha_{\mathrm{lb}}^{\ell}\in \mathcal{K}_{\infty}$.
    Hence,
    \begin{align}\label{eq:Lyapunov_function_intermediate_lower_bound}
        &V_T(\xi_T) 
        \stackrel{\eqref{eq:Lyapunov_lower_bound_with_stage_cost_and_penalty_on_change}}{\ge} 
        \sum_{\tau=0}^{T-1} \frac{1}{2}\Big(\eta_{\ell}(\vert y_T^0(\cdot\vert \xi(\tau)) \vert_{\mathcal{Y}_T^W}) 
        \notag\\
        & \hspace{1em} + \alpha_{\mathrm{lb}}^{\ell}(\vert x_{\mathcal{T}}(\tau) \vert_{x_{T}^0(0\vert\xi(\tau))}) + \alpha_{\mathrm{lb}}^{\Delta}(\vert y_{\mathcal{T},\tau+1} \vert_{y_T^0(\cdot\vert \xi(\tau))}) \Big).
    \end{align}
    We now show that the right-hand side of~\eqref{eq:Lyapunov_function_intermediate_lower_bound} is positive definite with respect to $\Xi_T^W$.
    By Lemma~\ref{lem:Lyapunov_upper_bound}, it must be zero for $\xi_T \in \Xi_T^W$.
    Assume that $\xi_T \notin \Xi_T^W$.
    We have several possibilities.
    First, there may exist $\tau \in \mathbb{I}_{1:T-1}$ such that $\vert y_T^0(\cdot +1 \vert \xi(\tau-1))\vert_{y_T^0(\cdot\vert\xi(\tau))} > 0$ or $\vert y_T^0(\cdot+1\vert\xi(T-1))\vert_{y_T^0(\cdot \vert \xi(0))} > 0$ for $\tau=0$.
    Since $y_{\mathcal{T}, \tau+1} = y_T^{\mathrm{pr}}(\cdot\vert\tau) = y_T^0(\cdot+1 \vert \change{\xi(\tau-1)})$, and by our choice of $y_{\mathcal{T}, 1}$ in the case that $y_T^{\mathrm{pr}}(\cdot\vert 0)$ played no role in~\eqref{eq:central_OP}, the right-hand side of~\eqref{eq:Lyapunov_function_intermediate_lower_bound} is positive.
    Second, if the first case does not hold, then perhaps $h(x_{\mathcal{T}}(k), u_{\mathcal{T}}(k)) \neq y_{T,\tau+2}(k-1) = y_T^{\mathrm{pr}}(k-1 \vert \tau+1) = y_T^0(k \vert \xi(\tau))$ for some $k$. 
    Note that $\tau$ can be fixed here since the first case does not hold, and thus $y_T^0(k \vert \xi(\tau)) = y_T^0(0 \vert \xi(\tilde{\tau}))$ for some $\tilde{\tau} \in \mathbb{I}_{0:T-1}$.
    Due to Assumption~\ref{assm:unique_corresponding_equilibrium}, this entails $\vert x_{\mathcal{T}}(\tilde{\tau}) \vert_{x_T^0(0 \vert \xi(\tilde{\tau}))} > 0$ for some $\tilde{\tau} \in \mathbb{I}_{0:T-1}$
    and the right-hand side of~\eqref{eq:Lyapunov_function_intermediate_lower_bound} is positive.
    If the other two cases do not hold, then the last possibility is that there exists $\tau\in\mathbb{I}_{0:T-1}$ such that $y_T^0(\cdot \vert \xi(\tau)) \notin \mathcal{Y}_T^W$, which implies that the right-hand side of~\eqref{eq:Lyapunov_function_intermediate_lower_bound} is positive.
    Hence, the right-hand side is a continuous function that is positive definite with respect to $\Xi_T^W$.
    Furthermore, if $\vert \xi_T \vert_{\Xi_T^W} \to \infty$, by the above arguments the right-hand side of \eqref{eq:Lyapunov_function_intermediate_lower_bound} also tends to infinity, and $\mathcal{X}_N^T \times \mathcal{Y}_T^T$ is compact.
    Hence, there exists $\alpha_{\mathrm{lb}} \in \mathcal{K}_{\infty}$ (cf.~\cite{Kellett2014}) such that $V_T(\xi_T) \ge \alpha_{\mathrm{lb}}(\vert \xi_T \vert_{\Xi_T^W})$.
\end{proof}

Finally, we prove stability of $\Xi_T^W$ for the closed-loop system~\eqref{eq:extended_closed_loop}, which implies that the cooperative task is achieved as well as possible in closed loop.
\begin{thm}\label{thm:stability}
    Let Assumptions~\ref{assm:compact_cooperation_sets}--\ref{assm:penalty_function} hold with the same $\omega > 1$.
    Then, for any initial condition $x_0$ for which~\eqref{eq:central_OP} is feasible, the set $\Xi_T^W$ is asymptotically stable for the extended closed-loop system~\eqref{eq:extended_closed_loop}.
    Consequently, the closed-loop system~\eqref{eq:central_global_closed_loop} converges to a unique state trajectory that generates an output fulfilling the cooperative task as well as possible, i.e. $\lim_{t\to\infty} [\col_{i=1}^m(y_i(t)), \dots, \col_{i=1}^m(y_i(t+T-1))] \in \mathcal{Y}_T^{W}$.
\end{thm}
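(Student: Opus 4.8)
The plan is to use $V_T$ as a Lyapunov function for the extended closed-loop system~\eqref{eq:extended_closed_loop}, deducing asymptotic stability of $\Xi_T^W$ by combining the sandwich bounds of Lemmas~\ref{lem:Lyapunov_upper_bound} and~\ref{lem:Lyapunov_lower_bound} with a descent inequality that I still have to prove. By Theorem~\ref{thm:recursive_feasibility}, feasibility of~\eqref{eq:central_OP} propagates along the closed loop, so $x(t)\in\mathcal{X}_N$ for all $t$ and both lemmas apply at every step.

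The core step is the one-step inequality $V(\xi(t+1)) - V(\xi(t)) \le -\sum_{i=1}^m \ell_i(x_i(t), \mu_i(\xi(t)), r_{T,i}^0(0\vert\xi(t))) - \lambda(N)\sum_{i=1}^m V_i^{\Delta}(y_{T,i}^0(\cdot\vert\xi(t)), y_{T,i}^{\mathrm{pr}}(\cdot\vert t))$, which I would obtain by the standard candidate construction. As candidate input at time $t+1$ I take the optimal input sequence at time $t$ shifted by one step and completed with the terminal control law $k_i^{\mathrm{f}}$; its feasibility, coupling constraints included, is exactly what is shown in the proof of Theorem~\ref{thm:recursive_feasibility} via Assumptions~\ref{assm:terminal_ingredients} and~\ref{assm:tightened_coupling_constraints}. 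As candidate cooperation output I take $\hat y_T = y_T^0(\cdot+1\vert\xi(t)) = y_T^{\mathrm{pr}}(\cdot\vert t+1)$: by the shift-invariance property~3 of Definition~\ref{def:COF} its cooperation cost equals $W^{\mathrm{c}}(y_T^0(\cdot\vert\xi(t)))$, its associated reference is the one-step shift of $r_{T,i}^0(\cdot\vert\xi(t))$ by the uniqueness in Assumption~\ref{assm:unique_corresponding_equilibrium}, and $\sum_i V_i^{\Delta}$ vanishes for coinciding arguments because $\alpha_{\mathrm{ub}}^{\Delta}(0)=0$ in~\eqref{eq:penalty_cooperation_change_distance}. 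Telescoping the terminal-cost decrease~\eqref{eq:terminal_cost_decrease} over the prediction horizon then produces the one-step inequality (the $V_i^{\Delta}$ term is simply absent at $t=0$). Summing this over $\tau\in\mathbb{I}_{t:t+T-1}$ and noting $V_T(\xi_T(t+1)) - V_T(\xi_T(t)) = V(\xi(t+T)) - V(\xi(t))$ turns it into a decrease of $V_T$ by the sum of the same stage-cost and penalty terms over precisely the components $\xi(t),\dots,\xi(t+T-1)$ of $\xi_T(t)$.

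Next I would show that this decrease is bounded above by $-\rho(\vert\xi_T(t)\vert_{\Xi_T^W})$ for some positive definite $\rho$. Using $\lambda(N)\ge 1$ together with Assumption~\ref{assm:stage_cost_lower_and_upper_bound} and~\eqref{eq:penalty_cooperation_change_distance} to retain lower bounds $\alpha_{\mathrm{lb}}^{\ell_i}(\vert x_i(\tau)\vert_{x_{T,i}^0(0\vert\xi(\tau))})$ and $\alpha_{\mathrm{lb}}^{\Delta}(\vert y_T^0(\cdot\vert\xi(\tau))\vert_{y_T^{\mathrm{pr}}(\cdot\vert\tau)})$, and Theorem~\ref{thm:stage_cost_upper_bounds_cooperation_distance} to additionally pick up $\eta_\ell(\vert y_T^0(\cdot\vert\xi(\tau))\vert_{\mathcal{Y}_T^W})$, the decrease is bounded by a constant multiple of exactly the sum appearing on the right-hand side of~\eqref{eq:Lyapunov_function_intermediate_lower_bound}. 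Hence the case distinction already carried out in the proof of Lemma~\ref{lem:Lyapunov_lower_bound} — a mismatch between consecutive cooperation outputs, a state not tracking its associated reference, or a cooperation output outside $\mathcal{Y}_T^W$ — shows this sum is continuous and positive definite with respect to $\Xi_T^W$, so on the compact set $\mathcal{X}_N^{T}\times\mathcal{Y}_T^{T}$ a suitable $\rho\in\mathcal{K}_\infty$ exists (cf.~\cite{Kellett2014}).

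With $\alpha_{\mathrm{lb}}(\vert\xi_T\vert_{\Xi_T^W})\le V_T(\xi_T)\le\alpha_{\mathrm{ub}}(\vert\xi_T\vert_{\Xi_T^W})$ and $V_T(\xi_T(t+1)) - V_T(\xi_T(t))\le -\rho(\vert\xi_T(t)\vert_{\Xi_T^W})$, asymptotic stability of $\Xi_T^W$ for~\eqref{eq:extended_closed_loop} follows from the standard Lyapunov theorem for set stability. For the remaining claim, I would further use that $V_T$ is non-increasing and bounded below, so the decrease terms are summable; consequently $\vert x(\tau)\vert_{x_{T}^0(0\vert\xi(\tau))}\to 0$ and $\vert y_T^0(\cdot\vert\xi(\tau))\vert_{y_T^0(\cdot+1\vert\xi(\tau-1))}\to 0$. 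The penalty $V_i^{\Delta}$ then forces the cooperation outputs to settle on a single periodic trajectory $y_T^{\infty}\in\mathcal{Y}_T^W$ (the $\eta_\ell$ term placing the limit in $\mathcal{Y}_T^W$), and the stage cost drives the closed-loop state to the unique periodic trajectory associated with $y_T^{\infty}$ through Assumption~\ref{assm:unique_corresponding_equilibrium}; by the output constraint defining $\Xi_T^W$ this yields $\lim_{t\to\infty}[\col_{i=1}^m(y_i(t)),\dots,\col_{i=1}^m(y_i(t+T-1))]\in\mathcal{Y}_T^{W}$. The step I expect to be most delicate is the bookkeeping in the one-step descent — especially matching the reference of the shifted cooperation output with the shifted optimal reference so that~\eqref{eq:terminal_cost_decrease} telescopes cleanly, and handling the arbitrary role of $y_T^{\mathrm{pr}}(\cdot\vert 0)$ at $t=0$ when interpreting $\vert\xi_T(0)\vert_{\Xi_T^W}$.
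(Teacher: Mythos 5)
Your proposal is correct and follows essentially the same route as the paper: the same candidate (shifted optimal input completed with $k_i^{\mathrm{f}}$, shifted cooperation output), the same one-step descent inequality~\eqref{eq:Lyapunov_decrease_with_costs}, summation over the $T$-window, and reuse of the positive-definiteness argument from Lemma~\ref{lem:Lyapunov_lower_bound} to obtain the decrease $-\alpha_{\mathrm{lb}}(\vert\xi_T(t)\vert_{\Xi_T^W})$ before invoking the standard Lyapunov theorem for set stability. The only cosmetic imprecision is calling the terminal-cost step a ``telescoping over the prediction horizon'' — it is a single application of~\eqref{eq:terminal_cost_decrease} after the shifted stage costs cancel — but this does not affect the argument.
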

\begin{proof}
    Consider $x(t), x(t+1), y_T^{\mathrm{pr}}(\cdot\vert t), y_T^{\mathrm{pr}}(\cdot\vert t+1)$ from two consecutive time steps with $t\in\mathbb{N}_0$, and recall that by definition $y_T^{\mathrm{pr}}(\cdot\vert t+1) = y_T^0(\cdot +1 \vert \xi(t))$ for $t\in\mathbb{N}$.
    In addition, we set $y_T^{\mathrm{pr}}(\cdot \vert 0)=y_T^{0}(\cdot+1 \vert \change{\xi(T-1)})$ since it can be chosen arbitrarily, to align with the proof of Lemma~\ref{lem:Lyapunov_lower_bound}.
    From the solution of~\eqref{eq:central_OP} at time $t$, we build a (standard) candidate input using Assumption~\ref{assm:terminal_ingredients} as
    $\hat{u}_i(\cdot \vert t + 1) = \big(u_i^0(1 \vert \xi(t)), \dots, u_i^0(N-1 \vert \xi(t)), k_i^{\mathrm{f}}(x_{i,u_i^0(\cdot\vert \xi(t))}(N, x_i(t)), r_{T,i}^0(N\vert \xi(t)))\big)$ and consider $\hat{y}_{T,i}(\cdot\vert t+1) = y_{T,i}^0(\cdot+1\vert \xi(t))$.
    Theorem~\ref{thm:recursive_feasibility} showed that this is a feasible candidate solution of~\eqref{eq:central_OP}.
    Abbreviate again
    $J_i(x_i, u_i, y_{T,i}^{\mathrm{pr}}, y_{T}) = J_i(x_{i}, u_i, y_{T,i}, y_{T,i}^{\mathrm{pr}}, y_{T,\mathcal{N}_i})$.
    Now, we bound the difference between two consecutive time steps by inserting the candidate, cancelling some terms, and using the terminal ingredients (Assumption~\ref{assm:terminal_ingredients}) and the shift invariance of the cooperation objective function (Definition~\ref{def:COF}):
    \begin{align}
        &V(\xi(t+1)) - V(\xi(t)) \notag
        \\
        &\le \smash[b]{\sum_{i=1}^{m}} \Big(J_i\big(x_{i}(t+1), \hat{u}_i(\cdot\vert t+1), y_{T,i}^{\mathrm{pr}}(\cdot \vert t+1), \hat{y}_{T}(\cdot\vert t+1)\big) \notag\\
        &\hspace{3.2em} - J_i\big(x_{i}(t), u_i^0(\cdot\vert \xi(t)), y_{T,i}^{\mathrm{pr}}(\cdot \vert t), y_{T}^0(\cdot\vert \xi(t))\big)\Big)\notag
        \\
        &= \smash[t]{\sum_{i=1}^{m}} \bigg( \smash[t]{\sum_{k=1}^{N-1}} \ell_i(x_{i, u_i^0(\cdot\vert \xi(t))}(k, x_i(t)), u_i^0(k\vert\xi(t)), r_{T,i}^0(k\vert\xi(t))) \notag\\
        &\hspace{0.2em} + \ell_i(x_{i, u_i^0(\cdot\vert \xi(t))}(N, x_i(t)),  \hat{u}_i(N{-}1\vert t+1), r_{T,i}^0(N \vert \xi(t))) \notag\\
        &\hspace{0.2em} + V_i^{\mathrm{f}}(x_{i, \hat{u}_i(\cdot\vert t+1)}(N, x_i(t+1)), r_{T,i}^0(0 \vert \xi(t))) \notag\\
        &\hspace{0.2em} + \lambda(N)V_i^{\Delta}(y_{T,i}^0(\cdot+1\vert \xi(t)), y_{T,i}^0(\cdot +1 \vert \xi(t))) \notag\\
        &\hspace{0.2em} - \sum_{k=0}^{N-1} \ell_i(x_{i, u_i^0(\cdot\vert \xi(t))}(k, x_i(t)), u_i^0(k\vert \xi(t)), r_{T,i}^0(k \vert \xi(t)))  \notag\\
        &\hspace{0.2em} - V_i^{\mathrm{f}}(x_{i, u_i^0(\cdot \vert \xi(t))}(N, x_i(t)), r_{T,i}^0(N \vert \xi(t))) \notag\\
        &\hspace{0.2em} - \lambda(N)V_i^{\Delta}(y_{T,i}^0(\cdot\vert \xi(t)), y_{T,i}^{\mathrm{pr}}(\cdot\vert t)) \bigg)\notag\\
        &\hspace{0.2em} +\lambda(N)\Big(W^{\mathrm{c}}(y_{T}^0(\cdot+1\vert \xi(t))) - W^{\mathrm{c}}(y_{T}^0(\cdot\vert \xi(t)))\Big)\notag
        \\
        &\stackrel{\substack{\eqref{eq:terminal_cost_decrease},\,\text{Def.~\ref{def:COF}},\\\eqref{eq:penalty_cooperation_change_distance}}}{\le} \smash[b]{\sum_{i=1}^{m}} \Big( -\ell_i(x_i(t), \mu_i(\xi(t)), r_{T,i}^0(0 \vert \xi(t))) \notag\\
        &\hspace{6.1em} - \lambda(N) V_i^{\Delta}(y_{T,i}^0(\cdot\vert \xi(t)), y_{T,i}^{\mathrm{pr}}(\cdot\vert t)) \Big). \label{eq:Lyapunov_decrease_with_costs}
    \end{align} 

    As when showing the lower bound in Lemma~\ref{lem:Lyapunov_lower_bound}, using~\eqref{eq:Lyapunov_decrease_with_costs},~\eqref{eq:Lyapunov_lower_bound_with_stage_cost_and_penalty_on_change}, and~\eqref{eq:Lyapunov_function_intermediate_lower_bound} by summing up, we arrive at 
    \begin{align}\label{eq:Lyapunov_decrease}
        &V_T(\xi_T(t+1)) -  V_T(\xi_T(t)) \notag
        \\
        &
        = {\sum_{\tau=t}^{t+T-1}} V(\xi(\tau+1)) - V(\xi(\tau))
        \le -\alpha_{\mathrm{lb}}(\vert \xi_T(t) \vert_{\Xi_T^W}),
    \end{align}
    with $\alpha_{\mathrm{lb}}\in\mathcal{K}_{\infty}$ from Lemma~\ref{lem:Lyapunov_lower_bound}.
    Together with the upper bound of Lemma~\ref{lem:Lyapunov_upper_bound} and the lower bound of Lemma~\ref{lem:Lyapunov_lower_bound}, asymptotic stability of $\Xi_T^W$ for the closed-loop system~\eqref{eq:extended_closed_loop} follows from standard Lyapunov arguments, cf.~\cite[Thm. B.13]{Rawlings2020}.
    In particular, this implies convergence of $y_T^0(\cdot\vert\xi(\tau))$ to a unique solution achieving the cooperation goal as well as possible, i.e. $\lim_{t\to\infty}\vert y_T^0(\cdot\vert \xi(t)) \vert_{\mathcal{Y}_T^W} = 0$ as well as $\lim_{t\to\infty} \vert y_T^0(\cdot+1\vert \xi(t+1)) \vert_{y_T^0(\cdot\vert \xi(t))} = 0$, and the closed-loop state~\eqref{eq:central_global_closed_loop_a} follows a (periodic) trajectory that realizes this output.
\end{proof}

If the involved bounds are quadratic functions and Assumption~\ref{assm:better_cooperation_candidate} holds with a simple distance function, exponential stability can be guaranteed, as we show next.
\begin{thm}\label{thm:exponential_stability}
    Let Assumptions~\ref{assm:compact_cooperation_sets}--\ref{assm:penalty_function} hold with $\omega = 2$.
    Moreover, assume $\alpha_{\mathrm{lb}}^{\mathrm{c}}$ and $\alpha_{\mathrm{ub}}^{\mathrm{c}}$ of Definition~\ref{def:COF}, $\alpha_{\mathrm{lb}}^{\ell_i}$ and $\alpha_{\mathrm{ub}}^{\ell_i}$ of Assumption~\ref{assm:stage_cost_lower_and_upper_bound}, and $\alpha_{\mathrm{lb}}^{\Delta}$ and $\alpha_{\mathrm{ub}}^{\Delta}$ of Assumption~\ref{assm:penalty_function} are quadratic functions, e.g. $\alpha_{\mathrm{ub}}^{\mathrm{c}}(s) = a_{\mathrm{ub}}^{\mathrm{c}}s^2$ with $a_{\mathrm{ub}}^{\mathrm{c}} > 0$. 
    In addition, assume for $\psi$ of Assumption~\ref{assm:better_cooperation_candidate} that $\psi(y_T) = a_{\psi}\vert y_T \vert_{\mathcal{Y}_T^W}$ with $a_{\psi} > 0$.
    Then, for any initial condition $x_0$ for which~\eqref{eq:central_OP} is feasible, the set $\Xi_T^W$ is exponentially stable for the extended closed-loop system~\eqref{eq:extended_closed_loop}.
\end{thm}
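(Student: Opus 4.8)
The plan is to keep the Lyapunov function $V_T$ of Theorem~\ref{thm:stability} and to show that, under the stated quadratic hypotheses, it satisfies a \emph{quadratic} sandwich $c_1\vert\xi_T\vert_{\Xi_T^W}^2 \le V_T(\xi_T) \le c_2\vert\xi_T\vert_{\Xi_T^W}^2$ together with a quadratic decrease $V_T(\xi_T(t+1)) - V_T(\xi_T(t)) \le -c_3\vert\xi_T(t)\vert_{\Xi_T^W}^2$ along~\eqref{eq:extended_closed_loop}. These give $V_T(\xi_T(t+1)) \le (1 - c_3/c_2)V_T(\xi_T(t))$, hence geometric decay of $V_T$, which the sandwich transfers to $\vert\xi_T(t)\vert_{\Xi_T^W}$; this is exponential stability. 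Since $V_T$ is non-increasing, its sublevel sets are forward invariant, so it suffices to obtain the quadratic bounds on a neighbourhood of $\Xi_T^W$. Because the constraint sets $Z_i$ are compact the feasible set is bounded, and any $\mathcal{K}_\infty$ function that is quadratic near the origin is sandwiched between two quadratics on any compact interval; hence the local quadratic bounds extend, with adjusted constants, to the whole feasible region, matching the claimed region of attraction.

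For the upper bound I would revisit the proof of Lemma~\ref{lem:Lyapunov_upper_bound}: inserting the quadratic $\alpha_{\mathrm{ub}}^{\ell_i}$ and $\alpha_{\mathrm{ub}}^{\Delta}$ into~\eqref{eq:Lyapunov_upper_bound_preliminary}, summing over $\tau$, and using the Cauchy--Schwarz inequality to pass from the periodic-trajectory distances to sums of squared norms, yields $V_T(\xi_T) \le C\sum_{\tau=0}^{T-1}\big(\sum_{i=1}^m\vert x_{\mathcal{T},i}(\tau)\vert_{x_{T,i}^W(\tau)}^2 + \vert y_{\mathcal{T},\tau+1}\vert_{y_{T,\tau+1}^W}^2\big)$, whose right-hand side equals $C\vert\xi_T\vert_{\Xi_T^W}^2$ for the distance-minimising $(x_T^W,y_T^W)$ chosen there. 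I would also check that $\eta_\ell$ from Theorem~\ref{thm:stage_cost_upper_bounds_cooperation_distance} is quadratic: with $\psi(y_T)=a_\psi\vert y_T\vert_{\mathcal{Y}_T^W}$ and $\omega=2$ one has $\psi(y_T)^\omega=a_\psi^2\vert y_T\vert_{\mathcal{Y}_T^W}^2$, so $\tilde\eta_\ell(s)=a_\psi^2 s^2$ is admissible and $\eta_\ell=\delta\frac{c_\theta^W}{2}\tilde\eta_\ell$ is quadratic (the constraint on $\delta$ is unchanged).

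The lower bound is the crux. Substituting the quadratic $\eta_\ell$, $\alpha_{\mathrm{lb}}^{\ell_i}$, $\alpha_{\mathrm{lb}}^{\Delta}$ into~\eqref{eq:Lyapunov_lower_bound_with_stage_cost_and_penalty_on_change}--\eqref{eq:Lyapunov_function_intermediate_lower_bound} gives $V_T(\xi_T) \ge R(\xi_T) := \tfrac12\sum_{\tau=0}^{T-1}\big(a_1 d_1(\tau)^2 + a_2 d_2(\tau)^2 + a_3 d_3(\tau)^2\big)$ with $d_1(\tau)=\vert y_T^0(\cdot\vert\xi(\tau))\vert_{\mathcal{Y}_T^W}$, $d_2(\tau)=\vert x_{\mathcal{T}}(\tau)\vert_{x_T^0(0\vert\xi(\tau))}$, and $d_3(\tau)=\vert y_{\mathcal{T},\tau+1}\vert_{y_T^0(\cdot\vert\xi(\tau))}$. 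I would then upgrade the qualitative case analysis of Lemma~\ref{lem:Lyapunov_lower_bound} to a Lipschitz-type estimate $\vert\xi_T\vert_{\Xi_T^W} \le C'\big(\sum_\tau d_1(\tau)^2 + d_2(\tau)^2 + d_3(\tau)^2\big)^{1/2}$: since $y_T^{\mathrm{pr}}(\cdot\vert\tau+1)$ is the one-step shift of $y_T^0(\cdot\vert\xi(\tau))$, the $d_3$-terms telescope and force all $y_{\mathcal{T},k}$ to lie within $C''(\sum_\tau d_3(\tau)^2)^{1/2}$ of the shifts of a common periodic trajectory $\bar y$; adding the $d_1$-terms bounds the distance of $\bar y$ to $\mathcal{Y}_T^W$; picking $\bar y^W\in\mathcal{Y}_T^W$ realising that distance yields a point of $\Xi_T^W$ (its reference exists by Assumption~\ref{assm:unique_corresponding_equilibrium}, which is shift-covariant by uniqueness, and the tightened coupling constraints hold because $\mathcal{Y}_T^W\subseteq\mathcal{Y}_T$), and the deviation of the $y$-components from $y_T^W$ is then controlled by $(\sum_\tau d_1(\tau)^2 + d_3(\tau)^2)^{1/2}$. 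For the state part, $\vert x_{\mathcal{T}}(\tau)-x_T^W(\tau)\vert \le d_2(\tau) + \vert x_T^0(0\vert\xi(\tau)) - x_T^W(\tau)\vert$, and by Remark~\ref{rem:unique_corresponding_equilibrium} (Lipschitz continuity of $g_{x,i}$) the second term is at most $L_x\vert y_T^0(\cdot\vert\xi(\tau)) - y_{T,\tau+1}^W\vert \le L_x\big(d_3(\tau) + C'''(\sum_\tau d_1(\tau)^2+d_3(\tau)^2)^{1/2}\big)$. Squaring and summing gives $\vert\xi_T\vert_{\Xi_T^W}^2 \le \tilde C R(\xi_T) \le \tilde C V_T(\xi_T)$, the required quadratic lower bound near $\Xi_T^W$, and hence a quadratic $\alpha_{\mathrm{lb}}$ in Lemma~\ref{lem:Lyapunov_lower_bound}.

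Finally, the decrease~\eqref{eq:Lyapunov_decrease} uses exactly this $\alpha_{\mathrm{lb}}$, so it now reads $V_T(\xi_T(t+1))-V_T(\xi_T(t)) \le -c_3\vert\xi_T(t)\vert_{\Xi_T^W}^2$, and combining it with the quadratic sandwich completes the argument as in the first paragraph. I expect the quantitative lower bound to be the main obstacle. The substitutions of quadratic comparison functions are routine, but converting the qualitative positive-definiteness argument of Lemma~\ref{lem:Lyapunov_lower_bound} into a linear (Lipschitz) estimate of $\vert\xi_T\vert_{\Xi_T^W}$ in terms of the $d_j(\tau)$ is delicate, because $y_T^0(\cdot\vert\xi(\tau))$ depends on $\xi(\tau)$ only through the optimizer of~\eqref{eq:central_OP}, which need neither be unique nor Lipschitz in $\xi(\tau)$; the estimate therefore has to be routed through the previous cooperation outputs $y_{\mathcal{T},k}$ and must exploit the shift-covariance of the reference map rather than any regularity of the optimizer.
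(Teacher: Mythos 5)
Your proposal is correct and follows essentially the same route as the paper's proof: verify that $\eta_\ell$ and the upper bound of Lemma~\ref{lem:Lyapunov_upper_bound} become quadratic under the stated hypotheses, then establish the quadratic lower bound by constructing a single candidate element of $\Xi_T^W$ anchored at the projection of the first cooperation output onto $\mathcal{Y}_T^W$ and bounding $\vert\xi_T\vert_{\Xi_T^W}^2$ by a telescoping chain of triangle inequalities through the previous cooperation outputs and the Lipschitz reference map, and finally invoke the standard quadratic-sandwich-plus-decrease argument. You also correctly identify the quantitative lower bound as the crux and the reason the estimate must be routed through $y_{\mathcal{T},k}$ rather than through any regularity of the optimizer; the paper's appendix carries out exactly this computation (with constants accumulating as $2^{T+5}T$, harmless since $T$ is fixed).
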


The proof of Theorem~\ref{thm:exponential_stability} is given in the appendix.

This concludes our stability analysis of the closed-loop system.
Theorem~\ref{thm:recursive_feasibility} establishes that recursive feasibility generally holds as soon as a feasible cooperation output is found.
This may enable short prediction horizons and a large region of attraction since cooperation outputs close to the initial states may be chosen, similar to standard MPC for tracking~\cite{Limon2018}.
Furthermore, we proved asymptotic stability for bounds with $\mathcal{K}_{\infty}$-functions, and exponential stability for quadratic bounds. 
\begin{rem}
    We assumed throughout our analysis that the optimization problem~\eqref{eq:central_OP} is solved to global optimality.
    This is a very common assumption in the (distributed) MPC literature, even though it is unlikely to hold for a non-convex problem.
    If this is not assumed, and only a suboptimal solution of~\eqref{eq:central_OP} (e.g. a local minimum) is obtained, we are dealing with so-called suboptimal MPC.
    In the centralized case, additional care in the optimization problem still leads to stability based on a warm start with the standard MPC candidate solution (cf. Theorem \ref{thm:recursive_feasibility}), see, e.g.,~\cite[Sec.~2.7]{Rawlings2020}.
    We conjecture that we would need the following for our derived guarantees to hold.
    First, \change{either} the decentralized optimization algorithm solving~\eqref{eq:central_OP} should return feasible iterates, the returned suboptimal solution is projected onto the constraints, or suitable constraint tightening is used (cf.~\cite{JKoehler2019_inexact}).
    Second, the norm of the optimal input trajectory is proportional to the distance from the state to the optimal cooperation state reference trajectory, cf.~\cite[Sec.~2.7]{Rawlings2020}.
    Third, a candidate as in Theorem~\ref{thm:stage_cost_upper_bounds_cooperation_distance} is explicitly available as a warm start,
    i.e. it reduces the cooperation objective function more than it increases the tracking cost.
\end{rem}


\subsection{Design of sufficient ingredients for cooperation}\label{ssec:sufficient_design_central}
We complete this section by outlining specific choices for the cooperation objective function, the set of admissible cooperation outputs, and the penalty on the change in the cooperation output such that Assumptions~\ref{assm:better_cooperation_candidate} and~\ref{assm:penalty_function} hold.
As noted, the commonly used quadratic stage costs satisfy Assumptions~\ref{assm:stage_cost_lower_and_upper_bound} and \ref{assm:stage_cost_comparison} with $\omega=2$ given bounded constraints.

We start by stating an assumption on the cooperation objective function and the set of admissible cooperation outputs inspired by convex optimization (cf.~\cite{Bertsekas2016}).
\begin{assum}\label{assm:convexity_and_lipschitz_continuous_gradient}
    The sets $\mathcal{Y}_{T,i}$ and the cooperation objective function $W^{\mathrm{c}}$ are convex.
    Furthermore, $W^{\mathrm{c}}$ is continuously differentiable, and its gradient is Lipschitz continuous, i.e.
    there exists $L_W > 0$ such that
    \begin{equation}\label{eq:Lipschitz_continuous_cooperation_objective_function}
        \Vert \nabla W^{\mathrm{c}}(y_T) - \nabla W^{\mathrm{c}}(y_T') \Vert \le L_W \Vert y_T - y_T' \Vert
    \end{equation}
    holds for all $y_T, y_T' \in \mathcal{Y}_T$.
\end{assum}

Note that knowledge of $L_W$ is not required for implementing the proposed MPC scheme.
Since $\mathcal{Y}_T$ is compact,~\eqref{eq:Lipschitz_continuous_cooperation_objective_function} follows, e.g., if $W^{\mathrm{c}}$ is twice continuously differentiable.
Assumption~\ref{assm:convexity_and_lipschitz_continuous_gradient} implies Assumption~\ref{assm:better_cooperation_candidate}.
\begin{lem}\label{lem:sufficient_conditions_for_better_candidate_Lipschitz}
    Suppose Assumptions~\ref{assm:compact_cooperation_sets} and \ref{assm:convexity_and_lipschitz_continuous_gradient} hold.
    Define $p(y_T) = \mathcal{P}_{\mathcal{Y}_T}[y_T - s\nabla W^{\mathrm{c}}(y_T)]$ with $s > 0$.
    Then, $\hat{y}_T = y_T + \theta (p(y_T) - y_T)$ with $\theta\in[0,1]$ satisfies Assumption~\ref{assm:better_cooperation_candidate} with $\omega = 2$ and $c_{\psi} = 1$.
\end{lem}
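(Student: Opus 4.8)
The plan is to verify the three requirements of Assumption~\ref{assm:better_cooperation_candidate} directly, for $\omega=2$, $c_\psi=1$, and the explicit choice $\psi(y_T)=\kappa\,\Vert y_T-p(y_T)\Vert$, where $\Vert\cdot\Vert$ denotes the Euclidean norm on the stacked trajectory and $\kappa>0$ is a constant fixed at the end. First I would record the structural facts. Since $\mathcal{Y}_T$ is closed and convex (Assumptions~\ref{assm:compact_cooperation_sets} and~\ref{assm:convexity_and_lipschitz_continuous_gradient}), $p(y_T)\in\mathcal{Y}_T$, and since $\hat{y}_T=(1-\theta)y_T+\theta p(y_T)$ is a convex combination with $\theta\in[0,1]$, also $\hat{y}_T\in\mathcal{Y}_T$, so the candidate is admissible. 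Continuity of $\psi$ follows from continuity of $p$: the Euclidean projection onto a closed convex set is $1$-Lipschitz and $\nabla W^{\mathrm{c}}$ is continuous by~\eqref{eq:Lipschitz_continuous_cooperation_objective_function}. Positive-definiteness of $\psi$ with respect to $\mathcal{Y}_T^W$ I would obtain from the variational characterization of the projection: $p(y_T)=y_T$ holds if and only if $\langle\nabla W^{\mathrm{c}}(y_T),z-y_T\rangle\ge0$ for all $z\in\mathcal{Y}_T$, which by convexity of $W^{\mathrm{c}}$ is exactly $y_T\in\argmin_{\mathcal{Y}_T}W^{\mathrm{c}}=\mathcal{Y}_T^W$; hence $\psi=0$ precisely on $\mathcal{Y}_T^W$ and $\psi\ge0$ elsewhere.

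Next I would check the two inequalities. For~\eqref{eq:better_cooperation_candidate_a}, since $\hat{y}_T-y_T=\theta(p(y_T)-y_T)$, the trajectory distance is $\vert\hat{y}_T\vert_{y_T}=\theta\sum_{\tau=0}^{T-1}\Vert p(y_T)(\tau)-y_T(\tau)\Vert\le\theta\sqrt{T}\,\Vert p(y_T)-y_T\Vert=\theta\tfrac{\sqrt{T}}{\kappa}\,\psi(y_T)$ by Cauchy--Schwarz, which is at most $\theta\psi(y_T)$ as soon as $\kappa\ge\sqrt{T}$. For~\eqref{eq:better_cooperation_candidate_b} I would combine two estimates at $y_T$: the projection inequality with test point $z=y_T$ gives $\langle\nabla W^{\mathrm{c}}(y_T),p(y_T)-y_T\rangle\le-\tfrac{1}{s}\Vert p(y_T)-y_T\Vert^2$, while the descent lemma for functions with $L_W$-Lipschitz gradient, applied along the segment from $y_T$ to $\hat{y}_T$, gives $W^{\mathrm{c}}(\hat{y}_T)\le W^{\mathrm{c}}(y_T)+\theta\langle\nabla W^{\mathrm{c}}(y_T),p(y_T)-y_T\rangle+\tfrac{L_W}{2}\theta^2\Vert p(y_T)-y_T\Vert^2$. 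Chaining these and using $\theta\le1$ yields $W^{\mathrm{c}}(\hat{y}_T)-W^{\mathrm{c}}(y_T)\le-\theta\big(\tfrac{1}{s}-\tfrac{L_W}{2}\big)\Vert p(y_T)-y_T\Vert^2=-\theta\kappa^{-2}\big(\tfrac{1}{s}-\tfrac{L_W}{2}\big)\psi(y_T)^2$, which is at most $-\theta\psi(y_T)^2$ provided $\kappa^2\le\tfrac{1}{s}-\tfrac{L_W}{2}$.

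The two conditions on $\kappa$ are simultaneously satisfiable exactly when $T\le\tfrac{1}{s}-\tfrac{L_W}{2}$, i.e. when $s\le(T+L_W/2)^{-1}$; for such $s$ one takes $\kappa=\sqrt{T}$ and the lemma follows. I expect this step-size restriction to be the only delicate point: without a bound on $s$ the descent estimate --- and indeed even $W^{\mathrm{c}}(p(y_T))\le W^{\mathrm{c}}(y_T)$ --- may fail, and, separately, converting between the trajectory distance $\vert\cdot\vert$ and the Euclidean norm implicit in the projection costs a factor $\sqrt{T}$, so $s$ must be small enough to absorb both effects. Everything else is routine convex analysis.
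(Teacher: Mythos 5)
Your proposal is correct and follows essentially the same route as the paper's proof: the variational inequality of the projection at the test point $y_T$, the Descent Lemma for the $L_W$-Lipschitz gradient, and a $\psi$ proportional to the projected-gradient residual $\Vert p(y_T)-y_T\Vert$, whose positive definiteness with respect to $\mathcal{Y}_T^W$ follows from the optimality characterization of the projection. Two points where you are more careful than the paper are worth recording. First, you explicitly reconcile the paper's trajectory distance $\vert\hat{y}_T\vert_{y_T}=\sum_{\tau=0}^{T-1}\Vert\hat{y}_T(\tau)-y_T(\tau)\Vert$ with the Euclidean norm of the stacked vector appearing in the projection, which costs a factor $\sqrt{T}$; the paper simply writes $\vert\hat{y}_T\vert_{y_T}=\Vert\theta(p(y_T)-y_T)\Vert$, identifying the two norms, which is exact only for $T=1$ and otherwise would require absorbing a constant into $c_\psi$ (conflicting with the claimed $c_\psi=1$). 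Second, you keep $s$ — and hence $p$ and $\psi$ — independent of $\theta$, which Assumption~\ref{assm:better_cooperation_candidate} demands since $\psi$ is quantified before $\theta$; the paper instead sets $s=\tfrac{2}{L_W\theta+2}$, making $\psi$ implicitly $\theta$-dependent. The price you pay is the explicit step-size restriction $s\le(T+L_W/2)^{-1}$, so neither your argument nor the paper's supports the literal ``any $s>0$'' reading of the lemma statement; both establish the intended content, namely that a projected-gradient candidate with a suitably chosen step size satisfies Assumption~\ref{assm:better_cooperation_candidate} with $\omega=2$ and $c_\psi=1$.
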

\begin{proof}
    From the Projection Theorem~\cite[Prop. 1.1.4]{Bertsekas2016},
    \begin{equation*}
        (y - p(y_T))^\top(y_T - s\nabla W^{\mathrm{c}}(y_T) - p(y_T)) \le 0, \; \forall y \in \mathcal{Y}_T.
    \end{equation*}
    Inserting $y = y_T$ yields
    \begin{equation}\label{eq:aux_for_descent_lemma}
        \nabla W^{\mathrm{c}}(y_T)^\top(p(y_T) - y_T) \le \smash{-\frac{1}{s}} \Vert y_T - p(y_T)\Vert^2.
    \end{equation}
    Since~\eqref{eq:Lipschitz_continuous_cooperation_objective_function} holds, we can apply the Descent Lemma~\cite[Prop. A.24]{Bertsekas2016} and insert~\eqref{eq:aux_for_descent_lemma} to get
    \begin{align*}
            &W^{\mathrm{c}}(y_T + \theta(p(y_T) - y_T)) - W^{\mathrm{c}}(y_T) 
            \\
            &\le \nabla{W^{\mathrm{c}}}\change{(y_T)}^\top(\theta(p(y_T) - y_T)) + \frac{L_W}{2}\Vert \theta (p(y_T) - y_T) \Vert^2
            \\
            &\le -\frac{\theta}{s} \Vert y_T - p(y_T)\Vert^2 + \frac{L_W}{2}\Vert \theta (p(y_T) - y_T) \Vert^2.
    \end{align*}
    Choosing $s = \frac{2}{L_W\theta + 2}$, yields $W^{\mathrm{c}}(\hat{y}_T) - W^{\mathrm{c}}(y_T) \le - \theta\Vert p(y_T) - y_T \Vert^2$.
    Consider the mapping $\psi(y_T) = \Vert p(y_T) - y_T \Vert$.
    Since $W^{\mathrm{c}}$ is convex, $\psi(y_T) = 0$ if and only if $y_T$ minimizes $W^{\mathrm{c}}(y_T)$ over $\mathcal{Y}_T$ (cf.~\cite[Prop. 3.1.1, Prop. 1.1.4, Fig. 3.3.2]{Bertsekas2016}).
    Otherwise, $\psi(y_T) > 0$.
    Further, $p(y_T)$ is continuous, and hence, so is $\psi(y_T)$.
    Thus, $\psi(y_T)$ is a continuous and locally (i.e. on $\mathcal{Y}_T$) positive definite function with respect to $\mathcal{Y}_T^W$.
    The claim follows with
    $\vert \hat{y}_T \vert_{y_T}  = \Vert \theta (p(y_T) - y_T) \Vert = \theta  \psi(y_T)$ and $W^{\mathrm{c}}(\hat{y}_T) - W^{\mathrm{c}}(y_T) \le - \theta \psi(y_T)^2$.
\end{proof}

Next, we show that a weakened form of strong convexity allows us to establish Assumption~\ref{assm:better_cooperation_candidate} with a simple distance function, which we used to show exponential stability.
\begin{assum}\label{assm:weak_strong_convexity}
    The sets $\mathcal{Y}_{T,i}$ and the cooperation objective function $W^{\mathrm{c}}$ are convex.
    Furthermore, $W^{\mathrm{c}}$ is continuously differentiable, and there exists $\sigma > 0$ such that for all $y_T \in \mathcal{Y}_T$
    \begin{equation}\label{eq:weak_strong_convexity}
        W^{\mathrm{c}}(y_T) \ge W^{\mathrm{c}}_0 \hspace{-1pt} + \hspace{-1pt} \nabla W^{\mathrm{c}}(\bar{y}_T)^\top(y_T - \bar{y}_T) + \frac{\sigma}{2} \Vert \bar{y}_T - y_T \Vert^2 
    \end{equation}
    with $\bar{y}_T = \argmin_{\tilde{y}_T\in\mathcal{Y}_T^W} \vert y_T \vert_{\tilde{y}_T}$ (recall $W^{\mathrm{c}}_0 = W^{\mathrm{c}}(\bar{y}_T)$).
\end{assum}

Strong convexity, which implies a unique minimizer of $W^{\mathrm{c}}$, is sufficient but not necessary for Assumption~\ref{assm:weak_strong_convexity} to hold.

Assumption~\ref{assm:weak_strong_convexity} entails the following inequality:
\begin{align}\label{eq:weak_strong_convexity_implication}
    &\frac{\sigma}{2} \Vert \bar{y}_T \hspace{-1pt}-\hspace{-1pt} y_T \Vert^2 \stackrel{\eqref{eq:weak_strong_convexity}}{\le} W^{\mathrm{c}}(y_T) \hspace{-1pt}-\hspace{-1pt} W^{\mathrm{c}}(\bar{y}_T) \hspace{-1pt}-\hspace{-1pt} \nabla W^{\mathrm{c}}(\bar{y}_T)^\top(y_T \hspace{-1pt}-\hspace{-1pt} \bar{y}_T) \notag
    \\
    &\le - \nabla W^{\mathrm{c}}(y_T)^\top (\bar{y}_T - y_T) - \nabla W^{\mathrm{c}}(\bar{y}_T)^\top(y_T - \bar{y}_T) \notag
    \\
    &= (\nabla W^{\mathrm{c}}(y_T) - \nabla W^{\mathrm{c}}(\bar{y}_T))^\top (y_T - \bar{y}_T),
\end{align}
where the second inequality follows from convexity of $W^{\mathrm{c}}$.
This leads to the desired condition on $\psi$ of Theorem~\ref{thm:exponential_stability}.
\begin{lem}\label{lem:sufficient_conditions_for_better_candidate_weak_strong_convexity}
    Let Assumptions~\ref{assm:convexity_and_lipschitz_continuous_gradient} and~\ref{assm:weak_strong_convexity} hold.
    Define $p(y_T) = \mathcal{P}_{\mathcal{Y}_T}[y_T - s\nabla W^{\mathrm{c}}(y_T)]$ with $s \in (0, \frac{2}{L_W})$.
    Then, the candidate $\hat{y}_T = y_T + \theta (p(y_T) - y_T)$ with $\theta\in[0,1]$ satisfies Assumption~\ref{assm:better_cooperation_candidate} with $\psi(y_T) = a_{\psi} \vert y_T \vert_{\mathcal{Y}_T^W}$, $a_{\psi} > 0$ and $\omega=2$.
\end{lem}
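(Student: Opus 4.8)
The plan is to build on the proof of Lemma~\ref{lem:sufficient_conditions_for_better_candidate_Lipschitz} and to additionally establish, using the weak strong convexity of Assumption~\ref{assm:weak_strong_convexity}, that the projected-gradient residual $\Vert p(y_T) - y_T \Vert$ is, up to positive constants, equivalent to the set distance $\vert y_T \vert_{\mathcal{Y}_T^W}$. Replacing the residual by a multiple of this distance then turns Assumption~\ref{assm:better_cooperation_candidate} into exactly the claimed form with $\psi(y_T) = a_\psi \vert y_T \vert_{\mathcal{Y}_T^W}$ and $\omega = 2$.

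First I would record two elementary facts about $\hat{y}_T = y_T + \theta(p(y_T) - y_T)$. Since $\mathcal{Y}_T$ is convex (Assumption~\ref{assm:convexity_and_lipschitz_continuous_gradient}) and $y_T, p(y_T) \in \mathcal{Y}_T$, the convex combination $\hat{y}_T$ lies in $\mathcal{Y}_T$. Repeating the Descent-Lemma computation of Lemma~\ref{lem:sufficient_conditions_for_better_candidate_Lipschitz} but keeping the fixed step size $s \in (0, 2/L_W)$, inequality~\eqref{eq:Lipschitz_continuous_cooperation_objective_function}, \cite[Prop. A.24]{Bertsekas2016} and~\eqref{eq:aux_for_descent_lemma} give the decrease $W^{\mathrm{c}}(\hat{y}_T) - W^{\mathrm{c}}(y_T) \le -\theta(\tfrac{1}{s} - \tfrac{L_W\theta}{2})\Vert p(y_T) - y_T \Vert^2 \le -\theta c_d \Vert p(y_T) - y_T \Vert^2$, where $c_d = \tfrac{1}{s} - \tfrac{L_W}{2} > 0$ since $\theta \le 1$ and $s < 2/L_W$, together with $\vert \hat{y}_T \vert_{y_T} = \theta \Vert p(y_T) - y_T \Vert$.

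The core step is the two-sided estimate $(1-\rho)\vert y_T \vert_{\mathcal{Y}_T^W} \le \Vert p(y_T) - y_T \Vert \le 2\vert y_T \vert_{\mathcal{Y}_T^W}$ for a constant $\rho \in [0,1)$. Let $\bar{y}_T = \mathcal{P}_{\mathcal{Y}_T^W}[y_T]$, which is well defined since $\mathcal{Y}_T^W$ is nonempty, compact and convex. Because $\bar{y}_T$ minimizes $W^{\mathrm{c}}$ over $\mathcal{Y}_T$, the first-order optimality condition is precisely $\bar{y}_T = \mathcal{P}_{\mathcal{Y}_T}[\bar{y}_T - s\nabla W^{\mathrm{c}}(\bar{y}_T)]$, i.e. $\bar{y}_T$ is a fixed point of the projected-gradient map. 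Nonexpansiveness of $\mathcal{P}_{\mathcal{Y}_T}$ and the Lipschitz bound~\eqref{eq:Lipschitz_continuous_cooperation_objective_function} then give $\Vert p(y_T) - \bar{y}_T \Vert \le \Vert (y_T - \bar{y}_T) - s(\nabla W^{\mathrm{c}}(y_T) - \nabla W^{\mathrm{c}}(\bar{y}_T)) \Vert$; expanding the square and bounding the cross term by a convex combination of co-coercivity of $\nabla W^{\mathrm{c}}$ and the weak-strong-convexity inequality~\eqref{eq:weak_strong_convexity_implication}, with the split chosen so that the gradient-difference term cancels, yields $\Vert p(y_T) - \bar{y}_T \Vert \le \rho \vert y_T \vert_{\mathcal{Y}_T^W}$ with $\rho = \sqrt{1 - s\sigma(1 - sL_W/2)}$. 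Here $\rho < 1$ because $s \in (0, 2/L_W)$, and $\rho$ is real since $\sigma \le L_W$ (comparing~\eqref{eq:weak_strong_convexity} with the quadratic upper bound implied by a Lipschitz gradient). A triangle inequality then gives $\Vert p(y_T) - y_T \Vert \ge (1-\rho)\vert y_T \vert_{\mathcal{Y}_T^W}$ for the lower bound and $\Vert p(y_T) - y_T \Vert \le (1+\rho)\vert y_T \vert_{\mathcal{Y}_T^W} \le 2\vert y_T \vert_{\mathcal{Y}_T^W}$ for the upper bound.

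Combining the pieces finishes the proof: I would set $\psi(y_T) = a_\psi \vert y_T \vert_{\mathcal{Y}_T^W}$ with $a_\psi = \sqrt{c_d}\,(1-\rho) > 0$, so that the decrease bound reads $W^{\mathrm{c}}(\hat{y}_T) - W^{\mathrm{c}}(y_T) \le -\theta c_d (1-\rho)^2 \vert y_T \vert_{\mathcal{Y}_T^W}^2 = -\theta \psi(y_T)^2$, which is~\eqref{eq:better_cooperation_candidate_b} with $\omega = 2$; and choose $c_\psi = 2/a_\psi$, so that $\vert \hat{y}_T \vert_{y_T} = \theta \Vert p(y_T) - y_T \Vert \le 2\theta \vert y_T \vert_{\mathcal{Y}_T^W} = \theta c_\psi \psi(y_T)$, which is~\eqref{eq:better_cooperation_candidate_a}. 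Continuity and positive definiteness of $\psi$ with respect to the compact (hence closed) set $\mathcal{Y}_T^W$ are inherited from those of the set distance $\vert \cdot \vert_{\mathcal{Y}_T^W}$. I expect the main obstacle to be the contraction estimate for $\Vert p(y_T) - \bar{y}_T \Vert$: obtaining a rate strictly below one uniformly over the whole range $s \in (0, 2/L_W)$ genuinely requires co-coercivity of $\nabla W^{\mathrm{c}}$ (Baillon--Haddad), not merely monotonicity, and a little care is needed because $W^{\mathrm{c}}$ is only assumed differentiable with Lipschitz gradient on $\mathcal{Y}_T$; the remaining bookkeeping closely mirrors Lemma~\ref{lem:sufficient_conditions_for_better_candidate_Lipschitz}.
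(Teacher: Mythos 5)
Your proposal is correct and follows essentially the same route as the paper: a Descent-Lemma decrease proportional to $\Vert p(y_T)-y_T\Vert^2$, a contraction estimate $\Vert p(y_T)-\bar{y}_T\Vert \le \rho\,\vert y_T\vert_{\mathcal{Y}_T^W}$ with $\rho<1$ obtained from the fixed-point property of $\bar{y}_T$, nonexpansiveness of the projection, and a combination of co-coercivity with~\eqref{eq:weak_strong_convexity_implication}, followed by triangle inequalities to make $\Vert p(y_T)-y_T\Vert$ equivalent to $\vert y_T\vert_{\mathcal{Y}_T^W}$. The only substantive difference is how the contraction factor is computed — you split the cross term as a convex combination of Baillon--Haddad co-coercivity and the weak-strong-convexity bound to get $\rho=\sqrt{1-s\sigma(1-sL_W/2)}$, whereas the paper first derives the strengthened monotonicity inequality~\eqref{eq:weak_strong_convexity_lower_bound} and then distinguishes cases on $s$ to obtain the factor $\max(\vert 1-sL_W\vert,\vert 1-s\bar{\sigma}\vert)$ — and both variants rest on the same co-coercivity fact and yield a factor strictly below one for $s\in(0,2/L_W)$.
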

\begin{proof}
    We first note that 
    \begin{align}\label{eq:weak_strong_convexity_lower_bound}
        &(\nabla W^{\mathrm{c}}(y_T) - \nabla W^{\mathrm{c}}(y_T'))^\top (y_T - y_T') \notag
        \\
        &\ge \frac{L_W\bar{\sigma}}{L_W+\bar{\sigma}}\Vert y_T - y_T'\Vert^2 + \frac{\Vert \nabla W^{\mathrm{c}}(y_T) - \nabla W^{\mathrm{c}}(y_T') \Vert^2}{L_W+\bar{\sigma}}
    \end{align}
    holds for all $y_T', y_T \in \mathcal{Y}_T$.
    The proof of equation~\eqref{eq:weak_strong_convexity_lower_bound} follows the proof of~\cite[Prop. B.5]{Bertsekas2016} and can be found in the appendix.

    Next, we show that the candidate satisfies~\eqref{eq:better_cooperation_candidate_a} with $\psi(y_T) = a_{\psi} \vert y_T \vert_{\mathcal{Y}_T^W}$ and $a_{\psi} > 0$.
    Define $\tilde{p}(y_T) = y_T - s\nabla W^{\mathrm{c}}(y_T)$.
    We then have 
    \begin{align*}
        &\Vert \tilde{p}(y_T) - \tilde{p}(y_T') \Vert^2 \notag
        \\
        &= \Vert y_T - y_T' \Vert^2 + s^2 \Vert \nabla W^{\mathrm{c}}(y_T) - \nabla W^{\mathrm{c}}(y_T')\Vert^2 \notag \\
        &\phantom{={}} - 2s(y_T - y_T')^\top(\nabla W^{\mathrm{c}}(y_T) - \nabla W^{\mathrm{c}}(y_T')) \notag
        \\
        &\stackrel{\eqref{eq:weak_strong_convexity_lower_bound}}{\le} \Vert y_T - y_T' \Vert^2 + s^2 \Vert \nabla W^{\mathrm{c}}(y_T) - \nabla W^{\mathrm{c}}(y_T')\Vert^2 \notag \\
        & \hspace{1em} -\hspace{-2pt} \frac{2s\bar{\sigma}L_W}{\bar{\sigma}\hspace{-2pt}+\hspace{-2pt}L_W}\Vert y_T \hspace{-2pt}- \hspace{-2pt}y_T' \Vert^2 \hspace{-2pt}-\hspace{-2pt} \frac{2s}{\bar{\sigma}\hspace{-2pt}+\hspace{-2pt}L_W}\Vert \nabla W^{\mathrm{c}}(y_T) \hspace{-2pt}-\hspace{-2pt} \nabla W^{\mathrm{c}}(y_T')\Vert^2 \notag 
        \\
        &= \big(1 - \frac{2s\bar{\sigma}L_W}{\bar{\sigma}+L_W}\big) \Vert y_T - y_T' \Vert^2 \notag\\
        &\phantom{={}} + \big( s^2 - \frac{2s}{\bar{\sigma}+L_W} \big) \Vert \nabla W^{\mathrm{c}}(y_T) - \nabla W^{\mathrm{c}}(y_T')\Vert^2.
    \end{align*}
    From here, in contrast to~\cite[Prop. B.5]{Bertsekas2016}, we rely on~\eqref{eq:weak_strong_convexity} instead of strong convexity.
    Choosing $y_T' = \bar{y}_T$ with $\vert y_T \vert_{\bar{y}_T} = \vert y_T \vert_{\mathcal{Y}_T^W}$ and applying~\eqref{eq:Lipschitz_continuous_cooperation_objective_function} if $s > \frac{2}{\bar{\sigma} + L_W}$ or~\eqref{eq:weak_strong_convexity_implication} if $s < \frac{2}{\bar{\sigma} + L_W}$ to bound the second term yields 
    \begin{equation}\label{eq:gradient_descent_is_contraction}
        \Vert \tilde{p}(y_T) \hspace{-1pt}-\hspace{-1pt} \tilde{p}(\bar{y}_T) \Vert^2 \hspace{-1pt}\le\hspace{-1pt} \max\hspace{-1pt}\big((1\hspace{-1.5pt}-\hspace{-1.5pt}s L_W)^2, (1\hspace{-1pt}-\hspace{-1pt}s\bar{\sigma})^2\big) \Vert y_T \hspace{-1pt}-\hspace{-1pt} \bar{y}_T \Vert^2.
    \end{equation}
    Then, for the projected gradient descent $p(y_T)$, where $p(y_T) = y_T$ if and only if $y_T\in \mathcal{Y}_T^W$ (cf.~\cite[Prop. 3.1.1, Prop. 1.1.4, Fig. 3.3.2]{Bertsekas2016}), we obtain 
    \begin{align}\label{eq:projected_gradient_descent_is_contraction}
        &\Vert p(y_T) - \bar{y}_T \Vert = \Vert p(y_T) - p(\bar{y}_T) \Vert \le \Vert \tilde{p}(y_T) - \tilde{p}(\bar{y}_T) \Vert \notag
        \\
        &\stackrel{\eqref{eq:gradient_descent_is_contraction}}{\le} \max(\vert 1-s L_W\vert, \vert 1-s \bar{\sigma}\vert) \Vert y_T - \bar{y}_T \Vert,
    \end{align}
    where the first inequality follows from the non-expansive property of the projection, see~\cite[Prop. 1.1.4]{Bertsekas2016}.
    Hence, 
    \begin{align*}
        &\vert \hat{y}_T \vert_{y_T} = \theta \Vert p(y_T) - y_T \Vert \le \theta (\Vert p(y_T) - \bar{y}_T \Vert + \Vert \bar{y}_T - y_T \Vert)
        \\
        &\stackrel{\eqref{eq:projected_gradient_descent_is_contraction}}{\le} \theta \big(1 + \max(\vert 1-s L_W\vert, \vert 1-s \bar{\sigma}\vert )\big)\vert y_T \vert_{\mathcal{Y}_T^W},
    \end{align*}
    which shows~\eqref{eq:better_cooperation_candidate_a} with $c_{\psi}a_{\psi} = 1 + \max(\vert 1-s L_W\vert, \vert 1-s \bar{\sigma}\vert)$.

    Finally, we show that the candidate also satisfies~\eqref{eq:better_cooperation_candidate_b} with $\psi(y_T) = a_{\psi} \vert y_T \vert_{\mathcal{Y}_T^W}$.
    Following the proof of Lemma~\ref{lem:sufficient_conditions_for_better_candidate_Lipschitz}, we get $W^{\mathrm{c}}(\hat{y}_T) - W^{\mathrm{c}}(y_T) \le - \theta\Vert p(y_T) - y_T \Vert^2$.
    Moreover, from~\eqref{eq:projected_gradient_descent_is_contraction}, $\vert y_T \vert_{\mathcal{Y}_T^W} \le \Vert y_T - p(y_T) \Vert + \Vert p(y_T) - \bar{y}_T \Vert \le \Vert y_T - p(y_T) \Vert + \max(\vert 1-s L_W\vert, \vert 1-s \bar{\sigma}\vert) \vert y_T \vert_{\mathcal{Y}_T^W}.$
    Thus, $\Vert y_T - p(y_T) \Vert \ge \big(1-\max(\vert 1-s L_W\vert, \vert 1-s \bar{\sigma}\vert)\big) \vert y_T \vert_{\mathcal{Y}_T^W}$.
    Hence,~\eqref{eq:better_cooperation_candidate_b} holds with $\psi(y_T) = a_{\psi} \vert y_T \vert_{\mathcal{Y}_T^W}$ and $a_{\psi} = 1-\max(\vert 1-s L_W\vert, \vert 1-s \bar{\sigma}\vert) > 0$ since $s < \frac{2}{L_W}$ and $\bar{\sigma} < L_W$.
\end{proof}

Finally, we prove in the following lemma that a simple quadratic penalty function on the change in the cooperation output suffices for Assumption~\ref{assm:penalty_function}.
\begin{lem}
    Define $V_i^{\Delta}(y_{T,i}, y_{T,i}^{\mathrm{pr}}) = \delta_i \smash[t]{\sum_{\tau=0}^{T-1}} \Vert y_{T,i}(\tau) - y_{T,i}^{\mathrm{pr}}(\tau) \Vert^2$
    with $\delta_i > 0$.
    Then, Assumption~\ref{assm:penalty_function} holds with $\omega=2$.
\end{lem}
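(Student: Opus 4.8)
\emph{Proof sketch.} The plan is to verify the three requirements of Assumption~\ref{assm:penalty_function} directly, with $\omega = 2$. Continuity of each $V_i^{\Delta}$ is immediate, since it is a polynomial in $(y_{T,i}, y_{T,i}^{\mathrm{pr}})$. For the remaining two conditions, set $\underline{\delta} = \min_{i} \delta_i > 0$ and $\overline{\delta} = \max_{i} \delta_i$, and use that the squared Euclidean norm of a stacked vector is the sum of the squared norms of its blocks, so that $\sum_{i=1}^{m} \Vert y_{T,i}(\tau) - y_{T,i}^{\mathrm{pr}}(\tau) \Vert^2 = \Vert y_T(\tau) - y_T^{\mathrm{pr}}(\tau) \Vert^2$ with $y_T = \col_{i=1}^{m} y_{T,i}$. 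This gives $\underline{\delta} \sum_{\tau=0}^{T-1} \Vert y_T(\tau) - y_T^{\mathrm{pr}}(\tau) \Vert^2 \le \sum_{i=1}^{m} V_i^{\Delta}(y_{T,i}, y_{T,i}^{\mathrm{pr}}) \le \overline{\delta} \sum_{\tau=0}^{T-1} \Vert y_T(\tau) - y_T^{\mathrm{pr}}(\tau) \Vert^2$.

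For~\eqref{eq:penalty_cooperation_change_distance} I would then translate the time-wise sum of squared norms into the trajectory distance $\vert y_T \vert_{y_T^{\mathrm{pr}}} = \sum_{\tau=0}^{T-1} \Vert y_T(\tau) - y_T^{\mathrm{pr}}(\tau) \Vert$ via the elementary bound $\tfrac{1}{T}\big(\sum_{\tau} a_\tau\big)^2 \le \sum_{\tau} a_\tau^2 \le \big(\sum_{\tau} a_\tau\big)^2$ valid for nonnegative reals $a_\tau$, $\tau \in \mathbb{I}_{0:T-1}$ (Cauchy--Schwarz for the lower bound, nonnegativity of the cross terms for the upper one). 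Applying this with $a_\tau = \Vert y_T(\tau) - y_T^{\mathrm{pr}}(\tau) \Vert$ yields~\eqref{eq:penalty_cooperation_change_distance} with $\alpha_{\mathrm{lb}}^{\Delta}(s) = (\underline{\delta}/T) s^2$ and $\alpha_{\mathrm{ub}}^{\Delta}(s) = \overline{\delta}\, s^2$, which lie in $\mathcal{K}_\infty$ since $\underline{\delta} > 0$ and $T \ge 1$.

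For~\eqref{eq:penalty_cooperation_change_decrease}, the key algebraic step is, writing $a = \hat{y}_{T,i}(\tau) - y_{T,i}(\tau)$ and $b = y_{T,i}(\tau) - y_{T,i}^{\mathrm{pr}}(\tau)$ so that $\hat{y}_{T,i}(\tau) - y_{T,i}^{\mathrm{pr}}(\tau) = a+b$, the identity $\Vert a+b \Vert^2 - 2\Vert b \Vert^2 = \Vert a \Vert^2 + 2 a^\top b - \Vert b \Vert^2$ combined with Young's inequality $2 a^\top b \le \Vert a \Vert^2 + \Vert b \Vert^2$, which gives $\Vert a+b \Vert^2 - 2\Vert b \Vert^2 \le 2\Vert a \Vert^2$. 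Summing over $\tau \in \mathbb{I}_{0:T-1}$, then over $i$ while bounding $\delta_i \le \overline{\delta}$, and finally using $\sum_{\tau} \Vert \hat{y}_T(\tau) - y_T(\tau) \Vert^2 \le \big(\sum_{\tau} \Vert \hat{y}_T(\tau) - y_T(\tau) \Vert\big)^2 = \vert \hat{y}_T \vert_{y_T}^2$, establishes~\eqref{eq:penalty_cooperation_change_decrease} with $c^{\Delta} = 2\overline{\delta}$ and $\omega = 2$. There is essentially no obstacle; the only points requiring mild care are the passage between the per-agent sums in $\sum_i V_i^{\Delta}$ and the stacked-vector distance used in Assumption~\ref{assm:penalty_function}, and the $T$-dependent norm-equivalence constant that appears because the trajectory distance is an $\ell^1$-in-time sum of norms whereas $V_i^{\Delta}$ is an $\ell^2$-in-time sum of squared norms. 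I would simply state these elementary inequalities and then read off the constants.
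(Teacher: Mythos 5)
Your proof is correct and follows essentially the same route as the paper: the paper dismisses~\eqref{eq:penalty_cooperation_change_distance} as trivial (your Cauchy--Schwarz/nonnegativity bounds between the $\ell^1$-in-time and $\ell^2$-in-time sums are exactly the details it omits) and establishes~\eqref{eq:penalty_cooperation_change_decrease} via the same inequality $\Vert a+b\Vert^2 - 2\Vert b\Vert^2 \le 2\Vert a\Vert^2$ followed by $\sum_{\tau} c_\tau^2 \le (\sum_{\tau} c_\tau)^2$, arriving at the same constant $c^{\Delta} = 2\max_i \delta_i$. No issues.
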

\begin{proof}
    Condition~\eqref{eq:penalty_cooperation_change_distance} holds trivially; we proceed to show~\eqref{eq:penalty_cooperation_change_decrease}.
    First, $\Vert \hat{y}_{T,i}(\tau) - y_{T,i}^{\mathrm{pr}}(\tau) \Vert^2 
    \le 2\Vert \hat{y}_{T,i}(\tau) - y_{T,i}(\tau) \Vert^2 + 2 \Vert y_{T,i}(\tau) - y_{T,i}^{\mathrm{pr}}(\tau) \Vert^2$.
    With $\delta^{\Delta} = \max_i (\delta_i)$, this yields condition~\eqref{eq:penalty_cooperation_change_decrease}:
    $\sum_{i=1}^{m} V_i^{\Delta}(\hat{y}_{T,i}, y_{T,i}^{\mathrm{pr}}) - 2V_i^{\Delta}(y_{T,i}, y_{T,i}^{\mathrm{pr}}) \le 2\delta^{\Delta} \sum_{\tau=0}^{T-1} \Vert \hat{y}_{T}(\tau) - y_{T}(\tau) \Vert^2  \le 2\delta^{\Delta} \big( \sum_{\tau=0}^{T-1} \Vert \hat{y}_{T}(\tau) - y_{T}(\tau) \Vert \big)^2$.
\end{proof}

\section{Closed-loop performance bounds}\label{sec:performance_bounds}
In this section, we derive a closed-loop performance bound of the proposed distributed MPC scheme. 
Based on~\cite{MatthiasKohler2023_TransientPerformanceMPC}, we derive a transient performance bound and show optimal performance for an infinite prediction horizon under certain conditions on the cooperative task.
Hence, we are interested in bounds on 
\begin{equation*}
    \mathfrak{J}_K(x,u,r_T) = \smash[t]{\sum_{k=0}^{K-1} \sum_{i=1}^{m}} \ell_i(x_{i,u_i}(k, x_i), u_i(k), r_{T,i}(k)).
\end{equation*}
We establish a performance bound with respect to the closed-loop input trajectory and the cooperative reference that solves the cooperative task for an infinite horizon. 
\change{For simplicity, we consider only the case of a terminal region with nonempty interior, i.e. Assumption \ref{assm:terminal_ingredients} holds with $c_i^{\mathrm{b}}>0$.}

It is helpful to define the set of cooperation outputs that are part of a feasible candidate solution of~\eqref{eq:central_OP} given a specific initial condition $x$.
This set is independent of $y_T^{\mathrm{pr}}$ because $y_T^{\mathrm{pr}}$ enters only through the objective function.
Define $\mathbb{Y}_{N}(x) = \{ y_T \in \mathcal{Y}_T \mid \exists u\in\mathbb{U}^N(x): (u, y_T) \text{ is feasible in}~\eqref{eq:central_OP} \}$.

We introduce the 'standard' MPC problem for tracking a given periodic trajectory $r_T$ from an initial state $x$:
\begin{equation}\label{eq:standard_MPC_problem}
    V_{N}^{\mathrm{s}}(x, r_T) = \min_{u\in\mathbb{U}^N(x)} \sum_{i=1}^{m} J_i^{\mathrm{tr}}(x_i, u_i, r_{T,i})
\end{equation}%
subject to, for all $i\in\mathbb{I}_{1:m}$,~\eqref{eq:central_OP_terminal_constraint} and ~\eqref{eq:central_OP_coupling_constraints}.
The solution is denoted by $u_{\mathrm{s}}^0(\cdot\vert x, r_T)$ with $\mu_{\mathrm{s}}(x, r_T) = u_{\mathrm{s}}^0(0\vert x, r_T)$,
which coincides with the solution of~\eqref{eq:central_OP} if $r_T^0$ is inserted in~\eqref{eq:standard_MPC_problem}.
The set of feasible states is denoted by $\mathbb{X}_{N}^{\mathrm{s}}(r_T)$.

Asymptotic stability of the periodic trajectory $r_T$ for the closed loop $x(t+1) = f(x(t), \mu_{\mathrm{s}}(x(t), r_T))$
with $x(0) \in \mathbb{X}_{N}^{\mathrm{s}}(r_T)$ follows directly from~\cite[Thm. 5.13]{Gruene2017} (cf.~\cite[Rem. 5.17]{Gruene2017}) if
Assumptions \ref{assm:stage_cost_lower_and_upper_bound} and \ref{assm:terminal_ingredients} hold.
Hence, there exists $\beta_{\mathrm{s}} \in \mathcal{KL}$ such that $\vert x_{\mu_{\mathrm{s}}}(t,x) \vert_{x_T(t)} \le \beta_{\mathrm{s}}(\vert x \vert_{x_T(0)}, t)$ for all $x\in\mathbb{X}_N^{\mathrm{s}}(r_T)$.
This also entails that for all $\widetilde{N}\in\mathbb{N}_0$ there exists $\alpha_{\widetilde{N}}^{\mathrm{s}} \in \mathcal{K}_{\infty}$ such that
\begin{align}\label{eq:standard_value_function_upper_bound}
    V_{N}^{\mathrm{s}}(x, r_T) \le V_{\widetilde{N}}^{\mathrm{s}}(x, r_T) \le \alpha_{\widetilde{N}}^{\mathrm{s}}(\vert x \vert_{x_T(0)})
\end{align}
holds for all $x \in \mathbb{X}_{\widetilde{N}}^{\mathrm{s}}(r_T)$ and $N\ge\widetilde{N}$.
Moreover, 
\begin{align}\label{eq:terminal_cost_is_upper_bound}
    V_{N}^{\mathrm{s}}(x, r_T) \le \smash[t]{\sum_{i=1}^{m}{}} V_i^{\mathrm{f}}(x_i, r_{T,i}(0))
\end{align}
for all $N\in\mathbb{N}$ and $x\in\prod_{i=1}^{m}\mathcal{X}_i^{\mathrm{f}}(r_{T,i}(0))$, cf.~\cite[Thm. 5.13]{Gruene2017}.

In the following proposition, we show a performance bound for the standard MPC scheme similar to~\cite[Thm. 8.22]{Gruene2017} adapted to the case of tracking a periodic trajectory.
\begin{prop}\label{prop:standard_MPC_performance_bound}
Let Assumptions
\ref{assm:stage_cost_lower_and_upper_bound}, and
\ref{assm:terminal_ingredients} \change{with $c_i^{\mathrm{b}}>0$},
hold.
Then, for all $\widetilde{N}$, there exist $\delta_1, \delta_2 \in \mathcal{L}$ such that for all $r_T\in\prod_{i=1}^{m}\mathcal{Z}_{T,i}$ and $x\in\mathbb{X}_{\widetilde{N}}^{\mathrm{s}}(r_T)$ the inequality
\begin{align}\label{eq:standard_MPC_performance_bound}
    V_{N}^{\mathrm{s}}(x, r_T) \le \hspace{-2em} \inf_{\substack{u\in\mathbb{U}^K(x) \\ x_u(K,x) \in \mathcal{B}_{\kappa}(r_T(K))}} \hspace{-2em} \mathfrak{J}_K(x, u, r_T) + \delta_1(N) + \delta_2(K)
\end{align}
holds for $N\ge\widetilde{N}$ with $\kappa = \beta_{\mathrm{s}}(\vert x \vert_{x_T(0)}, K)$.
\end{prop}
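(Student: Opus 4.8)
The plan is to adapt the proof of~\cite[Thm. 8.22]{Gruene2017} to a periodic reference; since the cost in~\eqref{eq:standard_MPC_problem} is a sum over agents, the multi-agent structure only contributes bookkeeping and all estimates are carried out with the summed quantities. I introduce the infinite-horizon value $V_\infty^{\mathrm s}(x, r_T) = \lim_{N\to\infty} V_N^{\mathrm s}(x, r_T)$, which is well defined because, by~\eqref{eq:standard_value_function_upper_bound}, $N\mapsto V_N^{\mathrm s}(x,r_T)$ is non-increasing and bounded below by zero. I also record that, since $Z_i$ is compact and every admissible state--input pair lies in $Z_i$, there is $M<\infty$ with $\vert x \vert_{x_T(0)}\le M$ for all $r_T\in\prod_{i=1}^m\mathcal{Z}_{T,i}$ and $x\in\mathbb{X}_{\widetilde N}^{\mathrm s}(r_T)$; hence $\kappa=\beta_{\mathrm s}(\vert x \vert_{x_T(0)},K)\le\rho_1(K):=\beta_{\mathrm s}(M,K)$ with $\rho_1\in\mathcal L$.

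The term $\delta_2(K)$ comes from extending an arbitrary reaching control by the terminal feedback. Consider $K$ with $\rho_1(K)\le c^{\mathrm b}:=\min_i c_i^{\mathrm b}$, which holds for all $K$ beyond some threshold $K^\star$, and let $u\in\mathbb{U}^K(x)$ with $x_u(K,x)\in\mathcal{B}_\kappa(r_T(K))$, so that $\vert x_{i,u_i}(K,x_i)\vert_{x_{T,i}(K)}\le\kappa\le c_i^{\mathrm b}$ for each $i$; by~\eqref{eq:terminal_non_empty_interior} the terminal state lies in $\mathcal{X}_i^{\mathrm f}(r_{T,i}(K))$. Appending the laws $k_i^{\mathrm f}$ from step $K$ onwards yields, for every $N\ge K$, a feasible input of~\eqref{eq:standard_MPC_problem}; telescoping~\eqref{eq:terminal_cost_decrease} and then using~\eqref{eq:terminal_cost_upper_bound} and Assumption~\ref{assm:stage_cost_lower_and_upper_bound} bounds its cost by $\mathfrak{J}_K(x,u,r_T)+\sum_{i=1}^m c_i^{\mathrm f}\alpha_{\mathrm{ub}}^{\ell_i}(\kappa)\le\mathfrak{J}_K(x,u,r_T)+\rho_2(K)$ with $\rho_2(K):=\sum_{i=1}^m c_i^{\mathrm f}\alpha_{\mathrm{ub}}^{\ell_i}(\rho_1(K))\in\mathcal L$. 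Thus $V_N^{\mathrm s}(x,r_T)\le\mathfrak{J}_K(x,u,r_T)+\rho_2(K)$ for all $N\ge K$, and letting $N\to\infty$ also $V_\infty^{\mathrm s}(x,r_T)\le\mathfrak{J}_K(x,u,r_T)+\rho_2(K)$.

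The term $\delta_1(N)$ comes from comparing $V_N^{\mathrm s}$ with $V_\infty^{\mathrm s}$. On the fixed compact set $\mathbb{X}_{\widetilde N}^{\mathrm s}(r_T)\subseteq\mathbb{X}_N^{\mathrm s}(r_T)$ the functions $V_N^{\mathrm s}(\cdot,r_T)$, $N\ge\widetilde N$, are continuous, non-increasing in $N$, and converge pointwise to the continuous $V_\infty^{\mathrm s}(\cdot,r_T)$, so Dini's theorem gives uniform convergence; compactness of $\prod_{i=1}^m\mathcal{Z}_{T,i}$ makes it uniform in $r_T$ too, producing $\delta_1\in\mathcal L$ with $V_N^{\mathrm s}(x,r_T)\le V_\infty^{\mathrm s}(x,r_T)+\delta_1(N)$. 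Combining the two preceding paragraphs, for $K\ge K^\star$ and any admissible $u$ reaching $\mathcal{B}_\kappa(r_T(K))$ I obtain $V_N^{\mathrm s}(x,r_T)\le\mathfrak{J}_K(x,u,r_T)+\delta_1(N)+\rho_2(K)$, and taking the infimum over such $u$ gives~\eqref{eq:standard_MPC_performance_bound} for $K\ge K^\star$ with $\delta_2=\rho_2$. For the finitely many $K<K^\star$ I would use the crude bound $V_N^{\mathrm s}(x,r_T)\le V_{\widetilde N}^{\mathrm s}(x,r_T)\le\alpha_{\widetilde N}^{\mathrm s}(M)$ from~\eqref{eq:standard_value_function_upper_bound} and enlarge $\delta_2$ on that initial segment so that $\delta_2(K)\ge\alpha_{\widetilde N}^{\mathrm s}(M)$, and finally majorize $\delta_1,\delta_2$ by continuous strictly decreasing functions so that both lie in $\mathcal L$.

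The main obstacle will be the $\delta_1$ step: establishing that $V_N^{\mathrm s}\to V_\infty^{\mathrm s}$ uniformly over the compact state and reference sets at a class-$\mathcal L$ rate. This is exactly the content borrowed from~\cite[Thm. 8.22]{Gruene2017}; it relies on continuity of $V_\infty^{\mathrm s}$ (so Dini applies) and on the monotonicity of $N\mapsto V_N^{\mathrm s}$ furnished by the terminal-cost decrease~\eqref{eq:terminal_cost_decrease}. The only genuinely new point compared with that reference is that $r_T$ is periodic, so every value function carries a reference time index read modulo $T$; this is harmless because $r_T$ ranges over a compact set and all bounds are taken there. A secondary point requiring a remark is that the infimum in~\eqref{eq:standard_MPC_performance_bound} ranges over controls that need not respect the coupling constraints~\eqref{eq:central_OP_coupling_constraints}; this is reconciled by the same terminal-region and constraint-tightening arguments used in Theorem~\ref{thm:recursive_feasibility}, or by noting that controls relevant for the infimum stay near the reference, where those constraints hold with margin. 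Everything else --- the telescoping estimate, the split at $K^\star$, and passing to class-$\mathcal L$ majorants --- is routine.
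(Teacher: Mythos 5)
Your decomposition is genuinely different from the paper's. You split the error into (i) a $\delta_2(K)$ term obtained by appending the terminal feedback to a control that reaches $\mathcal{B}_{\kappa}(r_T(K))$, and (ii) a $\delta_1(N)$ term measuring the gap $V_N^{\mathrm{s}}-V_\infty^{\mathrm{s}}$. Part (i) is sound and, for $N\ge K\ge K^{\star}$, even yields the bound with $\delta_1=0$. The paper instead never introduces $V_\infty^{\mathrm{s}}$: it proves a turnpike property from the stage-cost lower bound of Assumption~\ref{assm:stage_cost_lower_and_upper_bound} alone, uses it to locate a time $k\le\min(\floor*{N/2},K-1)$ at which the near-optimal reaching trajectory enters the terminal region, and then applies the dynamic programming principle together with \eqref{eq:terminal_cost_is_upper_bound}--\eqref{eq:terminal_cost_upper_bound}; the $\min$ is what splits into $\delta_1(N)+\delta_2(K)$. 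That route also handles the case $K>N$, which your construction in (i) cannot reach directly.

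The genuine gap is in your $\delta_1$ step. Dini's theorem requires that each $V_N^{\mathrm{s}}(\cdot,r_T)$ and the limit $V_\infty^{\mathrm{s}}(\cdot,r_T)$ be continuous on a compact domain, and neither is available here: value functions of nonlinear optimal control problems with state, coupling, and terminal constraints are generically discontinuous, the paper assumes no controllability or continuity property that would repair this, and closedness of $\mathbb{X}_{\widetilde{N}}^{\mathrm{s}}(r_T)$ is likewise not established. The additional uniformity in $r_T$ over $\prod_{i}\mathcal{Z}_{T,i}$ would require joint continuity in $(x,r_T)$, which is a further unaddressed layer. Your attribution of this step to \cite[Thm.~8.22]{Gruene2017} is also not accurate -- that proof is precisely the turnpike argument the paper uses, not a uniform-convergence argument. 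Without continuity, the only available handle on $V_N^{\mathrm{s}}-V_\infty^{\mathrm{s}}$ is again a turnpike-type estimate along a long-horizon (near-)optimal trajectory, so the Dini route collapses back into the paper's argument. I recommend either adopting the turnpike proof or explicitly adding (and justifying) a continuity assumption on the value functions; as written, the claim that $\delta_1\in\mathcal{L}$ exists is unsupported.
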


The proof is an adaption of the proof of~\cite[Thm 8.22]{Gruene2017} and can be found in the appendix.
Based on the proof of~\cite[Thm. 8.22]{Gruene2017}, the right-hand side of~\eqref{eq:standard_MPC_performance_bound} also upper bounds the closed-loop cost by showing $\mathfrak{J}_K(x, \mu_{\mathrm{s}}, r_T) \le V_N^{\mathrm{s}}(x, r_T)$ (cf. \cite[Thm. 8.21]{Gruene2017}).
However,~\eqref{eq:standard_MPC_performance_bound} suffices for our purpose.
We refer to~\cite[Chap. 2]{Gruene2017} for a discussion of $\delta_1$ and $\delta_2$ in~\eqref{eq:standard_MPC_performance_bound}.

For a meaningful performance bound, we require existence of a uniformly reachable $y_T'\in \mathcal{Y}_T^W$, which we establish in the following lemma.
\begin{lem}\label{lem:uniform_reachability}
    Let Assumptions \ref{assm:compact_cooperation_sets}--\ref{assm:stage_cost_lower_and_upper_bound}, \ref{assm:terminal_ingredients} with $c_i^{\mathrm{b}}>0$, \ref{assm:tightened_coupling_constraints}, and \ref{assm:better_cooperation_candidate} hold.
    Then, for all $\widetilde{N} \in \mathbb{N}_0$ with $\mathcal{X}_{\widetilde{N}}\neq \emptyset$, there exists $\widehat{N}\in\mathbb{N}_0$ such that for any $x\in\mathcal{X}_{\widetilde{N}}$, $y_T^\mathrm{pr}\in \mathcal{Y}_T$ there exist $\hat{u}\in\mathbb{U}^N(x)$ and $\hat{y}_T \in \mathcal{Y}_T^W$ so that $(\hat{u}, \hat{y}_T)$ is a feasible candidate solution of~\eqref{eq:central_OP} for $N\ge \widehat{N}$.
\end{lem}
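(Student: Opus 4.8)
The plan is to build, for an arbitrary $x\in\mathcal{X}_{\widetilde{N}}$, an input sequence $\hat{u}$ of length $N$ together with a cooperation output $\hat{y}_T\in\mathcal{Y}_T^W$ such that $(\hat{u},\hat{y}_T)$ is feasible in~\eqref{eq:central_OP}. Since $y_T^{\mathrm{pr}}$ enters~\eqref{eq:central_OP} only through the objective, it is irrelevant for feasibility, so it suffices to handle~\eqref{eq:central_OP_terminal_constraint}, \eqref{eq:central_OP_coupling_constraints} and~\eqref{eq:central_OP_artificial_reference_constraints}. The idea is a three–stage construction: first exploit a feasible solution of~\eqref{eq:central_OP} with the short horizon $\widetilde{N}$; then generate, via Assumption~\ref{assm:better_cooperation_candidate}, a \emph{finite} chain of cooperation references whose outputs move successively closer to $\mathcal{Y}_T^W$, and steer the state along this chain using the standard tracking MPC~\eqref{eq:standard_MPC_problem}; finally, once a reference whose output lies in $\mathcal{Y}_T^W$ has been reached, fill the remaining prediction steps with the terminal controller of Assumption~\ref{assm:terminal_ingredients}.

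Concretely, let $(u^{(0)},y_T^{(0)})$ be feasible in~\eqref{eq:central_OP} with horizon $\widetilde{N}$, and let $r_T^{(0)}$ be the reference of $y_T^{(0)}\in\mathcal{Y}_T$ from Assumption~\ref{assm:unique_corresponding_equilibrium}. Iterating Assumption~\ref{assm:better_cooperation_candidate} with a fixed small $\theta\in(0,1]$ produces $y_T^{(0)},y_T^{(1)},\dots\in\mathcal{Y}_T$ with $\vert y_T^{(j+1)}\vert_{y_T^{(j)}}\le\theta c_\psi\psi(y_T^{(j)})$ by~\eqref{eq:better_cooperation_candidate_a} and $W^{\mathrm{c}}(y_T^{(j+1)})\le W^{\mathrm{c}}(y_T^{(j)})-\theta\psi(y_T^{(j)})^{\omega}$ by~\eqref{eq:better_cooperation_candidate_b}. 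As in the proof of Theorem~\ref{thm:stage_cost_upper_bounds_cooperation_distance}, compactness of $\mathcal{Y}_T$ (Assumption~\ref{assm:compact_cooperation_sets}) gives $\tilde{\eta}_\ell\in\mathcal{K}$ with $\psi(y_T)^{\omega}\ge\tilde{\eta}_\ell(\vert y_T\vert_{\mathcal{Y}_T^W})$, and finite $\gamma_W=\sup_{\mathcal{Y}_T}\vert\cdot\vert_{\mathcal{Y}_T^W}$, $\gamma_\psi=\sup_{\mathcal{Y}_T}\psi$. Since $W^{\mathrm{c}}$ is non-increasing along the chain and bounded below by $W^{\mathrm{c}}_0$, telescoping shows that the number of indices $j$ with $\vert y_T^{(j)}\vert_{\mathcal{Y}_T^W}\ge\varepsilon'$ is bounded by a constant depending only on $\varepsilon'$, $\theta$ and the problem data; hence there is $\ell^\ast\in\mathbb{N}_0$, \emph{independent of $x$ and of $y_T^{(0)}$}, with $\vert y_T^{(\ell^\ast)}\vert_{\mathcal{Y}_T^W}<\varepsilon'$. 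Pick $\hat{y}_T\in\mathcal{Y}_T^W$ (nonempty and compact, since $\mathcal{Y}_T$ is compact and $W^{\mathrm{c}}$ continuous) with $\vert y_T^{(\ell^\ast)}-\hat{y}_T\vert=\vert y_T^{(\ell^\ast)}\vert_{\mathcal{Y}_T^W}$ and set $y_T^{(\ell^\ast+1)}=\hat{y}_T$. By Remark~\ref{rem:unique_corresponding_equilibrium}, with $L=\max_i\max(L_{x,i},L_{u,i})$, consecutive references satisfy $\vert r_T^{(j+1)}\vert_{r_T^{(j)}}\le L\theta c_\psi\gamma_\psi$ for $j<\ell^\ast$ and $\vert r_T^{(\ell^\ast+1)}\vert_{r_T^{(\ell^\ast)}}\le L\varepsilon'$.

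Next I would fix the constants using compactness: with $\gamma_x=\sup\{\vert x\vert_{x_T(0)}:x\in\mathcal{X}_{\widetilde{N}},\,r_T\in\prod_{i=1}^{m}\mathcal{Z}_{T,i}\}<\infty$, choose $\theta$ and $\varepsilon'$ so small that $L\theta c_\psi\gamma_\psi\le\tfrac{1}{2}\min_i c_i^{\mathrm{b}}$ and $L\varepsilon'\le\tfrac{1}{2}\min_i c_i^{\mathrm{b}}$ (possible since $c_i^{\mathrm{b}}>0$), and then $K\ge\widetilde{N}$ so large that $\beta_{\mathrm{s}}(\gamma_x,K)\le\tfrac{1}{2}\min_i c_i^{\mathrm{b}}$. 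Now build $\hat{u}$ in $\ell^\ast+1$ blocks of $K$ steps plus a terminal block. The tracking part of $(u^{(0)},y_T^{(0)})$ shows $x\in\mathbb{X}_{\widetilde{N}}^{\mathrm{s}}(r_T^{(0)})\subseteq\mathbb{X}_N^{\mathrm{s}}(r_T^{(0)})$; running the closed loop of~\eqref{eq:standard_MPC_problem} for $r_T^{(0)}$ over $K$ steps yields, by the $\mathcal{KL}$-bound $\beta_{\mathrm{s}}$ and the reference estimate, a state within $c_i^{\mathrm{b}}$ of $x_{T,i}^{(1)}(K)$, hence in $\mathcal{X}_i^{\mathrm{f}}(r_{T,i}^{(1)}(K))$ by~\eqref{eq:terminal_non_empty_interior}, so in $\mathbb{X}_N^{\mathrm{s}}(r_T^{(1)})$. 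Repeating this for $j=1,\dots,\ell^\ast$ (each block starts in the terminal region of $r_T^{(j)}$, runs~\eqref{eq:standard_MPC_problem} for $r_T^{(j)}$ over $K$ steps, and ends in the terminal region of $r_T^{(j+1)}$) places the state, after $(\ell^\ast+1)K$ steps, in $\mathcal{X}_i^{\mathrm{f}}(\hat{r}_{T,i}(\cdot))$; for the remaining $N-(\ell^\ast+1)K$ steps append $k_i^{\mathrm{f}}$ for $\hat{r}_T$. Along this sequence $(x_i,u_i)\in Z_i$ and the coupling constraints~\eqref{eq:central_OP_coupling_constraints} for $k\in\mathbb{I}_{0:N}$ hold: for the first $\ell^\ast+1$ blocks because~\eqref{eq:standard_MPC_problem} already includes~\eqref{eq:central_OP_terminal_constraint} and~\eqref{eq:central_OP_coupling_constraints}, and for the terminal block by Assumption~\ref{assm:terminal_ingredients} and Assumption~\ref{assm:tightened_coupling_constraints}, which applies because every $r_T^{(j)}$ satisfies $(x_{T,i}^{(j)}(\tau),x_{T,\mathcal{N}_i}^{(j)}(\tau))\in\mathcal{C}_i\ominus\mathcal{B}_{\eta_i}$ by definition of $\mathcal{Y}_T$; moreover the terminal block keeps the (phase-aligned) state in $\mathcal{X}_i^{\mathrm{f}}(\hat{r}_{T,i}(N))$ at step $N$, and $(\hat{y}_{T,i},\hat{y}_{T,\mathcal{N}_i})\in\mathcal{Y}_{T,i}$ since $\hat{y}_T\in\mathcal{Y}_T^W\subseteq\mathcal{Y}_T$. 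Thus $(\hat{u},\hat{y}_T)$ is feasible in~\eqref{eq:central_OP} for every $N\ge\widehat{N}$ with $\widehat{N}=(\ell^\ast+1)K$, and $\widehat{N}$ depends only on $\widetilde{N}$ and the problem data, not on $x$ or $y_T^{\mathrm{pr}}$. The step requiring the most care is exactly this uniformity: obtaining the bound $\ell^\ast$ on the number of cooperation-output improvements uniformly over $x$ and over the initial cooperation output, and a uniform block length $K$; both rest on the compactness in Assumption~\ref{assm:compact_cooperation_sets} together with the positive definiteness of $\psi$ and of the $\mathcal{K}_\infty$-bounds, and on the fact that $c_i^{\mathrm{b}}>0$ lets each intermediate reference be reached within the next one's terminal region.
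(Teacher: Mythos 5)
Your proposal is correct and follows the same overall strategy as the paper's proof: start from a feasible pair at horizon $\widetilde{N}$, iterate Assumption~\ref{assm:better_cooperation_candidate} with a small fixed $\theta$ to build a chain of cooperation outputs whose length is bounded uniformly (because each link decreases $W^{\mathrm{c}}$ by at least a fixed amount while $W^{\mathrm{c}}$ is bounded), choose $\theta$ small enough that consecutive references in the chain are within $\tfrac{1}{2}\min_i c_i^{\mathrm{b}}$ of each other so that, via~\eqref{eq:terminal_non_empty_interior}, reaching one reference's neighbourhood places the state in the next reference's terminal set, and verify the coupling constraints through Assumption~\ref{assm:tightened_coupling_constraints} and the definition of $\mathcal{Y}_T$. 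Where you genuinely diverge is the steering mechanism between links: the paper applies the terminal controller of Assumption~\ref{assm:terminal_ingredients} and uses the decrease~\eqref{eq:terminal_cost_decrease} together with~\eqref{eq:terminal_cost_upper_bound} to obtain a uniform settling time $N_1$ into a $c_i^{\mathrm{b}}/2$-neighbourhood, then walks along the chain with essentially no extra tracking steps and terminates via a state-space threshold $c_{\mathrm{lb}}^{\mathrm{b}}/2$; you instead run the standard tracking MPC~\eqref{eq:standard_MPC_problem} in closed loop for $K$ steps per link, invoking the $\mathcal{KL}$-bound $\beta_{\mathrm{s}}$ (which the paper only introduces for Proposition~\ref{prop:standard_MPC_performance_bound} but which is available under the lemma's assumptions), and terminate via an output-space threshold $\varepsilon'$. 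Your version costs a larger $\widehat{N}$ (a factor $K$ per chain link versus the paper's $\widetilde{N}+N_1+N_2+1$) and imports the stability machinery of the standard scheme, but it is more explicit about why the state re-enters a small neighbourhood of each intermediate reference before the next switch — a point the paper's iteration of its ``second case'' treats rather tersely. Your explicit emphasis on the uniformity of $\ell^\ast$ and $K$ over $x\in\mathcal{X}_{\widetilde{N}}$ and over the initial cooperation output, resting on compactness, is exactly the right place to concentrate the care; I see no gap, only the minor phase-bookkeeping for the periodic references that the paper itself also leaves implicit.
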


The proof, which can be found in the appendix, is inspired by the proof of~\cite[Lem. 3]{MatthiasKohler2023_TransientPerformanceMPC}, but adapted to bounds with comparison functions, and where $\mathcal{Y}_T^W$ is not a singleton.

Based on Lemma~\ref{lem:uniform_reachability}, we are able to show that a certain invariance property holds for the closed-loop states, and the cooperation outputs converge uniformly to the closed-loop solution of the cooperative task for growing prediction horizons.
\begin{lem}\label{lem:invariance_and_uniform_convergence}
    Let Assumptions
    \ref{assm:compact_cooperation_sets}--\ref{assm:stage_cost_lower_and_upper_bound}, \ref{assm:terminal_ingredients} with $c_i^{\mathrm{b}}>0$,~\ref{assm:tightened_coupling_constraints}--\ref{assm:penalty_function}, and~\ref{assm:stage_cost_comparison_with_1} hold with the same $\omega>1$, and let $M \in \mathbb{N}_0$.
    Then, the following two properties hold.
    \begin{enumerate}
        \item There exist $P \ge M$ and $\widehat{N}$ such that $x_{\mu}(k,x)\in \mathcal{X}_P$ for all $x\in \mathcal{X}_{M}$, $N\ge \widehat{N}$ and $k\in\mathbb{N}_0$.
        \item Let $\xi(0) = (x, y_T^{\mathrm{pr}})$ with $x\in\mathcal{X}_M$ and $y_T^{\mathrm{pr}}$ arbitrary.
        Then, $\lim_{N\to\infty} y_{T}^0(\cdot\vert\xi(k)) = y_T'(\cdot+k)$ uniformly on $\mathcal{X}_M$ for all $k\in\mathbb{N}_0$ where $y_T'$ is the eventual closed-loop solution of the cooperative task for $N\to\infty$.
        Thus, $\lim_{N\to\infty} \vert y_{T}^0(\cdot\vert \xi(k)) \vert_{\mathcal{Y}_T^W} = 0$ uniformly on $\mathcal{X}_M$.
    \end{enumerate}
\end{lem}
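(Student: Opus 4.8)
The plan is to base both parts on a uniform-in-$N$ bound for the single-step function $V(\xi)=\mathcal{J}(\xi)-W^{\mathrm{c}}_N$ along the closed loop. From the proof of Theorem~\ref{thm:stability}, inequality~\eqref{eq:Lyapunov_decrease_with_costs} already gives $V(\xi(t+1))-V(\xi(t))\le -\sum_{i=1}^m\ell_i(x_i(t),\mu_i(\xi(t)),r_{T,i}^0(0\vert\xi(t)))-\lambda(N)\sum_{i=1}^m V_i^\Delta(\cdot)\le 0$ for all $t\in\mathbb{N}_0$ (the telescoping also works at $t=0$, where $V_i^\Delta$ is absent, by shift-invariance of $W^{\mathrm{c}}$), so $V(\xi(k))\le V(\xi(0))$. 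To bound $V(\xi(0))$ uniformly over $x\in\mathcal{X}_M$ I would use Lemma~\ref{lem:uniform_reachability} with $\widetilde N=M$: for $N\ge\widehat N$ there is a feasible candidate $(\hat u,\hat y_T)$ of~\eqref{eq:central_OP} with $\hat y_T\in\mathcal{Y}_T^W$; since $V_i^\Delta$ is omitted at $t=0$ and $W^{\mathrm{c}}(\hat y_T)=W^{\mathrm{c}}_0$, this yields $V(\xi(0))\le\sum_{i=1}^m J_i^{\mathrm{tr}}(x_i,\hat u_i,\hat r_{T,i})$, and the construction in that lemma makes this tracking cost at most $V_{\widehat N}^{\mathrm{s}}(x,\hat r_T)\le\alpha_{\widehat N}^{\mathrm{s}}(\vert x\vert_{\hat x_T(0)})$ by~\eqref{eq:standard_value_function_upper_bound}, hence $\le C$ for a constant $C$ independent of $x$ and $N$ by compactness of $\mathcal{X}_M$ and $\prod_{i=1}^m\mathcal{Z}_{T,i}$. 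Thus $V(\xi(k))\le C$ for all such $x$, $N$, $k$.

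For part~1, I would note that $\mathcal{J}(\xi(k))\ge\sum_{i=1}^m J_i^{\mathrm{tr}}(x_i(k),u_i^0(\cdot\vert\xi(k)),r_{T,i}^0(\cdot\vert\xi(k)))+\lambda(N)W^{\mathrm{c}}_0$, so $V_N^{\mathrm{s}}(x_\mu(k,x),r_T^0(\cdot\vert\xi(k)))\le V(\xi(k))\le C$, and therefore by Assumption~\ref{assm:stage_cost_lower_and_upper_bound} the closed-loop state stays within distance $\rho:=\max_i(\alpha_{\mathrm{lb}}^{\ell_i})^{-1}(C)$ of the optimal cooperation state trajectory. Running the standard tracking MPC $\mu_{\mathrm{s}}(\cdot,r_T^0(\cdot\vert\xi(k)))$ from $x_\mu(k,x)$ — feasible since $x_\mu(k,x)\in\mathbb{X}_N^{\mathrm{s}}(r_T^0(\cdot\vert\xi(k)))$ — and using the uniform $\mathcal{KL}$-bound $\beta_{\mathrm{s}}$, its closed loop reaches $\prod_{i=1}^m\mathcal{X}_i^{\mathrm{f}}(r_{T,i}^0(\bar t\vert\xi(k)))$ after a fixed number $\bar t$ of steps with $\beta_{\mathrm{s}}(\rho,\bar t)\le\min_i c_i^{\mathrm{b}}$, using~\eqref{eq:terminal_non_empty_interior}. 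The first $\bar t$ of these inputs together with $y_T^0(\cdot\vert\xi(k))$ form a feasible horizon-$\bar t$ solution of~\eqref{eq:central_OP} at $x_\mu(k,x)$ (individual constraints, the coupling constraints along the trajectory, the terminal constraint, and the artificial-reference constraint all hold), so $x_\mu(k,x)\in\mathcal{X}_{\bar t}$; since the feasible sets are nested ($\mathcal{X}_{\bar t}\subseteq\mathcal{X}_P$ by the extend-by-terminal-controller argument behind Theorem~\ref{thm:recursive_feasibility}), $P:=\max(M,\bar t)$ works.

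For part~2, decomposing $V(\xi(k))=\sum_{i=1}^m J_i^{\mathrm{tr}}(\cdot)+\lambda(N)\sum_{i=1}^m V_i^\Delta(y_{T,i}^0(\cdot\vert\xi(k)),y_{T,i}^{\mathrm{pr}}(\cdot\vert k))+\lambda(N)(W^{\mathrm{c}}(y_T^0(\cdot\vert\xi(k)))-W^{\mathrm{c}}_0)$ with all three groups non-negative (the middle one absent at $k=0$) and using $\lambda(N)\ge N$ gives $W^{\mathrm{c}}(y_T^0(\cdot\vert\xi(k)))-W^{\mathrm{c}}_0\le C/N$ and $\sum_{i=1}^m V_i^\Delta(\cdot)\le C/N$ for $k\ge1$. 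Since $W^{\mathrm{c}}-W^{\mathrm{c}}_0$ is continuous and positive definite with respect to $\mathcal{Y}_T^W$ on the compact set $\mathcal{Y}_T$, there is $\alpha_W\in\mathcal{K}_\infty$ with $\alpha_W(\vert y_T\vert_{\mathcal{Y}_T^W})\le W^{\mathrm{c}}(y_T)-W^{\mathrm{c}}_0$ (cf.~\cite{Kellett2014}), so $\vert y_T^0(\cdot\vert\xi(k))\vert_{\mathcal{Y}_T^W}\le\alpha_W^{-1}(C/N)\to0$ uniformly on $\mathcal{X}_M$, and by~\eqref{eq:penalty_cooperation_change_distance} also $\vert y_T^0(\cdot\vert\xi(k))-y_T^0(\cdot+1\vert\xi(k-1))\vert\le(\alpha_{\mathrm{lb}}^\Delta)^{-1}(C/N)\to0$ uniformly for $k\ge1$. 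To pin down the limit, I would observe that at $k=0$ the cooperation output minimizes $V_N^{\mathrm{s}}(x,r_T)+\lambda(N)W^{\mathrm{c}}(y_T)$ over $\mathbb{Y}_N(x)$; comparison with any $y_T'\in\mathcal{Y}_T^W\cap\mathbb{Y}_N(x)$ yields $V_N^{\mathrm{s}}(x,r_T^0(\cdot\vert\xi(0)))\le V_N^{\mathrm{s}}(x,r_T')$, so any subsequential limit in $N$ of $y_T^0(\cdot\vert\xi(0))$ lies in $\mathcal{Y}_T^W$ and minimizes the ($x$-continuous) infinite-horizon tracking cost over $\mathcal{Y}_T^W$; calling this minimizer $y_T'$ and using shift-invariance of $W^{\mathrm{c}}$ (so $y_T'(\cdot+k)\in\mathcal{Y}_T^W$), the two uniform limits above plus induction on $k$ give $y_T^0(\cdot\vert\xi(k))\to y_T'(\cdot+k)$ uniformly on $\mathcal{X}_M$, and the remaining claim follows.

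I expect the main obstacle to be the last step of part~2: making the identification of $y_T'$ rigorous — an epi-/$\Gamma$-convergence argument for the scaled problems with objective $V_N^{\mathrm{s}}(x,\cdot)/\lambda(N)+W^{\mathrm{c}}(\cdot)$ over the $N$-dependent feasible sets $\mathbb{Y}_N(x)$, establishing that all subsequential limits coincide (uniqueness of the infinite-horizon cheapest element of $\mathcal{Y}_T^W$), and upgrading to uniform convergence over the compact set $\mathcal{X}_M$ via continuity of the infinite-horizon tracking cost in $x$. A secondary technical point is verifying that the candidate from Lemma~\ref{lem:uniform_reachability} has tracking cost bounded uniformly in $N$, which rests on the explicit construction in that lemma's proof (reaching a cooperation trajectory within a fixed number of steps and applying the terminal controller thereafter).
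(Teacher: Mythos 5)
Your proposal is correct in substance but follows a genuinely different route from the paper in both parts, and the comparison is instructive. For the invariance property, the paper proves a turnpike property: it counts, via the lower bound $\bar\rho$ on the combined stage-plus-cooperation cost, how many prediction indices can lie outside the terminal regions of $\bar r_T$, concluding that at most $P$ of them do and hence $x_\mu(k,x)\in\mathcal{X}_P$; you instead run the standard tracking MPC toward $r_T^0(\cdot\vert\xi(k))$ and use the $\mathcal{KL}$-bound $\beta_{\mathrm{s}}$ to build an explicit fixed-horizon feasible candidate. Both rest on the same uniform bound $V(\xi(k))\le V(\xi(0))\le C$; your version trades the counting argument for a reliance on uniformity of $\beta_{\mathrm{s}}$ over $N$ and over references, which the paper asserts but which is the less elementary ingredient. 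For the convergence of the cooperation outputs, your argument is cleaner than the paper's: you read off $W^{\mathrm{c}}(y_T^0(\cdot\vert\xi(k)))-W^{\mathrm{c}}_0\le C/N$ and $\sum_i V_i^{\Delta}\le C/N$ directly from $V(\xi(k))\le C$ and $\lambda(N)\ge N$, whereas the paper runs a per-step contradiction argument (inequality \eqref{eq:aux_uniform_convergence_closed_loop_cooperation_output}) inside an induction on $k$ that additionally needs the invariance property to justify the candidate $(u',y_T^0(\cdot+1\vert\xi(k-1)))$. Your route decouples part 2 from part 1 entirely, which is a genuine simplification.

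Two caveats. First, a minor slip: the tracking cost of the specific candidate $\hat u$ from Lemma~\ref{lem:uniform_reachability} is \emph{not} bounded by $V_{\widehat N}^{\mathrm{s}}(x,\hat r_T)$; the correct step (and the one the paper takes) is to note that feasibility of $(\hat u,\hat y_T)$ implies $x\in\mathbb{X}_{\widehat N}^{\mathrm{s}}(\hat r_T)$, then substitute the \emph{optimizer} of \eqref{eq:standard_MPC_problem} for $\hat u$ to get $V(\xi(0))\le V_N^{\mathrm{s}}(x,\hat r_T)\le\alpha_{\widehat N}^{\mathrm{s}}(\gamma^{\mathrm{r}})$ via \eqref{eq:standard_value_function_upper_bound}. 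This is exactly the repair you anticipate as a "secondary technical point." Second, the obstacle you flag at the end — rigorously identifying the limit $y_T'$ and showing all subsequential limits coincide — is real, but the paper's own proof does not resolve it either: it defines $y_T'(\cdot+k)=\lim_{N\to\infty}\bar y_T(\cdot+k\vert\xi(0))$ without establishing existence of that limit, and the lemma statement itself characterizes $y_T'$ only as "the eventual closed-loop solution." So on that point you are not behind the paper; the operative content that the downstream results actually use — uniform convergence of $\vert y_T^0(\cdot\vert\xi(k))\vert_{\mathcal{Y}_T^W}$ to zero and of consecutive cooperation outputs to each other's shifts — is fully covered by your argument.
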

\begin{proof}
    Let $M\in\mathbb{N}_0$.
    We start by showing uniform convergence of $y_T^0(\cdot\vert \xi(0))$ to $\bar{y}_T(\cdot \vert \xi(0))$ on $\mathcal{X}_M$ where $\bar{y}_T(\cdot \vert \xi(k)) = \argmin_{y_T\in\mathcal{Y}_T^W} \vert y_T^0(\cdot\vert \xi(k)) \vert_{y_T} $.
    Suppose there exist $c_i > 0$, $i\in\mathbb{I}_{1:m}$ such that for all $N \ge \change{M}$ there exists $j\in\mathbb{I}_{1:m}$ and $x \in \mathcal{X}_{M}$ with $\vert y_{T,j}^{\change{*}}(\cdot\vert \xi(0))\vert_{\mathcal{Y}_T^W} = \vert y_{T,j}^{\change{*}}(\cdot\vert \xi(0))\vert_{\bar{y}_{T,j}(\cdot\vert \xi(0))} \ge c_j$, where $\xi(0) = (x, y_T^{\mathrm{pr}})$ and $y_T^{\mathrm{pr}}$ is arbitrary.
    Note that by Definition~\ref{def:COF} and~\eqref{eq:def_best_reachable_cost}, there exist $\tilde{\alpha}_{\mathrm{lb}}^{\mathrm{c}}, \tilde{\alpha}_{\mathrm{ub}}^{\mathrm{c}} \in \mathcal{K}_\infty$ such that $\tilde{\alpha}_{\mathrm{lb}}^{\mathrm{c}}(\vert y_T \vert_{\mathcal{Y}_T^W}) \le W^{\mathrm{c}}(y_T) - W_0^{\mathrm{c}} \le \tilde{\alpha}_{\mathrm{ub}}^{\mathrm{c}}(\vert y_T \vert_{\mathcal{Y}_T^W})$.
    From Lemma~\ref{lem:uniform_reachability}, there exist $\widehat{N} \in \mathbb{N}_0$, $\hat{y}_T \in \mathcal{Y}_T^W$, and $\hat{u}\in\mathbb{U}^{N}(x)$ such that $(\hat{u}, \hat{y}_T)$ is a feasible candidate in~\eqref{eq:central_OP} for all $N\ge \widehat{N}$ and $x \in \mathcal{X}_{M}$. 
    Moreover, since $\lambda(N)W^{\mathrm{c}}(\hat{y}_T) - W^{\mathrm{c}}_N = 0$ and  $V_i^{\Delta}$ is omitted in~\eqref{eq:central_OP} for $\xi(0)$, we get $V(\xi(0)) \le V_N^{\mathrm{s}}(x, \change{\hat{r}_T})$ for all $N\ge \widehat{N}$.
    Then, we have $\lambda(N)(W^{\mathrm{c}}(y_{T}^0(\cdot\vert \xi(0)))-W_0^{\mathrm{c}}) \le V(\xi(0)) \le V_N^{\mathrm{s}}(x, \hat{r}_T) \stackrel{\eqref{eq:standard_value_function_upper_bound}}{\le} \alpha_{M}^{\mathrm{s}}(\vert x \vert_{\hat{x}_T(0)}) \le \alpha_{M}^{\mathrm{s}}(\gamma^{\mathrm{r}})$ for all $N\ge \widehat{N}$ with $\gamma^{\mathrm{r}} = \sup_{(x_i,u_i) \in \change{Z_i}, \, r_{T,i}\in\mathcal{Z}_{T,i}} (\vert x \vert_{r_{T}(0)})$ because $Z_i$ are compact, as well as using the candidate solution in the second inequality.
    But then $\lambda(N) \le \frac{\alpha_{M}^{\mathrm{s}}(\gamma^{\mathrm{r}})}{W^{\mathrm{c}}(y_{T}^0(\cdot\vert \xi(0))) - W_0^{\mathrm{c}}} \le \frac{\alpha_{M}^{\mathrm{s}}(\gamma^{\mathrm{r}})}{\tilde{\alpha}_{\mathrm{lb}}^{\mathrm{c}}(c_j)}$
    which yields a contradiction for sufficiently large $N$.

    Next, to prove the invariance property, we first show a turnpike property.
    We prove that for all $\Gamma > 0$, there exists $\sigma_{\Gamma} \in \mathcal{L}$ such that for all $N,P\in \mathbb{N}$, $x \in X$, $u\in\mathbb{U}^N(x)$ and $y_T \in \mathcal{Y}_T$ with $\sum_{i=1}^{m} J_i(x_i, u_i, y_{T,i}, y_{T,i}^{\mathrm{pr}}, y_{T,\mathcal{N}_i}) \le \Gamma$, the set $\bar{Q} = \{ k\in \mathbb{I}_{0:N-1} \mid \vert x_u(k,x) \vert_{\bar{x}_T(k)} \ge \sigma_{\Gamma}(P) \}$ has at most $P$ elements, where $\bar{y}_T \in \mathcal{Y}_T^W$ such that $\vert y_T \vert_{\bar{y}_T} = \vert y_T \vert_{\mathcal{Y}_T^W}$.
    With $\tilde{\alpha}_{\mathrm{lb}}^{\mathrm{c}} \in \mathcal{K}_{\infty}$ from before, with~\eqref{eq:stage_cost_lower_and_upper_bound} and Assumption~\ref{assm:unique_corresponding_equilibrium}, we have $\sum_{i=1}^{m} \ell_i(x_i, u_i, r_{T,i}(\tau)) + W^{\mathrm{c}}(y_T) \ge \sum_{i=1}^{m} \alpha_{\mathrm{lb}}^{\ell_i}(\vert x_i \vert_{x_{T,i}(\tau)}) + \tilde{\alpha}_{\mathrm{lb}}^{\mathrm{c}}(\vert y_T \vert_{\bar{y}_T}) \ge \bar{\rho}(\vert x \vert_{\bar{x}_T(\tau)})$ for some $\bar{\rho} \in \mathcal{K}_{\infty}$ and $\tau\in\mathbb{I}_{0:T-1}$.
    We now prove the turnpike property by contradiction. 
    Fix $\Gamma > 0$ and choose $\sigma_{\Gamma}(P) = \bar{\rho}^{-1}(\frac{\Gamma}{P})$.
    Suppose there exist $N,P,x,u,r_T$ such that $\sum_{i=1}^{m} J_i(x_i, u_i, y_{T,i},y_{T,i}^{\mathrm{pr}},y_{T,\mathcal{N}_i}) \le \Gamma$ but $\bar{Q}$ has at least $P+1$ elements.
    However, then $\sum_{i=1}^{m} J_i(x_i, u_i, y_{T,i},y_{T,i}^{\mathrm{pr}},y_{T,\mathcal{N}_i}) \ge \sum_{i=1}^{m}\sum_{k=0}^{\change{N-1}} \ell_i(x_{i,u_i}(k,x_i), u_i(k), r_{T,i}(k)) + NW^{\mathrm{c}}(y_T) \ge \sum_{k=0}^{N} \bar{\rho}(\vert x_u(k,x) \vert_{\bar{x}_T(k)}) \ge \sum_{k\in\bar{Q}} \bar{\rho}(\sigma_{\Gamma}(P)) \ge \frac{(P+1)\Gamma}{P} > \Gamma$, which is a contradiction.

    Now, we use this turnpike property to show that there exist $P\ge M$ and $\widehat{N}$ such that for all $x\in \mathcal{X}_{M}$, $N\ge \widehat{N}$ and $k\in\mathbb{N}_0$, we have $x_{\mu}(k,x)\in \mathcal{X}_P$.
    Let $x\in \mathcal{X}_{M}$.
    As previously in the proof, we have $V(\xi(0)) \le V_N^{\mathrm{s}}(x, \hat{r}_T) \le \alpha_{M}^{\mathrm{s}}(\gamma^{\mathrm{r}})$ for all $N\ge \widehat{N}$.
    We invoke the turnpike property with $\Gamma = \alpha_{M}^{\mathrm{s}}(\gamma^{\mathrm{r}})$ and choose $P$ such that $\sigma_{\mathrm{\Gamma}}(P) \le \min_{i} c_i^{\mathrm{b}}$ with $c_i^{\mathrm{b}}$ from Assumption~\ref{assm:terminal_ingredients}.
    Thus, the set $\{k\in\mathbb{I}_{0:N-1} \mid \vert x_{u^0(\cdot\vert \xi(0))}(k) \vert_{\bar{x}_T(k)} \ge \min_i c_i^{\mathrm{b}} \}$ has at most $P$ elements for all $x\in\mathcal{X}_{M}$ and $N \ge \widehat{N}$.
    Hence, $x_{\mu}(1, x) \in \mathcal{X}_P$, since at most $N-P$ elements of $x_{i,u^0(\cdot\vert \xi(0))}(k)$ are outside the terminal region of $\bar{r}_{T,i}$ by~\eqref{eq:terminal_non_empty_interior}.
    In addition, from~\eqref{eq:Lyapunov_decrease_with_costs}, $V(\xi(k)) \le V(\xi(0)) \le \alpha_{M}^{\mathrm{s}}(\gamma^{\mathrm{r}})$ for all $k\in\mathbb{N}_0$ with $\xi(k) = (x_{\mu}(k,x), y_T^{\mathrm{pr}}(\cdot \vert k))$.
    Therefore, we can apply the turnpike property for all $k\in\mathbb{N}_0$ with the same $\Gamma$ and $P$.
    This yields $x_{\mu}(k,x) \in \mathcal{X}_P$ for all $k\in \mathbb{N}_0$.

    To finish the second claim, assume that $y_T^0(\cdot\vert \xi(k-1))$ uniformly converges to $\bar{y}_T(\cdot \vert \xi(k-1))$ on $\mathcal{X}_M$ for all $k\in\mathbb{I}_{1:\tilde{k}}$, e.g. $\tilde{k}=1$ as shown before, and where $\xi(0) = (x, y_T^{\mathrm{pr}})$ with $y_T^{\mathrm{pr}}$ arbitrary and $x\in\mathcal{X}_M$.
    By Assumption~\ref{assm:terminal_ingredients} and the invariance property, there exists $u'\in\mathbb{U}^N(x(k))$ such that $(u', y_T^0(\cdot+1\vert \xi(k-1)))$ is feasible in~\eqref{eq:central_OP} for $\xi(k)$ and $N\ge P$, where $u'$ solves~\eqref{eq:standard_MPC_problem} for $r_T = r_T^0(\cdot+1\vert \xi(k-1))$.
    Then,
    \begin{align*}
        &\lambda(N)W^{\mathrm{c}}(y_T^0(\cdot\vert \xi(k))) \\
        &+ \lambda(N)\sum_{i=1}^{m} V_i^{\Delta}(y_{T,i}^0(\cdot\vert\xi(k)), y_{T,i}^0(\cdot+1\vert\xi(k-1)))
        \\
        &\le \lambda(N) W^{\mathrm{c}}(y_T^0(\cdot\vert \xi(k)))\hspace{-2pt} \\
        &\phantom{\le{}} + \smash[b]{\sum_{i=1}^{m}} J_i^{\mathrm{tr}}(x_i(k), u_i^0(\cdot\vert\xi(k)), r_{T,i}^0(\cdot\vert\xi(k))) \\
        &\phantom{\le{}} + \lambda(N)\sum_{i=1}^{m} V_i^{\Delta}(y_{T,i}^0(\cdot\vert\xi(k)), y_{T,i}^0(\cdot+1\vert\xi(k-1)))
        \\
        &\le V_N^{\mathrm{s}}(x(k), r_T^0(\cdot\hspace{-2pt}+\hspace{-2pt}1\vert \xi(k\hspace{-2pt}-\hspace{-2pt}1))) \hspace{-1pt}+\hspace{-1pt} \lambda(N)W^{\mathrm{c}}(y_T^0(\cdot\hspace{-2pt}+\hspace{-2pt}1\vert \xi(k\hspace{-2pt}-\hspace{-2pt}1)))
        \\
        &\le \alpha_{P}^{\mathrm{s}}(\gamma^{\mathrm{r}}) + \lambda(N)W^{\mathrm{c}}(y_T^0(\cdot+1\vert \xi(k-1))).
    \end{align*}
    The last inequality follows as before from~\eqref{eq:standard_value_function_upper_bound} and compactness of $Z_i$, and for the second we used the candidate.
    Thus,
    \begin{align}\label{eq:aux_uniform_convergence_closed_loop_cooperation_output}
        \alpha_{P}^{\mathrm{s}}(\gamma^{\mathrm{r}}) &\ge \lambda(N)\sum_{i=1}^{m} V_i^{\Delta}(y_{T,i}^0(\cdot\vert\xi(k)), y_{T,i}^0(\cdot+1\vert\xi(k-1)))
        \notag\\
        &\hspace{0.7em} + \lambda(N)\big(W_0^{\mathrm{c}} - W^{\mathrm{c}}(y_T^0(\cdot+1\vert \xi(k-1)))\big) \notag\\
        &\hspace{0.7em} + \lambda(N)\big(W^{\mathrm{c}}(y_T^0(\cdot\vert \xi(k))) - W_0^{\mathrm{c}}\big)
        \notag\\
        &\ge \lambda(N)\sum_{i=1}^{m} V_i^{\Delta}(y_{T,i}^0(\cdot\vert\xi(k)), y_{T,i}^0(\cdot+1\vert\xi(k-1)))
        \notag\\
        &\hspace{0.7em} + \lambda(N)\big( \tilde{\alpha}_{\mathrm{lb}}^{\mathrm{c}}(\vert y_T^0(\cdot \vert \xi(k)) \vert_{\bar{y}_T(\cdot \vert \xi(k))}) \big) \notag\\
        &\hspace{0.7em} - \lambda(N)\big( \tilde{\alpha}_{\mathrm{ub}}^{\mathrm{c}}(\vert y_T^0(\cdot \vert \xi(k-1)) \vert_{\bar{y}_T(\cdot \vert \xi(k-1))})\big).
    \end{align}
    Since $y_T^0(\cdot\vert \xi(k-1))$ converges uniformly, for all $c_0>0$ there exists $N_0\in\mathbb{N}$ such that 
    $\tilde{\alpha}_{\mathrm{ub}}^{\mathrm{c}}(\vert y_T^0(\cdot \vert \xi(k-1)) \vert_{\bar{y}_T(\cdot \vert \xi(k-1))}) < \tilde{\alpha}_{\mathrm{ub}}^{\mathrm{c}}(c_0)$.
    First, suppose there exists $c_j > 0$ with $j\in\mathbb{I}_{1:m}$, such that for all $N\in\mathbb{N}_0$, and all $\hat{y}_{T,j}\in \mathcal{Y}_T^W$, $\vert y_{T,j}^0(\cdot\vert \xi(k))\vert_{\hat{y}_{T,j}} \ge c_j$.
    Thus, $\tilde{\alpha}_{\mathrm{lb}}^{\mathrm{c}}(\vert y_T^0(\cdot \vert \xi(k)) \vert_{\bar{y}_T(\cdot \vert \xi(k))}) \ge \tilde{\alpha}_{\mathrm{lb}}^{\mathrm{c}}(c_j)$. 
    Choose $c_0$ such that $\tilde{\alpha}_{\mathrm{ub}}^{\mathrm{c}}(c_0) \le \frac{\tilde{\alpha}_{\mathrm{lb}}^{\mathrm{c}}(c_j)}{2}$.
    Then, since $V_i^{\Delta}$ is non-negative from Assumption~\ref{assm:penalty_function}, we have $\alpha_{P}^{\mathrm{s}}(\gamma^{\mathrm{r}}) \ge \lambda(N) \frac{\tilde{\alpha}_{\mathrm{lb}}^{\mathrm{c}}(c_j)}{2}$ for all $N\ge N_0$ from~\eqref{eq:aux_uniform_convergence_closed_loop_cooperation_output}, which is a contradiction.
    Hence, $\lim_{N\to\infty} \vert y_{T}^0(\cdot\vert \xi(k)) \vert_{\bar{y}_T(\cdot\vert\xi(k))} = 0$ uniformly.
    Second, assume there exists $c_{\Delta}$ such that for all $N\in\mathbb{N}_0$ the inequality $\vert y_T^0(\cdot\vert \xi(k)) \vert_{y_T^0(\cdot+1\vert \xi(k-1))} \ge c_{\Delta}$ holds for some $k\in\mathbb{N}$. 
    Choose now $c_0$ such that $\tilde{\alpha}_{\mathrm{ub}}^{\mathrm{c}}(c_0) \le \frac{\alpha_{\mathrm{lb}}^{\Delta}(c_{\Delta})}{2}$ with $\alpha_{\mathrm{lb}}^{\Delta}$ from Assumption~\ref{assm:penalty_function}. 
    Then, from~\eqref{eq:aux_uniform_convergence_closed_loop_cooperation_output} and Assumption~\ref{assm:penalty_function}, $\alpha_{P}^{\mathrm{s}}(\gamma^{\mathrm{r}}) \ge \lambda(N)\frac{\alpha_{\mathrm{lb}}^{\Delta}(c_{\Delta})}{2}$, which is also a contradiction.
    Thus, $\lim_{N\to\infty} \sum_{i=1}^{m} V_i^{\Delta}(y_{T,i}^0(\cdot\vert\xi(k)), y_{T,i}^0(\cdot+1\vert\xi(k-1))) = 0$ uniformly.
    Finally, by induction, $\lim_{N\to\infty}y_{T}^0(\cdot\vert\xi(k)) = y_T'(\cdot+k)$ uniformly where $y_T'(\cdot+k) = \lim_{N\to\infty} \bar{y}_T(\cdot+k \vert \xi(0))$ is the closed-loop solution to the cooperative task for $N\to\infty$.
\end{proof}

We established uniform convergence of the cooperation outputs to the closed-loop cooperation output that solves the cooperative task on a fixed set of initial states.
This is the first result \change{in this paper} that relies on the scaling in the objective function of~\eqref{eq:central_OP}.
\change{It} shows that Assumption~\ref{assm:scaling} is an important ingredient for a well-behaved asymptotic performance of the closed-loop system.
Furthermore, we showed a turnpike property that implies that the closed-loop system starting from a set $\mathcal{X}_{\widetilde{N}}$ also stays in a set $\mathcal{X}_P$ for some $P\ge\widetilde{N}$. 

The following assumption is similar to Assumption~\ref{assm:stage_cost_comparison}, except we require a coefficient-free comparison between the stage costs of two references, cf.~\cite[Assm.~4]{MatthiasKohler2023_TransientPerformanceMPC}.
\begin{assum}\label{assm:stage_cost_comparison_with_1}
    There exist $c_3^{\ell_i}, c_4^{\ell_i} > 0$ such that for all $y_T, \hat{y}_T \in \mathcal{Y}_T$, $(x_i, u_i) \in Z_i$ and $\tau\in\mathbb{I}_{0:T-1}$:
    \begin{align}\label{eq:stage_cost_comparison_with_1}
        &\ell_i(x_i, u_i, \hat{r}_{T,i}(\tau)) \le \ell_i(x_i, u_i, r_{T,i}(\tau)) + c_3^{\ell_i} \vert \hat{r}_{T,i}(\tau) \vert_{r_{T,i}(\tau)}^2 \notag\\
        &\phantom{+ c_3^{\ell_i} \vert \hat{r}_{T,i}(\tau) \vert_{r_{T,i}(\tau)}^2} + c_4^{\ell_i} \vert \hat{r}_{T,i}(\tau) \vert_{r_{T,i}(\tau)}.
    \end{align}
\end{assum}

As stated in~\cite[Rem.\ 1]{MatthiasKohler2023_TransientPerformanceMPC}, Assumption~\ref{assm:stage_cost_comparison_with_1} holds for quadratic stage costs on bounded constraint sets $Z_i$ and $\mathcal{Y}_{T,i}$.
\change{As pointed out previously, boundedness of the coupling constraints $\mathcal{C}_i$ is not required for $\mathcal{Y}_{T,i}$ to be bounded.}

Finally, we derive a \change{transient} closed-loop performance bound with respect to the infinite-horizon closed-loop solution of the cooperative task.
\begin{thm}\label{thm:transient_performance_bound}
    Let Assumptions
    \ref{assm:compact_cooperation_sets}--\ref{assm:stage_cost_lower_and_upper_bound},
    \ref{assm:terminal_ingredients} with $c_i^{\mathrm{b}}>0$,
    \ref{assm:tightened_coupling_constraints}--\ref{assm:penalty_function},
    and \ref{assm:stage_cost_comparison_with_1}
    hold, all with $\omega = 2$.
    Then, for any $\widetilde{N}\in\mathbb{N}_0$, there exist $\delta_1, \delta_2 \in \mathcal{L}$, $N' \in \mathbb{N}$, such that for all $x\in\mathcal{X}_{\widetilde{N}}$, $N\ge N'$ and $K\in\mathbb{N}$,
    \begin{align}\label{eq:transient_performance_bound}
        &\mathfrak{J}_K(x, \mu, r_T') \le \hspace{-2em} \inf_{\substack{u\in\mathbb{U}^K(x) \\ x_u(K,x) \in \mathcal{B}_{\kappa}(r_T'(K))}} \hspace{-2em} \mathfrak{J}_K(x, u, r_T') + \delta_1(N) + \delta_2(K) \notag
        \\
        &\hspace{-1em} + \hspace{-2pt}{\sum_{k=0}^{K-1}} {\sum_{i=1}^{m}} c_3^{\ell_i} \vert r_{T,i}^0(0\vert\xi(k))\vert_{r_{T,i}'(k)}^2 \hspace{-2pt}+\hspace{-1pt} c_4^{\ell_i} \vert r_{T,i}^0(0\vert\xi(k))\vert_{r_{T,i}'(k)}
    \end{align}
    holds with $y_T' = \lim_{k\to\infty} \lim_{N\to\infty} y_T^{\mathrm{pr}}(\cdot\vert t_k) \in \mathcal{Y}_T^W$, $t_k = kT$, $k\in\mathbb{N}_0$, and $\kappa = \beta_{\mathrm{s}}(\vert x \vert_{x_T'(0)}, K)$.
\end{thm}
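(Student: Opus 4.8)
We sketch the argument; the full details go to the appendix.

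The plan is to relate the closed-loop cost $\mathfrak{J}_K(x,\mu,r_T')$, measured against the limiting reference, to the closed-loop cost measured against the \emph{current} cooperation references $r_{T,i}^0(0\vert\xi(k))$ that the controller actually uses, then to telescope the Lyapunov-type decrease~\eqref{eq:Lyapunov_decrease_with_costs}, and finally to bound the resulting initial value $V(\xi(0))$ by the standard tracking value function $V_N^{\mathrm{s}}(x,r_T')$ and invoke Proposition~\ref{prop:standard_MPC_performance_bound}. Fix the $\widetilde{N}$ of the statement and set $M=\widetilde{N}$; Lemma~\ref{lem:invariance_and_uniform_convergence} then supplies $P\ge M$ and a horizon threshold $\widehat{N}$ such that, for $N\ge\widehat{N}$, the closed loop starting in $\mathcal{X}_{\widetilde{N}}$ stays in the compact feasible set $\mathcal{X}_P$, so all constants below may be taken uniform over $x\in\mathcal{X}_{\widetilde{N}}$.

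First, for each $k\in\mathbb{I}_{0:K-1}$ and $i\in\mathbb{I}_{1:m}$ apply Assumption~\ref{assm:stage_cost_comparison_with_1} with $\tau=0$, reference $r_{T,i}=r_{T,i}^0(\cdot\vert\xi(k))$ (part of the optimal solution of~\eqref{eq:central_OP}, hence in $\mathcal{Z}_{T,i}$ by Assumption~\ref{assm:unique_corresponding_equilibrium}) and $\hat{r}_{T,i}$ the shift of $r_{T,i}'$ with $\hat{r}_{T,i}(0)=r_{T,i}'(k)$, which is again an admissible cooperation reference since $\mathcal{Y}_T$ is shift-invariant. Summing over $i$ and $k$ and using symmetry of $\vert\cdot\vert_{\cdot}$ yields $\mathfrak{J}_K(x,\mu,r_T')\le\sum_{k=0}^{K-1}\sum_{i=1}^{m}\ell_i(x_i(k),\mu_i(\xi(k)),r_{T,i}^0(0\vert\xi(k)))$ plus precisely the last double sum of~\eqref{eq:transient_performance_bound}. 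For the remaining term, drop the non-negative penalties $\lambda(N)V_i^{\Delta}$ (non-negative by Assumption~\ref{assm:penalty_function}) in~\eqref{eq:Lyapunov_decrease_with_costs} to obtain $\sum_{i=1}^{m}\ell_i(x_i(t),\mu_i(\xi(t)),r_{T,i}^0(0\vert\xi(t)))\le V(\xi(t))-V(\xi(t+1))$, sum over $t=0,\dots,K-1$, and use $V\ge0$ — which holds because $J_i^{\mathrm{tr}},V_i^{\Delta}\ge0$ and $\lambda(N)W^{\mathrm{c}}(y_T)\ge\lambda(N)W_0^{\mathrm{c}}=W_N^{\mathrm{c}}$, so $\mathcal{J}(\xi)\ge W_N^{\mathrm{c}}$. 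This gives $\sum_{k=0}^{K-1}\sum_{i=1}^{m}\ell_i(\cdots)\le V(\xi(0))$ and reduces the claim to $V(\xi(0))\le V_N^{\mathrm{s}}(x,r_T')+\delta_1(N)+\delta_2(K)$.

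It remains to bound $V(\xi(0))$, which is the crux. The key sub-claim is that the \emph{particular} closed-loop limit $r_T'$ is reachable from every $x\in\mathcal{X}_{\widetilde{N}}$ for the standard tracking problem~\eqref{eq:standard_MPC_problem} within a uniform horizon, i.e.\ $x\in\mathbb{X}_{\widehat{N}}^{\mathrm{s}}(r_T')$ after possibly enlarging $\widehat{N}$. To see this, combine Lemma~\ref{lem:invariance_and_uniform_convergence}(2), which gives $r_T^0(\cdot\vert\xi(0))\to r_T'$ uniformly on $\mathcal{X}_{\widetilde{N}}$ as $N\to\infty$, with the turnpike estimate from the proof of Lemma~\ref{lem:invariance_and_uniform_convergence}: for $N$ large the horizon-$N$ optimizer of~\eqref{eq:central_OP} steers $x$ into $\mathcal{B}_{\min_i c_i^{\mathrm{b}}}(x_T'(N))\subseteq\prod_i\mathcal{X}_i^{\mathrm{f}}(r_{T,i}'(N))$ by~\eqref{eq:terminal_non_empty_interior}, using continuity of $V_i^{\mathrm{f}}$ and $\mathcal{X}_i^{\mathrm{f}}$ in the reference, while Lemma~\ref{lem:uniform_reachability} provides the complementary uniform reachability of $\mathcal{Y}_T^W$ that makes the construction independent of the arbitrary $y_T^{\mathrm{pr}}(\cdot\vert0)$. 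Given reachability, the pair $(u_{\mathrm{s}}^0(\cdot\vert x,r_T'),y_T')$ is feasible in~\eqref{eq:central_OP} at $\xi(0)$, where $V_i^{\Delta}$ is omitted, and since $y_T'\in\mathcal{Y}_T^W$ gives $\lambda(N)W^{\mathrm{c}}(y_T')=W_N^{\mathrm{c}}$, this candidate yields $\mathcal{J}(\xi(0))\le V_N^{\mathrm{s}}(x,r_T')+W_N^{\mathrm{c}}$, i.e.\ $V(\xi(0))\le V_N^{\mathrm{s}}(x,r_T')$. Applying Proposition~\ref{prop:standard_MPC_performance_bound} with $r_T=r_T'$ — valid for $N\ge N'$ once $N'$ is taken at least as large as $\widehat{N}$ and the horizon threshold of that proposition — together with $\kappa=\beta_{\mathrm{s}}(\vert x\vert_{x_T'(0)},K)$ closes the chain, any additional $\mathcal{L}$-term from the reachability step being absorbed into $\delta_1$. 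I expect this reachability of the limiting reference to be the main obstacle: unlike in ordinary MPC for tracking, $r_T'$ is not prescribed but emerges from the closed loop, so one cannot simply posit controllability to it; one must convert the state-dependent reachability of $\mathcal{Y}_T^W$ from Lemma~\ref{lem:uniform_reachability} and the uniform convergence $r_T^0(\cdot\vert\xi(0))\to r_T'$ from Lemma~\ref{lem:invariance_and_uniform_convergence} — both resting on the scaling $\lambda(N)\ge N$ of Assumption~\ref{assm:scaling} — into reachability of this specific periodic trajectory, uniformly over the initial set, which additionally relies on continuity of the terminal cost and terminal set in the reference.
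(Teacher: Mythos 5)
Your proposal is correct and follows essentially the same route as the paper's proof: switch references in the stage cost via Assumption~\ref{assm:stage_cost_comparison_with_1}, telescope~\eqref{eq:Lyapunov_decrease_with_costs} using $V\ge 0$, establish feasibility of $(u',y_T')$ at $\xi(0)$ for large $N$ by combining the uniform convergence from Lemma~\ref{lem:invariance_and_uniform_convergence} with the reachability construction of Lemma~\ref{lem:uniform_reachability} and a triangle inequality into the terminal region of $r_T'$, bound $V(\xi(0))\le V_N^{\mathrm{s}}(x,r_T')$ via the standard tracking solution as a candidate (with $V_i^{\Delta}$ omitted at $t=0$), and conclude with Proposition~\ref{prop:standard_MPC_performance_bound}. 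The step you flag as the crux --- reachability of the emergent $r_T'$ --- is indeed handled in the paper exactly as you describe.
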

\begin{proof}
    Let $\widetilde{N}\in\mathbb{N}_0$ and $x\in\mathcal{X}_{\widetilde{N}}$.
    We only need to show~\eqref{eq:transient_performance_bound} for sufficiently large $N$ and $K$ as in the proof of Proposition~\ref{prop:standard_MPC_performance_bound}.
    From Lemma~\ref{lem:invariance_and_uniform_convergence}, $\lim_{N\to\infty} \vert y_T^0(\cdot\vert\xi(0)) \vert_{y_T'} = 0$ uniformly.
    Hence, there exists $N_{\mathrm{b}}'$ such that $\vert y_T^0(\cdot\vert\xi(0)) \vert_{y_T'} \le \frac{\min_i c_i^{\mathrm{b}}}{2L_x}$ with $L_{x} = \max_i L_{x,i}$ holds for all $N\ge N_{\mathrm{b}}'$.
    From Lemma~\ref{lem:uniform_reachability}'s proof, we know that there exists $N'\ge N_{\mathrm{b}}'$, $\tau'\in\mathbb{I}_{0:T-1}$, and $u^{\mathrm{b}} \in \mathbb{U}^{N'}(x)$ such that $\vert x_{i,u_i^{\mathrm{b}}}(N',x) \vert_{x_{T,i}^0(\tau'\vert\xi(0))} \le \frac{c_i^{\mathrm{b}}}{2}$.
    Thus, with Assumption~\ref{assm:unique_corresponding_equilibrium}, $\vert x_{i,u_i^{\mathrm{b}}}(N',x) \vert_{x_{T,i}'(\tau'\vert\xi(0))} \le \vert x_{i,u_i^{\mathrm{b}}}(N',x) \vert_{x_{T,i}^0(\tau'\vert\xi(0))} + \vert x_{T,i}^0(\tau'\vert\xi(0)) \vert_{x_{T,i}'(\tau'\vert\xi(0))} \le \frac{c_i^{\mathrm{b}}}{2} + L_x \vert y_T^0(\cdot\vert\xi(0)) \vert_{y_T'} \le c_i^{\mathrm{b}}$.
    Therefore, there exists $N' \in \mathbb{N}$ and $u'\in\mathbb{U}^N(x)$ such that $(u', y_T')$ is a feasible candidate in~\eqref{eq:central_OP} for all $N\ge N'$ and $\xi(0)$ with $x\in\mathbb{X}_{\widetilde{N}}$. 
    Consequently,
    \begin{align*}
        &\mathfrak{J}_K(x, \mu, r_T') = \sum_{k=0}^{K-1}\sum_{i=1}^{m} \ell_i(x_{i,\mu_i}(k, x_i), \mu_i(k), r_{T,i}'(k))
        \\
        &\stackrel{\eqref{eq:stage_cost_comparison_with_1}}{\le} \smash[t]{\sum_{k=0}^{K-1}\sum_{i=1}^{m}} \Big( \ell_i(x_{i,\mu_i}(k, x_i), \mu_i(k), r_{T,i}^0(0\vert\xi(k)))
        \\
        &\hspace{2em} + c_3^{\ell_i} \vert r_{T,i}^0(0\vert\xi(k)) \vert_{r_{T,i}'(k)}^2 + c_4^{\ell_i} \vert r_{T,i}^0(0\vert\xi(k)) \vert_{r_{T,i}'(k)}\Big)
        \\
        &\smash[t]{\stackrel{\substack{\eqref{eq:penalty_cooperation_change_distance}\\\eqref{eq:Lyapunov_decrease_with_costs}}}{\le}} V(\xi(0)) - V(\xi(K)) 
        \\
        &\hspace{0.5em} + \sum_{k=0}^{K-1}\sum_{i=1}^{m} c_3^{\ell_i} \vert r_{T,i}^0(0\vert\xi(k)) \vert_{r_{T,i}'(k)}^2 + c_4^{\ell_i} \vert r_{T,i}^0(0\vert\xi(k)) \vert_{r_{T,i}'(k)}.
    \end{align*}
    Since at $\xi(0)$,~\eqref{eq:central_OP} is solved without taking $y_T^{\mathrm{pr}}$ and $V_i^{\Delta}$ into account, and there exists a feasible candidate solution $(u', y_T')$ as outlined above, the solution of~\eqref{eq:standard_MPC_problem} is also a feasible candidate solution, i.e. $(u_{\mathrm{s}}^0(\cdot\vert x, r_T'), y_T')$, and $V(\xi(0)) \le V_N^{\mathrm{s}}(x, y_T')$.
    Note that $V(\xi(k)) \ge 0$.
    Hence, $\mathfrak{J}_K(x, \mu, r_T') \le  V_N^{\mathrm{s}}(x, y_T') + \sum_{k=0}^{K-1}\sum_{i=1}^{m}c_3^{\ell_i} \vert r_{T,i}^0(0\vert\xi(k)) \vert_{r_{T,i}'(k)}^2 + \sum_{k=0}^{K-1}\sum_{i=1}^{m} c_4^{\ell_i} \vert r_{T,i}^0(0\vert\xi(k)) \vert_{r_{T,i}'(k)}$,
    and Proposition~\ref{prop:standard_MPC_performance_bound} entails the claimed inequality.
\end{proof}

Compared to the performance bound~\eqref{eq:standard_MPC_performance_bound} for a standard MPC scheme, the derived transient performance bound~\eqref{eq:transient_performance_bound} contains error terms that depend on the optimal cooperation outputs in each time step, the shape of the stage cost, and the shape of the set of admissible cooperation outputs.
This is a similar structure to the one in~\cite[Thm. 2]{MatthiasKohler2023_TransientPerformanceMPC} as expected, since we adapted the analysis of~\cite{MatthiasKohler2023_TransientPerformanceMPC}.

The transient performance bound~\eqref{eq:transient_performance_bound} enables the derivation of an asymptotic performance bound, i.e. for $K\to\infty$ and $N\to\infty$.
Here, the error terms in~\eqref{eq:transient_performance_bound} depending on $K$ and $N$ vanish.
We exploit exponential convergence and the uniform convergence of the cooperation outputs to show this.
This results in the following asymptotic performance bound.
\change{%
Define the set of inputs that have the generated state sequence converge to a periodic reference $r_T$ as
\begin{align*}
    \mathbb{U}_{\{r_T\}}^{\infty}(x) = \{&u \in \mathbb{U}^{\infty}(x) \mid \lim_{k\to\infty} \vert x_u(t_k + \tau, x) \vert_{x_T(t_k + \tau)} = 0,
    \\
    &t_{k+1} = t_k + T,\; t_0 = 0, \; \forall \tau \in \mathbb{I}_{0:T-1}\}.
\end{align*}
}
\begin{thm}\label{thm:asymptotic_performance_bound}
    Let Assumptions~\ref{assm:compact_cooperation_sets}--\ref{assm:stage_cost_lower_and_upper_bound},~\ref{assm:terminal_ingredients} with $c_i^{\mathrm{b}}>0$,~\ref{assm:tightened_coupling_constraints}--\ref{assm:penalty_function}, and~\ref{assm:stage_cost_comparison_with_1}
    hold with $\omega=2$.
    Moreover, assume the conditions in Theorem~\ref{thm:exponential_stability} are satisfied.
    Then, for any $x\in\mathcal{X}_{\widetilde{N}}$ with $\widetilde{N}\in\mathbb{N}_0$,
    \begin{equation}\label{eq:asymptotic_performance_bound}
        \lim_{N\to\infty}\mathfrak{J}_\infty(x, \mu, r_T') = \inf_{\change{u \in \mathbb{U}_{\{r_T'\}}^{\infty}(x)}} \mathfrak{J}_{\infty}(x, u, r_T').
    \end{equation}
\begin{proof}
    First, we show that the series terms in~\eqref{eq:transient_performance_bound} converge for $K\to\infty$ due to exponential convergence of the closed-loop system.
    Let $N\in \mathbb{N}_0$ and $x\in\mathcal{X}_N$.
    From exponential stability (Theorem~\ref{thm:exponential_stability}), there exist $a_{\mathrm{e}}>0$ and $b_{\mathrm{e}}\in(0,1)$ such that the inequality $\vert \xi_T(k) \vert_{\Xi_T^W} \le a_{\mathrm{e}} \vert \xi_T(0) \vert_{\Xi_T^W} b_{\mathrm{e}}^{k}$ holds for all $k\in\mathbb{N}_0$.
    Thus, with $L = \max_i(L_{x,i}, L_{u,i})$ from Assumption~\ref{assm:unique_corresponding_equilibrium}, $\vert r_{T}^0(0\vert \xi(k)) \vert_{r_{T}'(k)} \le L\vert y_{T}^0(\cdot\vert \xi(k)) \vert_{y_{T}'} = L\vert y_{T}^{\mathrm{pr}}(\cdot-1 \vert k+1) \vert_{y_{T}'} \le L \sum_{\tau=k+1}^{T+k+1} \vert y_{T}^{\mathrm{pr}}(\cdot \vert \tau) \vert_{y_{T,\tau+1}^W} \le L \vert \xi_T(k+1) \vert_{\Xi_T^W} \le L a_{\mathrm{e}} \vert \xi_T(0) \vert_{\Xi_T^W} b_{\mathrm{e}}^{k+1}$,
    where we define $y_{T, k}^{\change{W}} = y_{T, k - (T+1)\floor*{\frac{k - 1}{T+1}}}^{\change{W}}$ for a well-defined sum after the second inequality.
    This involves a modulo operation with an offset of one to ensure consistent indexing.
    Then, $\sum_{k=0}^{\infty} \sum_{i=1}^{m} c_3^{\ell_i} \vert r_{T}^0(0\vert \xi(k)) \vert_{r_{T}'(k)}^2 \le \sum_{k=0}^{\infty} c_3^{\ell} L^2 a_{\mathrm{e}}^2 \vert \xi_T(0) \vert_{\Xi_T^W}^2 b_{\mathrm{e}}^{2(k+1)} \le \frac{c_3^{\ell} L^2 a_{\mathrm{e}}^2 \vert \xi_T(0) \vert_{\Xi_T^W}^2}{1-b_{\mathrm{e}}}$
    with $c_3^{\ell} = \max_i c_3^{\ell_i}$ and because $b_{\mathrm{e}}^{2(k+1)} \le b_{\mathrm{e}}^{k}$.
    Similarly, $\sum_{k=0}^{\infty} \sum_{i=1}^{m} c_4^{\ell_i} \vert r_{T}^0(0\vert \xi(k)) \vert_{r_{T}'(k)} \le \frac{c_4^{\ell} L a_{\mathrm{e}} \vert \xi_T(0) \vert_{\Xi_T^W}}{1-b_{\mathrm{e}}}$ with $c_4^{\ell} = \max_i c_4^{\ell_i}$.
    Hence, because $Z_i$ is bounded, both series converge.

    Now, we combine all parts to show that the error terms in the transient bound~\eqref{eq:transient_performance_bound} vanish for $K\to\infty$ and $N\to\infty$.
    Since there exists $P\ge\widetilde{N}$ such that $\lim_{N\to\infty} y_{T}^0(\cdot \vert \xi(k)) = y_T'(\cdot+k)$ uniformly on $\mathcal{X}_P$ due to Lemma~\ref{lem:invariance_and_uniform_convergence}, we have
    \begin{align*}
        &\lim_{N\to\infty} {\sum_{k=0}^{\infty}} c_3^{\ell_i} \hspace{-1pt}\vert r_{T,i}^0(0\vert\xi(k))\vert_{r_{T,i}'\hspace{-1pt}(k)}^2 \hspace{-2.5pt}+ \hspace{-2pt}c_4^{\ell_i}\hspace{-1pt} \vert r_{T,i}^0(0\vert\xi(k))\vert_{r_{T,i}'\hspace{-1pt}(k)}
        \\
        &={\sum_{k=0}^{\infty}} \hspace{-2pt}\lim_{N\to\infty} \hspace{-1pt} c_3^{\ell_i} \hspace{-1pt}\vert r_{T,i}^0(0\vert\xi(k))\vert_{r_{T,i}'\hspace{-1pt}(k)}^2 \hspace{-2.5pt}+ \hspace{-2pt}c_4^{\ell_i}\hspace{-1pt} \vert r_{T,i}^0(0\vert\xi(k))\vert_{r_{T,i}'\hspace{-1pt}(k)} \hspace{-2pt}=\hspace{-2pt} 0
    \end{align*}
    Hence, for any $x\in\mathcal{X}_{\widetilde{N}}$ with $\widetilde{N} \in \mathbb{N}_0$, the inequality
    $\lim_{N\to\infty}\mathfrak{J}_\infty(x, \mu, r_T') \le \inf_{\change{u\in\mathbb{U}_{\{r_T'\}}^{\infty}(x)}} \mathfrak{J}_{\infty}(x, u, r_T')$ follows from \eqref{eq:transient_performance_bound}.
    The other direction, i.e. $\lim_{N\to\infty}\mathfrak{J}_\infty(x, \mu, r_T') \ge \inf_{\change{u \in \mathbb{U}_{\{r_T'\}}^{\infty}(x)}} \mathfrak{J}_{\infty}(x, u, r_T')$
    follows from \change{stability of the closed-loop system (Theorem \ref{thm:exponential_stability}) and} optimality.
\end{proof}
\end{thm}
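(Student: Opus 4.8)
The plan is to pass to the limits $K\to\infty$ and then $N\to\infty$ in the transient performance bound~\eqref{eq:transient_performance_bound} and to argue that every term on its right-hand side except the comparison infimum disappears. Fix $x\in\mathcal{X}_{\widetilde{N}}$. By Theorem~\ref{thm:transient_performance_bound} there exist $\delta_1,\delta_2\in\mathcal{L}$ and $N'\in\mathbb{N}$ so that~\eqref{eq:transient_performance_bound} holds for all $N\ge N'$ and $K\in\mathbb{N}$, with $r_T'$ the eventual closed-loop solution of the cooperative task. Since $\delta_1,\delta_2\in\mathcal{L}$, the term $\delta_2(K)$ vanishes as $K\to\infty$ and $\delta_1(N)$ vanishes as $N\to\infty$, so the only nontrivial contribution is the series $\sum_{k=0}^{K-1}\sum_{i=1}^m\big(c_3^{\ell_i}\vert r_{T,i}^0(0\vert\xi(k))\vert_{r_{T,i}'(k)}^2 + c_4^{\ell_i}\vert r_{T,i}^0(0\vert\xi(k))\vert_{r_{T,i}'(k)}\big)$, which I must show is summable and tends to $0$ as $N\to\infty$.

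To control that series I would invoke exponential stability of $\Xi_T^W$ from Theorem~\ref{thm:exponential_stability}: there are $a_{\mathrm{e}}>0$ and $b_{\mathrm{e}}\in(0,1)$ with $\vert\xi_T(k)\vert_{\Xi_T^W}\le a_{\mathrm{e}}\vert\xi_T(0)\vert_{\Xi_T^W}b_{\mathrm{e}}^k$. Using the Lipschitz dependence of $r_{T,i}$ on $y_{T,i}$ (Assumption~\ref{assm:unique_corresponding_equilibrium}, Remark~\ref{rem:unique_corresponding_equilibrium}) and the identity $y_T^{\mathrm{pr}}(\cdot\vert k+1)=y_T^0(\cdot+1\vert\xi(k))$, I would bound $\vert r_T^0(0\vert\xi(k))\vert_{r_T'(k)}$ by a constant multiple of $\vert\xi_T(k+1)\vert_{\Xi_T^W}$, hence by a geometrically decaying sequence; this requires re-indexing the stacked periodic trajectories $y_{T,k}^W$ via a modulo operation so that the sum over one period is correctly assembled. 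Because the closed-loop trajectory remains in a fixed compact set by Lemma~\ref{lem:invariance_and_uniform_convergence} and $\mathcal{Y}_T$ is compact, $\vert\xi_T(0)\vert_{\Xi_T^W}$ is bounded uniformly in $N$, which yields both convergence of the infinite series and a summable dominating sequence independent of $N$; dominated convergence then permits exchanging $\lim_{N\to\infty}$ with $\sum_k$. Since $\lim_{N\to\infty}y_T^0(\cdot\vert\xi(k))=y_T'(\cdot+k)$ uniformly on $\mathcal{X}_P$ by Lemma~\ref{lem:invariance_and_uniform_convergence}, each term of the series tends to $0$, so the series does too, giving $\lim_{N\to\infty}\mathfrak{J}_\infty(x,\mu,r_T')\le\inf_{u\in\mathbb{U}_{\{r_T'\}}^{\infty}(x)}\mathfrak{J}_\infty(x,u,r_T')$.

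For the reverse inequality I would note that exponential stability (Theorem~\ref{thm:exponential_stability}) forces $\vert x_\mu(t_k+\tau,x)\vert_{x_T'(t_k+\tau)}\to 0$, so the closed-loop input $\mu$ itself lies in $\mathbb{U}_{\{r_T'\}}^{\infty}(x)$ and is therefore admissible in the infimum; hence $\mathfrak{J}_\infty(x,\mu,r_T')\ge\inf_{u\in\mathbb{U}_{\{r_T'\}}^{\infty}(x)}\mathfrak{J}_\infty(x,u,r_T')$ for every finite $N$, and this persists in the limit. Combining the two directions yields~\eqref{eq:asymptotic_performance_bound}.

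I expect the main obstacle to be the rigorous exchange of the limit $N\to\infty$ with the infinite sum over $k$: it hinges on producing a summable bound on the stage-cost comparison terms that is uniform in $N$, which is exactly what the geometric decay from Theorem~\ref{thm:exponential_stability} combined with the Lipschitz reference--output link of Assumption~\ref{assm:unique_corresponding_equilibrium} supplies. A secondary subtlety is the bookkeeping relating the shifted previous cooperation output $y_T^{\mathrm{pr}}(\cdot\vert k)$ to the $T$-fold stacked periodic trajectory encoded in $\Xi_T^W$, where a modulo-with-offset indexing is needed for the per-period geometric estimate to go through cleanly.
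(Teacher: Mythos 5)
Your proposal is correct and follows essentially the same route as the paper's proof: the transient bound of Theorem~\ref{thm:transient_performance_bound}, geometric decay of $\vert r_T^0(0\vert\xi(k))\vert_{r_T'(k)}$ from exponential stability combined with the Lipschitz link of Assumption~\ref{assm:unique_corresponding_equilibrium} (including the modulo-with-offset indexing), uniform convergence from Lemma~\ref{lem:invariance_and_uniform_convergence} to exchange the limit with the sum, and optimality plus admissibility of $\mu$ in $\mathbb{U}_{\{r_T'\}}^{\infty}(x)$ for the reverse inequality. Your explicit appeal to dominated convergence is a slightly more careful justification of the limit--sum exchange than the paper spells out, but the argument is the same.
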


The asymptotic performance bound~\eqref{eq:asymptotic_performance_bound} proves that the proposed scheme is able to recover infinite optimal performance for $N\to\infty$, with additional advantages, e.g. a larger region of attraction, and the important flexibility of providing an emerging solution to the cooperative task by optimized cooperation instead of having to specify one \emph{a priori}.

\section{Numerical examples}
In this section, we provide two additional examples that complement the first example in Section~\ref{ssec:satellite_example}.
All three examples were implemented in Python using CasADi~\cite{Andersson2019}, IPOPT~\cite{Waechter2005}, CVXPY~\cite{Diamond2016_CVXPYPythonembeddedmodeling}, and Gurobi.
In all simulations, a (suboptimal) solution of~\eqref{eq:central_OP} was obtained using the decentralized sequential quadratic programming scheme presented in~\cite{Stomberg2024}.
The implementation is available at~\cite{public_code}.

\subsection{Crossing a narrow path}
First, we consider two agents that must keep a safe distance from each other, but also need to both cross a narrow pathway.
We illustrate how the cooperation objective function can be designed to avoid getting stuck in the pathway.
Consider two agents with simple double integrator dynamics with a two-dimensional position.
The agents start on opposite sides of a narrow pathway and want to reach the other's initial position.
In addition, they must not get closer than 0.8, and hence cannot pass each other in the narrow pathway.
We use the Pseudo-Huber loss function $L_{\delta}(a) = \delta^2 \big(\sqrt{1 + \frac{a^2}{\delta^2}} - 1\big)$ since it approximates $\Vert a \Vert$ for large values of $a$.
We choose $W_1^{\mathrm{c}}(y_{T,1}) = 2000 L_{0.01}((y_{T,1,1} - 20)^2 + y_{T,1,2}^2)$ and $W_2^{\mathrm{c}}(y_{T,2}) = 1000 L_{0.01}((y_{T,2,1} + 20)^2 + y_{T,2,2}^2)$.
In addition, $V_i^{\Delta}(y_{T,i}, y_{T,i}^{\mathrm{pr}}) = \frac{1}{10^4} \Vert y_{T,i} - y_{T,i}^{\mathrm{pr}} \Vert^2$.
The result for $N=20$ and $T=1$ can be seen in Figure~\ref{fig:bridge}.
Once the agents meet in the narrow pathway, they cannot pass each other, so the agent with the larger weight on the cooperation cost function (Agent 1) pushes the other out.
The cooperative task succeeds since our cooperation objective function satisfies Assumption~\ref{assm:better_cooperation_candidate}.

\change{%
Note that the agents would be stuck in the narrow pathway if a quadratic cooperation objective function is chosen. 
Namely, at some point the agents could not pass each other, as there always exists a local minimum where neither agent benefits from moving, since a reduction in one agent's cooperative cost would not offset the increase in the other's. Overcoming this would require carefully tuned weights that depend on the length of the narrow pathway. 
This is not the case for $L_{\delta}$, which can be tuned to be approximately linear outside a neighbourhood of $a=0$ corresponding to the terminal region.
Since the derivative of a linear cost is constant, the local minimum is eliminated, ensuring success independent of the length of the narrow pathway.
}

\begin{figure}[tb]
    \setlength\axisheight{0.45\linewidth}
    \setlength\axiswidth{0.95\linewidth}
    \centering
    \input{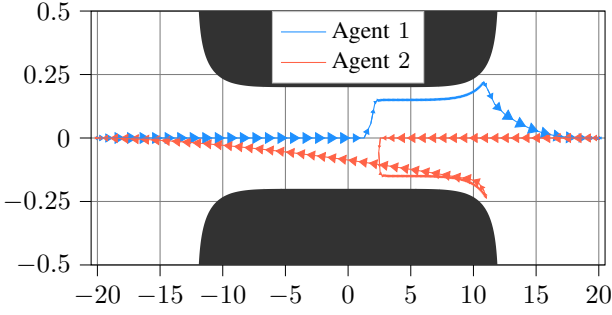}
    \caption{Two agents exchanging positions through a narrow pathway that they cannot cross simultaneously due to collision avoidance constraints. The shaded area indicates the boundary of the narrow pathway. The triangles indicate the direction of travel and velocity. The agents slow down while the collision avoidance constraint is active due to a suboptimal solution returned by the employed decentralized optimization algorithm.}
    \label{fig:bridge}
\end{figure}%

\subsection{Synchronization and flocking}
In this example, we consider a multi-agent system comprised of $m=4$ quadrotors with dynamics
\change{%
\begin{align*}
    \dot{x}_{i,1}(t) &= x_{i,6}(t),
    &
    \dot{x}_{i,6}(t) &= 9.81 \tan(x_{i,4}(t)),
    \\
    \dot{x}_{i,2}(t) &= x_{i,7}(t),
    &
    \dot{x}_{i,7}(t) &= 9.81 \tan(x_{i,5}(t)),
    \\
    \dot{x}_{i,3}(t) &= x_{i,8}(t),
    & 
    \dot{x}_{i,8}(t) &= 0.91 u_{i,3}(t) \hspace{-2pt}-\hspace{-2pt} 9.81,
    \\
    \dot{x}_{i,4}(t) &=  x_{i,9}(t) \hspace{-2pt}- \hspace{-2pt}8 x_{i,4}(t),
    & 
    \dot{x}_{i,9}(t) &= 10 (u_{i,1}(t) \hspace{-2pt}-\hspace{-2pt} x_{i,4}(t)),
    \\
    \dot{x}_{i,5}(t) &=  x_{i,10}(t) \hspace{-2pt}-\hspace{-2pt} 8 x_{i,5}(t),
    &
    \dot{x}_{i,10}(t) &= 10 (u_{i,2}(t) \hspace{-2pt}-\hspace{-2pt} x_{i,5}(t)),
\end{align*}
}
adapted from~\cite{Hu2018}.
We discretize the dynamics using the Euler method with a step-size of $h = \SI{0.1}{\second}$.
Terminal costs and constraints are computed offline following~\cite{JKoehler2020a}.
As the output, we choose the position of the quadrotors, i.e., $y_{i,1} = x_{i,1}$, $y_{i,2} = x_{i,2}$, and $y_{i,3} = x_{i,3}$.
Here, $x_{i,3}$ is the quadrotor's altitude.
We impose as constraints: $\change{\vert} x_{i,k} \change{\vert} \le 21$ for $k\in\mathbb{I}_{1:3}$, $\change{\vert} x_{i,k} \change{\vert} \le \frac{\pi}{4}$ for $k\in\mathbb{I}_{4:5}$, $\change{\vert} x_{i,k} \change{\vert} \le 2$ for $k\in\mathbb{I}_{6:8}$, $\change{\vert} x_{i,k} \change{\vert} \le 3$ for $k\in\mathbb{I}_{9:10}$, $\change{\vert} u_{i,k} \change{\vert} \le \frac{\pi}{9}$ for $k\in\mathbb{I}_{1:2}$, and $0 \le u_{i,3} \le 19.62$.
Constraints on the cooperation references are tightened to: 
$\change{\vert}x_{T,i,k} \change{\vert} \le 20.95$ for $k\in\mathbb{I}_{1:3}$, $\change{\vert} x_{T,i,k} \change{\vert} \le 0.75$ for $k\in\mathbb{I}_{4:5}$, $\change{\vert} x_{T,i,k} \change{\vert} \le 1.95$ for $k\in\mathbb{I}_{6:8}$, $\change{\vert} x_{T,i,k} \change{\vert} \le 2.9$ for $k\in\mathbb{I}_{9:10}$, $\change{\vert} u_{T,i,k} \change{\vert} \le 0.3$ for $k\in\mathbb{I}_{1:2}$, and $0.05 \le u_{T,i,3} \le 19.5$.

We aim to illustrate that the proposed scheme is flexible with respect to switches in the cooperative objective function, and it finds a solution to the cooperative task as well as possible despite conflicting objectives.
Until $t=349$, the cooperative task will be for the agents to converge to a trajectory that follows a circle.
The quadrotors should agree on the circle's radius, centre and altitude.
Beginning at $t=350$, the first agent should follow an externally provided reference signal, whereas the other agents should converge to the position of the first agent, i.e. follow it to achieve output consensus.
However, at all times, the quadrotors must maintain a minimum distance of $\SI{0.4}{\metre}$, which conflicts with the desire for consensus.
Hence, the coupling constraint is defined as $\mathcal{C}_i = \{(y_{T,i}, y_{T,\mathcal{N}_i}) \mid \Vert y_{T,i} - x_{T,j} \Vert^2 \ge 0.4 \quad\forall i\in\mathcal{N}_j\}$.

We choose $N = 10$, $T = 50$, $V_i^{\Delta}(y_{T,i}(\cdot\vert t), y_{T,i}^{\mathrm{pr}}(\cdot \vert t)) = \frac{1}{10^4T} \sum_{t=0}^{T-1}\Vert y_{T,i}(\tau \vert t) - y_{T,i}^{\mathrm{pr}}(\tau \vert t) \Vert^2$, and a quadratic stage cost where $Q$ and $R$ are diagonal matrices with \change{diagonals} $[0.5, 0.5, 0.5, 0.375, 0.375, 0.25, 0.25, 0.25, 0.125, 0.125]$ and $[0.25, 0.25, 0.005]$.

For $t \le 349$, we augment the cooperation output by the parametric decision variables $1 \le y_{T,i}^{\mathrm{r}} \le 2$ for the circle's radius and $y_{T,i}^{\mathrm{c}}$ for the circle's centre.
Let $\tilde{y}_{T,i} = (y_{T,i}, y_{T,i}^{\mathrm{r}}, y_{T,i}^{\mathrm{c}})$.
Then, we define 
\begin{align*}
    &W^{\mathrm{c}}_i(\tilde{y}_{T,i}(\cdot\vert t), \tilde{y}_{T,\mathcal{N}_i}(\cdot\vert t)) \\
    &= \sum_{\tau=0}^{T-1} \frac{1}{T}\bigg(\Big( y_{T,i,1}(\tau \vert t) - y_{T,i}^{\mathrm{r}}(t)\cos\left(\frac{2\pi \tau}{T} + (i-1)\frac{45\pi}{180} \right) 
    \\
    &\hspace{4em} - y_{T,i,1}^{\mathrm{c}}(t)\Big)^2
    + \Big(y_{T,i,2}(\tau \vert t) - y_{T,i,1}^{\mathrm{c}}(t) \\
    &\hspace{4em} - y_{T,i}^{\mathrm{r}}(t)\sin\left(\frac{2\pi \tau}{T} + (i-1)\frac{45\pi}{180}\right) \Big)^2 \bigg)\\
    &\hspace{1em} + \sum_{\tau=0}^{T-1} \frac{1}{10T}\sum_{j=1}^m \Vert y_{T,i,3}(\tau \vert t) - y_{T,j,3}(\tau \vert t) \Vert^2 
    \\
    &\hspace{1em} - y_{T,i}^{\mathrm{r}} + \sum_{j=1}^m \Vert y_{T,i}^{\mathrm{c}} - y_{T,j}^{\mathrm{c}}\Vert^2 + \Vert y_{T,i}^{\mathrm{r}} - y_{T,j}^{\mathrm{r}}\Vert^2.
\end{align*}
The first two terms reward trajectories that follow a circle with an agent-specific phase, the third rewards consensus on the altitude, the fourth pushes for a large radius, and the remaining terms reward consensus \change{on} the circle's centre and radius.

For $t \ge 350$, we switch to $N = 30$ and $T = 1$, and use
\begin{align*}
    &W^{\mathrm{c}}_1(y_{T,1}(\cdot\vert t)) = \Vert y_{T,1}( 0 \vert t) - y_r(t) \Vert^2, \\
    &W^{\mathrm{c}}_i(y_{T,i}(\cdot\vert t)) = \Vert 
    y_{T,i}(0\vert t) - y_{T,1}(0\vert t) \Vert_{\change{G}}^2, \quad i > 1,
\end{align*}
\change{with a diagonal matrix $G$ with diagonal $[1, 1, 0.1]$.}
The external reference $y_r$ is defined as $y_r(t) = \bigl(-10 + 20\frac{t-350}{350}\bigr)[1,\,1,\,0]^\top.$

The simulation results are depicted in Figure~\ref{fig:quadrotor_first_task} for the first phase and in Figure~\ref{fig:quadrotor_second_task} for the second phase.
At all times, the quadrotors are further than \SI{0.4}{\metre} apart.

\begin{figure}[tb]
    \setlength\axisheight{0.35\linewidth}
    \setlength\axiswidth{0.95\linewidth}
    \centering
    \begin{tikzpicture}
    \definecolor{vivid_blue}{HTML}{1E90FF}  
    \definecolor{bright_green}{HTML}{32CD32}  
    \definecolor{vibrant_orange}{HTML}{FFA500}  
    \definecolor{tomato_red}{HTML}{FF6347}  
    \definecolor{medium_purple}{HTML}{9370DB}  
    \definecolor{calm_teal}{HTML}{20B2AA}  
    
    \begin{groupplot}[
      group style={
        group name=my plots,
        group size=1 by 3,
        vertical sep=0.25cm
      },
      height=\axisheight,
      width=\axiswidth,
      legend cell align={left},
      legend style={
          at={(1,1)},
          anchor=north east,
          draw=lightgray,
          fill opacity=0.8, 
          draw opacity=1,
          text opacity=1,
          font=\small
      },
      minor x tick num=4,
      minor y tick num=0,
      tick align=outside,
      tick pos=left,
      x grid style={gray},
      xmajorgrids,
      xmin=-1, 
      xmax=350,
      xtick distance=50,
      xtick style={color=black},
      y grid style={gray},
      ymajorgrids,
      ytick style={color=black},
      label style={font=\small},
      tick label style={font=\footnotesize},
    ]
    
    \nextgroupplot[ylabel={$x_{i,1}$ in \si{\metre}}, xticklabels={}]
    
    \addplot [vivid_blue, line width=0.35mm]
    table [col sep=space] {./plotdata/quadrotor/A1_x1_phase1.tex};
    \addlegendentry{$x_{1,1}$}
    
    \addplot [bright_green, dash pattern=on 8pt off 4pt, line width=0.35mm]
    table [col sep=space] {./plotdata/quadrotor/A2_x1_phase1.tex};
    \addlegendentry{$x_{2,1}$}
    
    \addplot [vibrant_orange, dash pattern=on 6pt off 2pt on 1pt off 2pt, line width=0.35mm]
    table [col sep=space] {./plotdata/quadrotor/A3_x1_phase1.tex};
    \addlegendentry{$x_{3,1}$}
    
    \addplot [tomato_red, dash pattern=on 3pt off 1pt on 1pt off 1pt, line width=0.35mm]
    table [col sep=space] {./plotdata/quadrotor/A4_x1_phase1.tex};
    \addlegendentry{$x_{4,1}$}
    
    
    
    \nextgroupplot[ylabel={$x_{i,2}$ in \si{\metre}}, xticklabels={}]
    
    \addplot [vivid_blue, line width=0.35mm]
    table [col sep=space] {./plotdata/quadrotor/A1_x2_phase1.tex};
    
    \addplot [bright_green, dash pattern=on 8pt off 4pt, line width=0.35mm]
    table [col sep=space] {./plotdata/quadrotor/A2_x2_phase1.tex};
    
    \addplot [vibrant_orange, dash pattern=on 6pt off 2pt on 1pt off 2pt, line width=0.35mm]
    table [col sep=space] {./plotdata/quadrotor/A3_x2_phase1.tex};
    
    \addplot [tomato_red, dash pattern=on 3pt off 1pt on 1pt off 1pt, line width=0.35mm]
    table [col sep=space] {./plotdata/quadrotor/A4_x2_phase1.tex};
    
    

    \nextgroupplot[ylabel={$x_{i,3}$ in \si{\metre}}, xlabel={$t$ (time steps)}]
    
    \addplot [vivid_blue, line width=0.35mm]
    table [col sep=space] {./plotdata/quadrotor/A1_x3_phase1.tex};
    
    \addplot [bright_green, dash pattern=on 8pt off 4pt, line width=0.35mm]
    table [col sep=space] {./plotdata/quadrotor/A2_x3_phase1.tex};
    
    \addplot [vibrant_orange, dash pattern=on 6pt off 2pt on 1pt off 2pt, line width=0.35mm]
    table [col sep=space] {./plotdata/quadrotor/A3_x3_phase1.tex};
    
    \addplot [tomato_red, dash pattern=on 3pt off 1pt on 1pt off 1pt, line width=0.35mm]
    table [col sep=space] {./plotdata/quadrotor/A4_x3_phase1.tex};
    
    
    
    \end{groupplot}
\end{tikzpicture}
    
    \caption{Positions of the quadrotors during the first phase of the cooperative task from $t=0$ to $t=350$.
    The agents converge to a circular trajectory with a common centre and radius, but follow it with different phase, as was desired.}
    \label{fig:quadrotor_first_task}
\end{figure}
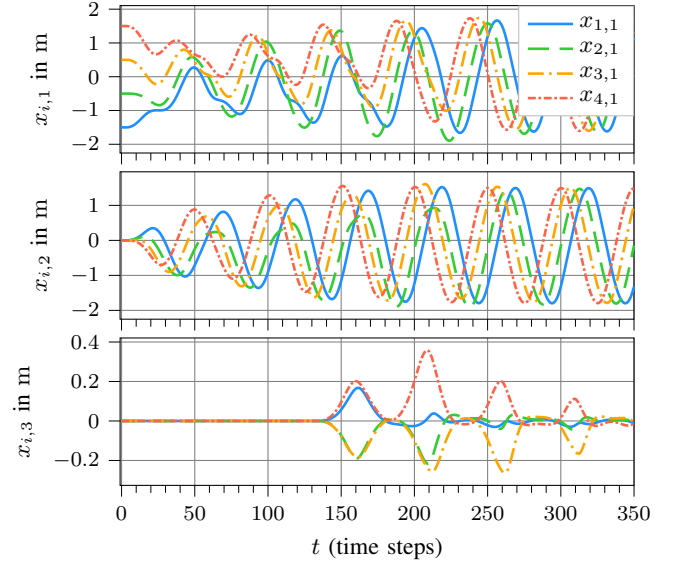%
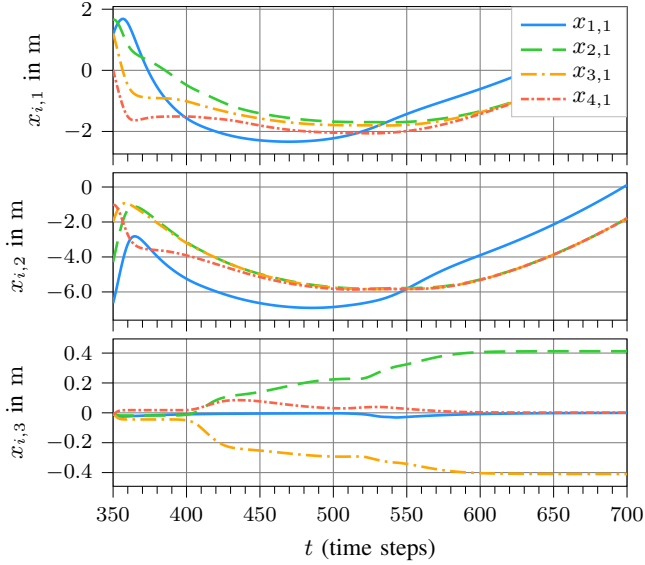
\begin{figure}[tb]
    \setlength\axisheight{0.35\linewidth}
    \setlength\axiswidth{0.95\linewidth}
    \centering
    \begin{tikzpicture}
    \definecolor{vivid_blue}{HTML}{1E90FF}  
    \definecolor{bright_green}{HTML}{32CD32}  
    \definecolor{vibrant_orange}{HTML}{FFA500}  
    \definecolor{tomato_red}{HTML}{FF6347}  
    \definecolor{medium_purple}{HTML}{9370DB}  
    \definecolor{calm_teal}{HTML}{20B2AA}  
    
    \begin{groupplot}[
      group style={
        group name=my plots,
        group size=1 by 3,
        vertical sep=0.25cm
      },
      height=\axisheight,
      width=\axiswidth,
      legend cell align={left},
      legend style={
          at={(1,1)},
          anchor=north east,
          draw=lightgray,
          fill opacity=0.8, 
          draw opacity=1,
          text opacity=1,
          font=\small
      },
      minor x tick num=4,
      minor y tick num=0,
      tick align=outside,
      tick pos=left,
      x grid style={gray},
      xmajorgrids,
      xmin=350, 
      xmax=700,
      xtick distance=50,
      xtick style={color=black},
      y grid style={gray},
      ymajorgrids,
      ytick style={color=black},
      label style={font=\small},
      tick label style={font=\footnotesize},
    ]
    
    \nextgroupplot[ylabel={$x_{i,1}$ in \si{\metre}}, xticklabels={}]
    
    \addplot [vivid_blue, line width=0.35mm]
    table [col sep=space] {./plotdata/quadrotor/A1_x1_phase2.tex};
    \addlegendentry{$x_{1,1}$}
    
    \addplot [bright_green, dash pattern=on 8pt off 4pt, line width=0.35mm]
    table [col sep=space] {./plotdata/quadrotor/A2_x1_phase2.tex};
    \addlegendentry{$x_{2,1}$}
    
    \addplot [vibrant_orange, dash pattern=on 6pt off 2pt on 1pt off 2pt, line width=0.35mm]
    table [col sep=space] {./plotdata/quadrotor/A3_x1_phase2.tex};
    \addlegendentry{$x_{3,1}$}
    
    \addplot [tomato_red, dash pattern=on 3pt off 1pt on 1pt off 1pt, line width=0.35mm]
    table [col sep=space] {./plotdata/quadrotor/A4_x1_phase2.tex};
    \addlegendentry{$x_{4,1}$}
    
    
    
    \nextgroupplot[ylabel={$x_{i,2}$ in \si{\metre}}, xticklabels={}, yticklabels={-2.0, $-6.0$, $-4.0$, $-2.0$, $0$}]
    
    \addplot [vivid_blue, line width=0.35mm]
    table [col sep=space] {./plotdata/quadrotor/A1_x2_phase2.tex};
    
    \addplot [bright_green, dash pattern=on 8pt off 4pt, line width=0.35mm]
    table [col sep=space] {./plotdata/quadrotor/A2_x2_phase2.tex};
    
    \addplot [vibrant_orange, dash pattern=on 6pt off 2pt on 1pt off 2pt, line width=0.35mm]
    table [col sep=space] {./plotdata/quadrotor/A3_x2_phase2.tex};
    
    \addplot [tomato_red, dash pattern=on 3pt off 1pt on 1pt off 1pt, line width=0.35mm]
    table [col sep=space] {./plotdata/quadrotor/A4_x2_phase2.tex};
    
    

    \nextgroupplot[ylabel={$x_{i,3}$ in \si{\metre}}, xlabel={$t$ (time steps)}]
    
    \addplot [vivid_blue, line width=0.35mm]
    table [col sep=space] {./plotdata/quadrotor/A1_x3_phase2.tex};
    
    \addplot [bright_green, dash pattern=on 8pt off 4pt, line width=0.35mm]
    table [col sep=space] {./plotdata/quadrotor/A2_x3_phase2.tex};
    
    \addplot [vibrant_orange, dash pattern=on 6pt off 2pt on 1pt off 2pt, line width=0.35mm]
    table [col sep=space] {./plotdata/quadrotor/A3_x3_phase2.tex};
    
    \addplot [tomato_red, dash pattern=on 3pt off 1pt on 1pt off 1pt, line width=0.35mm]
    table [col sep=space] {./plotdata/quadrotor/A4_x3_phase2.tex};
    
    
    
    \end{groupplot}
\end{tikzpicture}
    
    \caption{Positions of the quadrotors during the second phase of the cooperative task from $t=350$ to $t=700$.
    The first agent follows a reference, whereas the rest follow it without colliding.}
    \label{fig:quadrotor_second_task}
\end{figure}%

\section{Conclusion}
We presented a distributed MPC framework for dynamic cooperative control of multi-agent systems.
The scheme decouples individual agent dynamics from the cooperative objective, enabling flexible and scalable coordination.
We provided conditions for the asymptotic achievement of the cooperative task, along with transient and asymptotic performance guarantees.

The framework was demonstrated in three scenarios:
(i) periodic motion and robustness to changing communication topologies in satellite constellations,
(ii) deadlock avoidance in narrow-passage traversal via appropriate objective design, and
(iii) sequential execution of cooperative tasks in a system of quadrotors.
These examples underscore the flexibility of the proposed approach.

While we assumed a time-invariant topology in the theoretical analysis, the framework \change{can remain} applicable under time-varying communication structures, provided that initial feasibility is ensured at each topology change.
\change{%
This is because relevant components, such as terminal constraints, do not need to be redesigned.
For each interval with constant topology, the theoretical results, e.g. asymptotic stability, hold anew.
The penalty on changes in the cooperation output should be removed in the first optimization problem after a topology change.
While persistent topology changes preclude conclusions about overall asymptotic behaviour, recursive feasibility is preserved due to the decentralized design of the terminal components.
}
Moreover, although the scheme does not require an external coordinator, it can seamlessly incorporate external references when hierarchical control structures or multiscale coordination are desired.

\appendices



\section{Proofs}
\subsection{Proof of Theorem~\ref{thm:exponential_stability}}
\begin{proof}
    First, we show that $\eta_{\ell}$ in Theorem~\ref{thm:stage_cost_upper_bounds_cooperation_distance} is also a quadratic function.
    We follow the proof of Theorem~\ref{thm:stage_cost_upper_bounds_cooperation_distance} until~\eqref{eq:stage_cost_upper_bounds_cooperation_distance_before_combination}.
    Then, we have 
    \begin{align*}
        &{\sum_{i=1}^{m}} J_i(x_i, \hat{u}_i, \hat{y}_{T,i}, y_{T,i}^{\mathrm{pr}}, \hat{y}_{T, \mathcal{N}_i}) - J_i(x_i, u_i^0, y_{T,i}^0, y_{T,i}^{\mathrm{pr}}, y_{T,\mathcal{N}_i}^0)
        \\
        &< c^{f\ell} \big(\eta_{\ell}(\vert y_T^0 \vert_{\mathcal{Y}_T^W}) - c_{\theta}^{W} a_{\psi}^2\vert y_T^0 \vert_{\mathcal{Y}_T^W}^2 \big).
    \end{align*}
    Now, choosing $\eta_{\ell}(\vert y_T \vert_{\mathcal{Y}_T^W}) = \delta\frac{c_{\theta}^{W}}{2}a_{\psi}^2 \vert y_T \vert_{\mathcal{Y}_T^W}^2 = c_{\eta}\vert y_T \vert_{\mathcal{Y}_T^W}^2$ with $\delta\in(0,1]$ and $\delta \le \frac{\varepsilon}{c^{\ell}c_{\theta}^{W} a_{\psi}^2\gamma_W^2}$ leads to the same contradiction as in the proof of Theorem~\ref{thm:stage_cost_upper_bounds_cooperation_distance} and establishes $\eta_{\ell}$ to be quadratic as well.

    Second, we show that the upper bound shown in Lemma~\ref{lem:Lyapunov_upper_bound} is quadratic.
    We follow the proof of Lemma~\ref{lem:Lyapunov_upper_bound} until~\eqref{eq:Lyapunov_upper_bound_preliminary}, from which, with $\tilde{a}_{\mathrm{ub}} = \max_i (c_i^{\mathrm{f}} a_{\mathrm{ub}}^{\ell_i}, \lambda(N)a_{\mathrm{ub}}^{\Delta})$, we have
    \begin{align*}
        &V(\xi(\hspace{-0.5pt}\tau\hspace{-0.5pt})) \hspace{-1pt}\le\hspace{-1pt} {\sum_{i=1}^{m}} c_i^{\mathrm{f}}a_{\mathrm{ub}}^{\ell_i}\vert x_{{\mathcal{T}},i}(\hspace{-0.5pt}\tau\hspace{-0.5pt}) \vert_{x_{T,i}^{W}(\hspace{-0.5pt}\tau\hspace{-0.5pt})}^2 \hspace{-0.8pt}+\hspace{-0.8pt} \lambda(N)a_{\mathrm{ub}}^{\Delta}\vert y_{{\mathcal{T}},\tau+1} \vert_{y_{T,\tau+1}^{W}}^2
        \\
        &\le \hspace{-2pt}\tilde{a}_{\mathrm{ub}} \hspace{-1pt}(\vert x_{\mathcal{T}}(\hspace{-0.5pt}\tau\hspace{-0.5pt})\vert_{x_{T}^W\hspace{-2pt}(\hspace{-0.5pt}\tau\hspace{-0.5pt})}^2 \hspace{-2.5pt}+\hspace{-2pt} \vert y_{{\mathcal{T}},\hspace{-0.5pt}\tau+1} \vert_{y_{T,\hspace{-0.5pt}\tau+1}^{W}}^2\hspace{-1pt}) \hspace{-2pt}=\hspace{-2pt} \tilde{a}_{\mathrm{ub}}\vert \xi(\hspace{-0.5pt}\tau\hspace{-0.5pt}) \vert^2_{(x_{T}^W\hspace{-2pt}(\hspace{-0.5pt}\tau\hspace{-0.5pt}), y_{T\hspace{-0.5pt},\tau+1}^{W})}.
    \end{align*}
    This yields $V_T(\xi_T) \le \tilde{a}_{\mathrm{ub}} \sum_{\tau=0}^{T-1}\vert \xi(\tau) \vert^2_{(x_{T}^W(\tau), y_{T,\tau+1}^{W})} = \tilde{a}_{\mathrm{ub}}\vert \xi_T \vert_{\Xi_T^W}^2$.
    The local upper bound on a compact subset of $\mathcal{X}_N^{T} \times \mathcal{Y}_T^{T}$ and the compactness of $\mathcal{X}_N$ and $\mathcal{Y}_T$ entail $V_T(\xi_T) \le a_{\mathrm{ub}} \vert \xi_T \vert_{\Xi_T^W}^2$ with $a_{\mathrm{ub}} > 0$ for all $\xi_T$ whose first component $x$ satisfies $x\in\mathcal{X}_N$ (cf.~\cite[Prop. 2.16]{Rawlings2020}).

    Third, we establish a quadratic lower bound.
    We follow the derivation in the proof of Lemma~\ref{lem:Lyapunov_lower_bound} until~\eqref{eq:Lyapunov_lower_bound_with_stage_cost_and_penalty_on_change}, yielding
    $V(\xi(\tau)) \ge \hat{a}_{\mathrm{lb}}\big( \vert y_T^0(\cdot\vert\xi(\tau)) \vert_{\mathcal{Y}_T^W}^2 + \vert x_{\mathcal{T}}(\tau) \vert_{x_T^0(0 \vert \xi(\tau))}^2
         + \vert y_{{\mathcal{T}},\tau+1} \vert_{y_T^0(\cdot\vert\xi(\tau))}^2 \big)$
    with $\hat{a}_{\mathrm{lb}} = \frac{1}{2}\min_{i} (a_{\mathrm{lb}}^{\ell_i},c_{\eta},a_{\mathrm{lb}}^{\Delta})$.
    \change{%
    The same arguments as in the proof of~\ref{lem:Lyapunov_lower_bound} show that the right-hand side
    is positive definite with respect to $\Xi_T^W$.
    From here, it is possible to show that there exists $a_{\mathrm{lb}}>0$ such that $V_T(\xi_T) \ge a_{\mathrm{lb}} \vert \xi_T \vert_{\Xi_T^W}^2$.
    }%
    Now, we want to find $a_{\mathrm{lb}}>0$ such that $V_T(\xi_T) \ge a_{\mathrm{lb}} \vert \xi_T \vert_{\Xi_T^W}^2$.
    Define $(x_T^W, y_T^W) = \argmin_{(\hat{x}_T^W, \hat{y}_T^W) \in \Xi_T^W} \Vert \xi_T - (\hat{x}_T^W, \hat{y}_T^W) \Vert$ and $\bar{y}_T(\cdot\vert \xi)$ such that $\vert y_T^0 \vert_{\mathcal{Y}_T^W} = \Vert y_T^0(\cdot\vert \xi) - \bar{y}_T(\cdot\vert \xi) \Vert$.
    Additionally, define $(\hat{x}_T, \hat{y}_T) = \smash[b]{\argmin_{(\tilde{x}_T,\tilde{y}_T) \in \Xi_T^W} \sum_{\tau=0}^{T-1} \Vert \bar{y}_T(\cdot\vert \xi(\tau)) - \tilde{y}_{T,\tau+1}\Vert}$ subject to $\Vert \bar{y}_T(\cdot\vert \xi(0)) - \tilde{y}_{T,1}\Vert = 0$.
    We now introduce some preparatory results before deriving the lower bound.
    We have, with $\hat{y}_{T,\tau+1} = \hat{y}_{T,\tau}(\cdot+1)$, for $\tau\in\mathbb{I}_{1:T-1}$,
    \begin{align*}
        &\vert \bar{y}_T(\cdot\vert\xi(\tau))\vert_{\hat{y}_{T,\tau+1}}^2 
        \\
        &\le 2\vert \bar{y}_T(\cdot\vert\xi(\tau)) \vert_{\bar{y}_T(\cdot+1\vert\xi(\tau-1))}^2 + 2\vert\bar{y}_T(\cdot+1\vert\xi(\tau-1)) \vert_{\hat{y}_{T,\tau+1}}^2
        \\
        &= 2\vert \bar{y}_T(\cdot\vert\xi(\tau)) \vert_{\bar{y}_T(\cdot+1\vert\xi(\tau-1))}^2 + 2\vert\bar{y}_T(\cdot\vert\xi(\tau-1)) \vert_{\hat{y}_{T,\tau}}^2
    \end{align*}
    By iterating these steps until $\vert\bar{y}_T(\cdot\vert\xi(0)) \vert_{\hat{y}_{T,1}}^2 = 0$ appears, we obtain, for $\tau\in\mathbb{I}_{1:T-1}$,
    \begin{equation}\label{eq:quadratic_bound_bar_to_hat}
        \vert \bar{y}_T(\cdot\vert\xi(\tau))\vert_{\hat{y}_{T,\tau+1}}^2 \le {\sum_{k=0}^{\tau-1}} 2^{k+1} \vert \bar{y}_T(\cdot\vert\xi(\tau-k))\vert_{\bar{y}_T(\cdot+1\vert\xi(\tau-k-1))}^2.
    \end{equation}
    Next, for $\tau\in\mathbb{I}_{1:T-1}$,
    \begin{align}\label{eq:quadratic_bound_bar_to_previous_bar}
        &\vert \bar{y}_T(\cdot\vert\xi(\tau))\vert_{\bar{y}_T(\cdot+1\vert\xi(\tau-1))}^2 \notag
        \\
        &\le 2\vert \bar{y}_T(\cdot\vert\xi(\tau))\vert_{y_T^0(\cdot\vert\xi(\tau))}^2 + 2 \vert y_T^0(\cdot\vert\xi(\tau)) \vert_{\bar{y}_T(\cdot+1\vert\xi(\tau-1))}^2 \notag
        \\
        &\le 2\vert \bar{y}_T(\cdot\vert\xi(\tau))\vert_{y_T^0(\cdot\vert\xi(\tau))}^2 + 4 \vert y_T^0(\cdot\vert\xi(\tau)) \vert_{y_T^0(\cdot+1\vert\xi(\tau-1))}^2 \notag
        \\
        &\phantom{\le{}} + 4\vert y_T^0(\cdot+1\vert\xi(\tau-1)) \vert_{\bar{y}_T(\cdot+1\vert\xi(\tau-1))}^2 \notag
        \\
        &= 2\vert \bar{y}_T(\cdot\vert\xi(\tau))\vert_{y_T^0(\cdot\vert\xi(\tau))}^2 + 4 \vert y_{{\mathcal{T}},\tau+1} \vert_{ y_T^0(\cdot\vert\xi(\tau))}^2 \notag
        \\
        &\phantom{\le{}} + 4\vert y_T^0(\cdot\vert\xi(\tau-1)) \vert_{\bar{y}_T(\cdot\vert\xi(\tau-1))}^2.
    \end{align}
    Combining these two, we get
    \begin{align}\label{eq:quadratic_bound_sum_bar_to_hat}
    &{\sum_{\tau=1}^{T-1}} \vert \bar{y}_T(\cdot\vert\xi(\tau))\vert_{\hat{y}_{T,\tau+1}}^2 \notag
    \\
    &\stackrel{\eqref{eq:quadratic_bound_bar_to_hat}}{\le} \smash[t]{\sum_{\tau=1}^{T-1} \sum_{k=0}^{\tau-1}} 2^{k+1} \vert \bar{y}_T(\cdot\vert\xi(\tau-k))\vert_{\bar{y}_T(\cdot+1\vert\xi(\tau-k-1))}^2 \notag
    \\
    &\stackrel{\eqref{eq:quadratic_bound_bar_to_previous_bar}}{\le} \smash[b]{\sum_{\tau=1}^{T-1}} 2^{\tau+2} \smash[b]{\sum_{k=0}^{\tau-1}} \Big( \vert \bar{y}_T(\cdot\vert\xi(\tau-k))\vert_{y_T^0(\cdot\vert\xi(\tau-k))}^2 \notag
    \\
    &\hspace{7.2em} + \vert y_{{\mathcal{T}},\tau-k+1} \vert_{ y_T^0(\cdot\vert\xi(\tau-k))}^2 \notag 
    \\
    &\hspace{7.2em} + \vert y_T^0(\cdot\vert\xi(\tau-k-1)) \vert_{\bar{y}_T(\cdot\vert\xi(\tau-k-1))}^2\Big) \notag
    \\
    &\le 2^{T+2} \Big(\smash[t]{\sum_{\tau=1}^{T-1}} (T-\tau+1)\big( \vert \bar{y}_T(\cdot\vert\xi(\tau))\vert_{y_T^0(\cdot\vert\xi(\tau))}^2 \notag
    \\
    &\phantom{\le{}} + \vert y_{{\mathcal{T}},\tau+1} \vert_{ y_T^0(\cdot\vert\xi(\tau))}^2\big) + \smash[t]{\sum_{\tau=0}^{T-2}} (T-\tau) \vert \bar{y}_T(\cdot\vert\xi(\tau))\vert_{y_T^0(\cdot\vert\xi(\tau))}^2 \Big)\notag
    \\
    &\le 2^{T+2}T \Big(\smash[t]{\sum_{\tau=1}^{T-1}} \big( \vert y_T^0(\cdot\vert\xi(\tau))\vert_{\bar{y}_T(\cdot\vert\xi(\tau))}^2 \notag
    \\
    &\phantom{\le{}} + \vert y_{{\mathcal{T}},\tau+1} \vert_{ y_T^0(\cdot\vert\xi(\tau))}^2\big) + {\sum_{\tau=0}^{T-2}} \vert y_T^0(\cdot\vert\xi(\tau)) \vert_{\bar{y}_T(\cdot\vert\xi(\tau))}^2 \Big).
    \end{align}
    Next, consider $\vert y_{{\mathcal{T}},\tau+1} \vert_{\hat{y}_{T,\tau+1}}^2$.
    Since $ \vert \bar{y}_{T}(\cdot\vert \xi(0)) \vert_{\hat{y}_{T,1}} = 0$, 
    \begin{align}\label{eq:quadratic_bound_y_to_hat_tau_0}
        &\vert y_{{\mathcal{T}},1} \vert_{\hat{y}_{{\mathcal{T}},1}}^2 \hspace{-3pt}=\hspace{-2pt} (\hspace{-1pt}\vert y_{\mathcal{T},1} \vert_{\bar{y}_{T}(\hspace{-1pt}\cdot\vert \xi(\hspace{-1pt}0\hspace{-1pt}))} \hspace{-3pt}+\hspace{-2pt}  \vert \bar{y}_{T}(\cdot\vert \xi(\hspace{-1pt}0\hspace{-1pt})) \vert_{\hat{y}_{T,1}}\hspace{-1pt})^2 \hspace{-2pt}=\hspace{-2pt} \vert y_{\mathcal{T},1} \vert_{\bar{y}_{T}(\cdot\vert \xi(0))}^2\notag\\
        &\le 2 (\vert y_{{\mathcal{T}},1} \vert_{y_{T}^0(\cdot\vert \xi(0))}^2 + \vert y_{T}^0(\cdot\vert \xi(0)) \vert_{\bar{y}_{T}(\cdot\vert \xi(0))}^2),
    \end{align}
    and for $\tau\in\mathbb{I}_{1:T-1}$,
    \begin{align}\label{eq:quadratic_bound_y_to_hat_tau_1_to_T}
        &\vert y_{{\mathcal{T}},\tau+1} \vert_{\hat{y}_{T,\tau+1}}^2 \le 2 \vert y_{{\mathcal{T}},\tau+1} \vert_{\bar{y}_T(\cdot\vert\xi(\tau))}^2 + 2\vert \bar{y}_{T}(\cdot\vert \xi(\tau))\vert_{\hat{y}_{T,\tau+1}}^2 \notag
        \\
        &\stackrel{\eqref{eq:quadratic_bound_bar_to_hat}}{\le } 2 \vert y_{{\mathcal{T}},\tau+1} \vert_{\bar{y}_T(\cdot\vert\xi(\tau))}^2 \notag\\
        &\phantom{\le{}} + {\sum_{k=0}^{\tau-1}} 2^{k+1} \vert \bar{y}_T(\cdot\vert\xi(\tau-k))\vert_{\bar{y}_T(\cdot+1\vert\xi(\tau-k-1))}^2 \notag
        \\
        &\le 4 \vert y_{{\mathcal{T}},\tau+1} \vert_{y_T^0(\cdot\vert\xi(\tau))}^2 + 4 \vert y_T^0(\cdot\vert\xi(\tau)) \vert_{\bar{y}_T(\cdot\vert\xi(\tau))}^2 \notag\\
        &\phantom{\le{}} + {\sum_{k=0}^{\tau-1}} 2^{k+1} \vert \bar{y}_T(\cdot\vert\xi(\tau-k))\vert_{\bar{y}_T(\cdot+1\vert\xi(\tau-k-1))}^2.
    \end{align}
    Hence,
    \begin{align}\label{eq:quadratic_bound_sum_y_to_hat}
        &\sum_{\tau=0}^{T-1} \vert y_{{\mathcal{T}},\tau+1}\vert_{\hat{y}_{T,\tau+1}}^2 \notag
        \\
        &\stackrel{\eqref{eq:quadratic_bound_y_to_hat_tau_0},\,\eqref{eq:quadratic_bound_y_to_hat_tau_1_to_T}}{\le} 2 \vert y_{{\mathcal{T}},1} \vert_{y_{T}^0(\cdot\vert \xi(0))}^2 + 2 \vert y_{T}^0(\cdot\vert \xi(0)) \vert_{\bar{y}_{T}(\cdot\vert \xi(0))}^2\notag\\
        &\phantom{\le{}} + \sum_{\tau=1}^{T-1}  4 \vert y_{{\mathcal{T}},\tau+1} \vert_{y_T^0(\cdot\vert\xi(\tau))}^2 + 4 \vert y_T^0(\cdot\vert\xi(\tau)) \vert_{\bar{y}_T(\cdot\vert\xi(\tau))}^2 \notag \notag\\
        &\phantom{\le{}} + \sum_{\tau=1}^{T-1}\sum_{k=0}^{\tau-1} 2^{k+1} \vert \bar{y}_T(\cdot\vert\xi(\tau-k))\vert_{\bar{y}_T(\cdot+1\vert\xi(\tau-k-1))}^2 \notag
        \\
        &\stackrel{\eqref{eq:quadratic_bound_sum_bar_to_hat}}{\le} 2 \vert y_{{\mathcal{T}},1} \vert_{y_{T}^0(\cdot\vert \xi(0))}^2 + 2 \vert y_{T}^0(\cdot\vert \xi(0)) \vert_{\bar{y}_{T}(\cdot\vert \xi(0))}^2\notag\\
        &\phantom{\le{}} + \sum_{\tau=1}^{T-1}  4 \vert y_{{\mathcal{T}},\tau+1} \vert_{y_T^0(\cdot\vert\xi(\tau))}^2 + 4 \vert y_T^0(\cdot\vert\xi(\tau)) \vert_{\bar{y}_T(\cdot\vert\xi(\tau))}^2 \notag \\
        &\phantom{\le{}} + 2^{T+2}T \Big(\smash[t]{\sum_{\tau=1}^{T-1}} \big( \vert y_T^0(\cdot\vert\xi(\tau))\vert_{\bar{y}_T(\cdot\vert\xi(\tau))}^2 \notag
        \\
        &\phantom{\le{}} + \vert y_{{\mathcal{T}},\tau+1} \vert_{ y_T^0(\cdot\vert\xi(\tau))}^2\big) + \smash[t]{\sum_{\tau=0}^{T-2}} \vert y_T^0(\cdot\vert\xi(\tau)) \vert_{\bar{y}_T(\cdot\vert\xi(\tau))}^2 \Big) \notag
        \\
        &\le 2^{T+5}T \Big(\smash[t]{\sum_{\tau=0}^{T-1}} \big( \vert y_T^0(\cdot\vert\xi(\tau))\vert_{\bar{y}_T(\cdot\vert\xi(\tau))}^2 + \vert y_{{\mathcal{T}},\tau+1} \vert_{ y_T^0(\cdot\vert\xi(\tau))}^2\big)\Big)
    \end{align}
    We now turn to $\vert x_{\mathcal{T}}(\tau)\vert_{\hat{x}_T(\tau)}^2$, where we use Assumption~\ref{assm:unique_corresponding_equilibrium} with $L_{x} = \min_i L_{x,i}$:
    \begin{align}\label{eq:quadratic_bound_x_to_hat}
        &\vert x_{\mathcal{T}}(\tau)\vert_{\hat{x}_T(\tau)}^2 \le 2 \vert x_T^0(0\vert\xi(\tau)) \vert_{\hat{x}_T(\tau)}^2 + 2 \vert x_{\mathcal{T}}(\tau)\vert_{x_T^0(0\vert\xi(\tau))}^2
        \notag\\
        &\le 2L_x \vert y_T^0(\cdot\vert\xi(\tau)) \vert_{\hat{y}_{T,\tau+1}}^2 + 2 \vert x_{\mathcal{T}}(\tau)\vert_{x_T^0(0\vert\xi(\tau))}^2
        \notag\\
        &\le 4L_x \vert y_{{\mathcal{T}},\tau+1} \vert_{\hat{y}_{T,\tau+1}}^2 + 4L_x \vert y_{{\mathcal{T}},\tau+1} \vert_{y_T^0(\cdot\vert\xi(\tau))}^2 \notag\\
        &\phantom{\le{}} + 2 \vert x_{\mathcal{T}}(\tau)\vert_{x_T^0(0\vert\xi(\tau))}^2.
    \end{align}
    With these preliminary steps combined, we get 
    \begin{align}\label{eq:Lyapuno_quadratic_lower_bound_from_below}
        &\vert \xi_T \vert_{\Xi_T^W}^2 = \sum_{\tau=0}^{T-1} \vert y_{{\mathcal{T}},\tau+1}\vert_{y_{T,\tau+1}^W}^2 + \vert x_{\mathcal{T}}(\tau) \vert_{x_T^W(\tau)}^2 \notag
        \\
        &\le \sum_{\tau=0}^{T-1} \vert y_{{\mathcal{T}},\tau+1}\vert_{\hat{y}_{T,\tau+1}}^2 + \vert x_{\mathcal{T}}(\tau) \vert_{\hat{x}_T(\tau)}^2 \notag
        \\
        &\stackrel{\eqref{eq:quadratic_bound_x_to_hat}}{\le} \sum_{\tau=0}^{T-1} \Big((1+4L_x)\vert y_{{\mathcal{T}},\tau+1} \vert_{\hat{y}_{T,\tau+1}}^2 + 4L_x \vert y_{{\mathcal{T}},\tau+1} \vert_{y_T^0(\cdot\vert\xi(\tau))}^2 \notag\\
        &\hspace{4em} + 2 \vert x_{\mathcal{T}}(\tau)\vert_{x_T^0(0\vert\xi(\tau))}^2\Big)\notag
        \\
        &\stackrel{\eqref{eq:quadratic_bound_sum_y_to_hat}}{\le} \sum_{\tau=0}^{T-1} \Big((1 + 4L_x)(2^{T+5}T) \vert y_T^0(\cdot\vert\xi(\tau))\vert_{\bar{y}_T(\cdot\vert\xi(\tau))}^2 \notag
        \\
        &\hspace{4em} + (1 + 4L_x)(2^{T+5}T)\vert y_{{\mathcal{T}},\tau+1} \vert_{ y_T^0(\cdot\vert\xi(\tau))}^2 \notag
        \\
        &\hspace{4em}+ 4L_x \vert y_{{\mathcal{T}},\tau+1} \vert_{y_T^0(\cdot\vert\xi(\tau))}^2 + 2 \vert x_{\mathcal{T}}(\tau)\vert_{x_T^0(0\vert\xi(\tau))}^2\Big)\notag
        \\
        &\le \big((1 + 4L_x)(2^{T+5}T) + 4L_x\big)\Big(\sum_{\tau=0}^{T-1} \vert x_{\mathcal{T}}(\tau)\vert_{x_T^0(0\vert\xi(\tau))}^2\notag
        \\
        &\hspace{4em} + \vert y_T^0(\cdot\vert\xi(\tau))\vert_{\bar{y}_T(\cdot\vert\xi(\tau))}^2 + \vert y_{{\mathcal{T}},\tau+1} \vert_{ y_T^0(\cdot\vert\xi(\tau))}^2 \Big).
    \end{align}
    Finally, comparing~\eqref{eq:Lyapunov_function_intermediate_lower_bound} with~\eqref{eq:Lyapuno_quadratic_lower_bound_from_below} entails $V_T(\xi_T) \ge a_{\mathrm{lb}} \vert \xi_T \vert_{\Xi_T^W}^2$ with $a_{\mathrm{lb}} = \frac{\hat{a}_{\mathrm{lb}}}{(1 + 4L_x)(2^{T+5}T) + 4L_x}$.

    Furthermore, we also get from~\eqref{eq:Lyapunov_decrease}, $V_T(\xi_T(t+1)) - V_T(\xi_T(t)) \le - a_{\mathrm{lb}} \vert \xi_T \vert_{\Xi_T^W}^2$.
    Exponential stability then follows from this and the quadratic upper and lower bounds using standard arguments (cf.~\cite[Thm. B.19]{Rawlings2020}).
\end{proof}
\subsection{Proof of equation~\eqref{eq:weak_strong_convexity_lower_bound} in Lemma~\ref{lem:sufficient_conditions_for_better_candidate_weak_strong_convexity}'s proof}
We follow the proof of~\cite[Prop. B.5]{Bertsekas2016} with adapted notation.
Choose $\bar{\sigma} < \max(L_W, \frac{\sigma}{2})$ with $\sigma$ from Assumption~\ref{assm:weak_strong_convexity}, let $y_T, y_T'\in \mathcal{Y}_T$, and consider $\phi(y_T) = W^{\mathrm{c}}(y_T) - \frac{\bar{\sigma}}{2}\Vert y_T \Vert^2$.
We show that $\nabla \phi(y_T) = \nabla W^{\mathrm{c}}(y_T) - \bar{\sigma}y_T$ is Lipschitz continuous with constant $L_W - \bar{\sigma}$.
From~\cite[Prop. B.3]{Bertsekas2016},~\eqref{eq:Lipschitz_continuous_cooperation_objective_function} is equivalent to 
\begin{equation*}
    (\nabla W^{\mathrm{c}}(y_T) - \nabla W^{\mathrm{c}}(y_T'))^\top (y_T - y_T') \le L_W \Vert y_T - y_T' \Vert^2.
\end{equation*}
Moreover,
\begin{align*}
    &(\nabla W^{\mathrm{c}}(y_T) - \nabla W^{\mathrm{c}}(y_T'))^\top (y_T - y_T') 
    \\
    &= (\nabla \phi(y_T) + \bar{\sigma}y_T - \nabla \phi(y_T') - \bar{\sigma}y_T')^\top (y_T - y_T')
    \\
    &= (\nabla \phi(y_T) - \nabla \phi(y_T'))^\top (y_T - y_T') + \bar{\sigma} \Vert y_T - y_T' \Vert^2
\end{align*}
which implies
\begin{equation*}
    (\nabla \phi(y_T) - \nabla \phi(y_T'))^\top (y_T - y_T') \le (L_W-\bar{\sigma}) \Vert y_T - y_T' \Vert^2.
\end{equation*}
Again from~\cite[Prop. B.3]{Bertsekas2016}, this shows the claimed Lipschitz continuity of $\nabla \phi(y_T)$, and we obtain from~\cite[Prop. B.3]{Bertsekas2016}
\begin{equation*}
    (\nabla \phi(y_T) - \nabla \phi(y_T'))^\top (y_T - y_T') \ge \frac{\Vert \nabla \phi(y_T) - \nabla \phi(y_T') \Vert^2}{L_W-\bar{\sigma}}.
\end{equation*}
Inserting the definition of $\phi(y_T)$ yields 
\begin{align*}
    &(\nabla W^{\mathrm{c}}(y_T) - \nabla W^{\mathrm{c}}(y_T'))^\top (y_T - y_T') - \bar{\sigma}\Vert y_T - y_T' \Vert^2 
    \\
    &\ge \frac{1}{L_W-\bar{\sigma}}\Vert \nabla W^{\mathrm{c}}(y_T) - \nabla W^{\mathrm{c}}(y_T') - \bar{\sigma}(y_T - y_T') \Vert^2
    \\
    &\ge \frac{\bar{\sigma}^2}{L_W-\bar{\sigma}}\Vert y_T - y_T'\Vert^2 + \frac{\Vert \nabla W^{\mathrm{c}}(y_T) - \nabla W^{\mathrm{c}}(y_T') \Vert^2}{L_W - \bar{\sigma}} \\
    &\phantom{\ge{}} - \frac{2\bar{\sigma}}{L_W - \bar{\sigma}}(\nabla W^{\mathrm{c}}(y_T) - \nabla W^{\mathrm{c}}(y_T'))^\top (y_T - y_T')
\end{align*}
which can be reordered into~\eqref{eq:weak_strong_convexity_lower_bound}.

\subsection{Proof of Proposition~\ref{prop:standard_MPC_performance_bound}}
\begin{proof}
    We begin by showing a turnpike property as in~\cite[Prop. 8.15]{Gruene2017}.
    Fix $\gamma > 0$ and define $\delta_i^{\gamma} = (\alpha_{\mathrm{lb}}^{\ell_i})^{-1}(\frac{\gamma}{P})$.
    We show that for all $N,P\in\mathbb{N}$, $x\in X$, $u\in\mathbb{U}^N(x)$ and $r_T\in\mathcal{Z}_T$ with $\mathcal{P}_N(x,u,r_T) \le \gamma$, the set $Q(x,u,P,N,r_T) = \{ k \in \mathbb{I}_{0:N-1} \mid \vert x_{i,u_i}(k,x_i) \vert_{x_{T,i}(k)} \ge \delta_i^{\gamma}\; \forall i\in\mathbb{I}_{1:m} \}$ contains at most $P$ elements.
    For this, assume that there exist $N, P, x, u, r_T$ with $\mathcal{P}_N(x,u,r_T) \le \gamma$ but $Q(x,u,P,N,r_T)$ has at least $P+1$ elements.
    However, this implies the following contradiction:
    \begin{align*}
        &\mathcal{P}_N(x,u,r_T) 
        \stackrel{\eqref{eq:stage_cost_lower_and_upper_bound}}{\ge} 
        \smash[t]{\sum_{k=0}^{N-1} \sum_{i=1}^{m}} \alpha_{\mathrm{lb}}^{\ell_i}(\vert x_{i,u_i}(k,x_i) \vert_{x_{T,i}(k)}) 
        \\
        &\ge \smash[t]{\sum_{\substack{k\in\mathbb{I}_{0:N-1} \\ \vert x_{i,u_i}(k,x_i) \vert_{x_{T,i}(k)} \ge \delta_i^{\gamma}}} \sum_{i=1}^{m}} \alpha_{\mathrm{lb}}^{\ell_i}(\delta_i^{\gamma}) \ge (P+1) \sum_{i=1}^{m} \frac{\gamma}{P}
        > \gamma.
    \end{align*}

    We now prove the performance bound adapting the proof of~\cite[Thm. 8.22]{Gruene2017}.
    Note that we need to prove the assertion only for sufficiently large $N$ and $K$ since all involved functions are bounded. Hence, $\delta_1(N)$ and $\delta_2(K)$ can always be chosen sufficiently large for small $N$ and $K$. 
    Consider $u_{\epsilon} \in \mathbb{U}^K(x)$ such that $x_{u_\epsilon}(K,x) \in \mathcal{B}_{\kappa}(r_T(K))$ with 
    \begin{equation}\label{eq:epsilon_input_bound}
        \mathcal{P}_K(x,u_{\epsilon},r_T) \le \inf_{\substack{u\in\mathbb{U}^K(x) \\ x_u(K,x) \in \mathcal{B}_{\kappa}(r_T(K))}} \mathcal{P}_K(x, u, r_T) + \epsilon
    \end{equation}
    with an arbitrary but fixed $\epsilon\in(0,1)$.
    The standard stability proof~\cite[Thm. 5.13]{Gruene2017} and~\eqref{eq:standard_value_function_upper_bound} entail 
    \begin{equation}\label{eq:value_function_upper_bounds_best_cost}
        \hspace{-0.5em} \inf_{\substack{u\in\mathbb{U}^K(x) \\ x_u(K,x) \in \mathcal{B}_{\kappa}(r_T(K))}} \hspace{-2.5em} \mathcal{P}_K(x, u, r_T) \le V_K(x, r_T) \le \alpha_{\widetilde{N}}^{\mathrm{s}}(\vert x \vert_{x_T(0)}).
    \end{equation}
    We apply the turnpike property with $\gamma = \sup_{x\in \mathcal{Z}_X} \alpha_{\widetilde{N}}^{\mathrm{s}}(\vert x \vert_{x_T(0)}) + \epsilon$, where $\mathcal{Z}_X = \{ x\in X \mid \exists u\in U,\; (x_i, u_i) \in Z_i\; \forall i\in\mathbb{I}_{1:m}\}$.
    Since the set $Q(x,u,\min(\floor*{\frac{N}{2}}, K-1),K,r_T)$ contains at most $\min(\floor*{\frac{N}{2}}, K-1)$ elements, there exists $k \in \mathbb{I}_{0:\min(\floor*{\frac{N}{2}}, K-1)}$ with $\vert x_{i,u_{\epsilon,i}}(k,x_i) \vert_{x_{T,i}(k)} \le \delta_i^{\gamma}(\min(\floor*{\frac{N}{2}}, K-1))$.
    By choosing $N$ and $K$ sufficiently large, we can ensure $\delta_i^{\gamma}(\min(\floor*{\frac{N}{2}}, K-1)) \le c_i^{\mathrm{b}}$, and hence from Assumption~\ref{assm:terminal_ingredients}, $u_{\epsilon} \in \mathbb{U}^k(x)$, $x_{i,u_{\epsilon,i}}(k, x_i) \in \mathcal{X}_i^{\mathrm{f}}(r_{T,i}(k))$, and thus also $x_{u_{\epsilon}}(k, x) \in \mathbb{X}_{N-k}^{\mathrm{s}}(r_T(\cdot+k))$.
    Then, the dynamic programming principle entails
    \begin{align*}
        &V_N^{\mathrm{s}}(x, r_T) = \hspace{-2.75em}\smash[b]{\inf_{\substack{u\in\mathbb{U}^k(x) \\ x_{u}(k,x)\in \mathbb{X}_{N-k}^{\mathrm{s}}(r_T(\cdot+k))}}} \Big(\mathcal{P}_k(x, u, r_T) 
        \\&\hspace{12em}+ V_{N-k}^{\mathrm{s}}(x_{u}(k,x), r_T(\cdot+k))\Big)
        \\
        &\le \mathcal{P}_k(x, u_{\epsilon}, r_T) + V_{N-k}^{\mathrm{s}}(x_{u_{\epsilon}}(k,x), r_T(\cdot+k)) 
        \\
        &\stackrel{\eqref{eq:epsilon_input_bound},\,\mathrlap{\eqref{eq:terminal_cost_is_upper_bound} }}{\le} \epsilon + \hspace{-2.8em}\inf_{\substack{u\in\mathbb{U}^K(x) \\ x_u(K,x) \in \mathcal{B}_{\kappa}(r_T(K))}}\hspace{-3em} \mathcal{P}_K(x, u, r_T) + \sum_{i=1}^{m} V_{i}^{\mathrm{f}}(x_{i,u_{\epsilon,i}}(k,x), r_{T,i}(k))
        \\
        &\stackrel{\eqref{eq:terminal_cost_upper_bound},\mathrlap{\eqref{eq:stage_cost_lower_and_upper_bound}}}{\le} \epsilon + \hspace{-2.8em}\inf_{\substack{u\in\mathbb{U}^K(x) \\ x_u(K,x) \in \mathcal{B}_{\kappa}(r_T(K))}}\hspace{-3em} \mathcal{P}_K(x, u, r_T) + \sum_{i=1}^{m} c_i^{\mathrm{f}} \alpha_{\mathrm{ub}}^{\ell_i}(\vert x_{i,u_{\epsilon,i}}(k,x) \vert_{r_{T,i}(k)})
        \\
        &\le \epsilon + \hspace{-2.8em}\inf_{\substack{u\in\mathbb{U}^K(x) \\ x_u(K,x) \in \mathcal{B}_{\kappa}(r_T(K))}}\hspace{-3em} \mathcal{P}_K(x, u, r_T) \hspace{-0.1em} + \hspace{-0.1em}\sum_{i=1}^{m} c_i^{\mathrm{f}} \alpha_{\mathrm{ub}}^{\ell_i}(\delta_i^{\gamma}(\min(\floor*{\hspace{-1.2pt}\frac{N}{2}\hspace{-1.2pt}}, K-1)))
        \\
        &\le \hspace{-2pt}\epsilon \hspace{-1pt}+ \hspace{-1.0em}\inf_{\substack{u\in\mathbb{U}^K(x) \\ \hspace{-3.3em}\mathrlap{x_u(K,x) \in \mathcal{B}_{\kappa}(r_T(K))}}} \hspace{-1.0em}\mathcal{P}_K\hspace{-1pt}(\hspace{-1pt}x, u, r_T\hspace{-1pt}) \hspace{-2.5pt} + \hspace{-3.5pt}\smash[b]{\sum_{i=1}^{m}} c_i^{\mathrm{f}} \alpha_{\mathrm{ub}}^{\ell_i}\hspace{-1pt}(\hspace{-1pt}\delta_i^{\gamma}\hspace{-1pt}(\hspace{-1pt}\floor*{\hspace{-1.5pt}\frac{N}{2}\hspace{-1.5pt}}\hspace{-1pt})\hspace{-1pt})\hspace{-1pt} \hspace{-2pt}+\hspace{-2pt} c_i^{\mathrm{f}} \alpha_{\mathrm{ub}}^{\ell_i}\hspace{-1pt}(\hspace{-1pt}\delta_i^{\gamma}\hspace{-1pt}(\hspace{-1pt}K\hspace{-2pt}-\hspace{-2pt}1\hspace{-1pt})\hspace{-1pt}).
    \end{align*}
    This shows~\eqref{eq:standard_MPC_performance_bound} with $\delta_1(N) = \sum_{i=1}^{m} c_i^{\mathrm{f}} \alpha_{\mathrm{ub}}^{\ell_i}(\delta_i^{\gamma}(\floor*{\frac{N}{2}}))$ and $\delta_2(K) = \sum_{i=1}^{m} c_i^{\mathrm{f}} \alpha_{\mathrm{ub}}^{\ell_i}(\delta_i^{\gamma}(K-1))$.
\end{proof}
\subsection{Proof of Lemma~\ref{lem:uniform_reachability}}
\begin{proof}
    Let $x\in\mathcal{X}_{\widetilde{N}}$, $y_T^{(0)} \in \mathbb{Y}_{\widetilde{N}}(x)$, and define $\bar{y}_T^{(0)} = \argmin_{\tilde{y}_T \in \mathcal{Y}_T^W} \vert y_T^{(0)} \vert_{\tilde{y}_T}$.
    There exists $u\in \mathbb{U}^{\widetilde{N}}(x)$ such that $x_{i,u_i}(\widetilde{N}) \in \mathcal{X}_i^{\mathrm{f}}(r_{T,i}^{(0)}(\widetilde{N}))$.
    Recursively define the input trajectory $u_i^{(0)}(k) = k_i^{\mathrm{f}}(x_{i,u_i^{(0)}}(k, x_{i,u_i}(\widetilde{N})), r_{T,i}^{(0)}(\widetilde{N}+k))$ for $k\in\mathbb{I}_{0:N_1}$ and $N_1 \in \mathbb{N}$, i.e. applying the terminal control law from Assumption~\ref{assm:terminal_ingredients} repeatedly.
    Then, 
    \begin{align*}
        &V_i^{\mathrm{f}}(x_{i,u_i^{(0)}}(k+1, x_{i,u_i}(\widetilde{N})), r_{T,i}^{(0)}(\widetilde{N}+k+1)) 
        \\
        &\stackrel{\eqref{eq:terminal_cost_decrease}}{\le} V_i^{\mathrm{f}}(x_{i,u_i^{(0)}}(k, x_{i,u_i}(\widetilde{N})), r_{T,i}^{(0)}(\widetilde{N}+k)) \\
        &\hspace{1em} - \ell_i(x_{i,u_i^{(0)}}(k, x_{i,u_i}(\widetilde{N})), u_i^{(0)}(k) , r_{T,i}^{(0)}(\widetilde{N}+k))
        \\
        &\stackrel{\mathclap{\eqref{eq:terminal_cost_upper_bound},\,\eqref{eq:stage_cost_lower_and_upper_bound}}}{\le} \hspace{0.5em} c_i^{\mathrm{f}} \alpha_{\mathrm{ub}}^{\ell_i} (\vert x_{i,u_i^{(0)}}(k, x_{i,u_i}(\widetilde{N}))\vert_{r_{T,i}^{(0)}(\widetilde{N}+k)})
        \\
        &\hspace{1em} - \alpha_{\mathrm{lb}}^{\ell_i} (\vert x_{i,u_i^{(0)}}(k, x_{i,u_i}(\widetilde{N})) \vert_{x_{T,i}^{(0)}(\widetilde{N}+k)}) 
        \\
        &\le 
        c_i^{\mathrm{f}} \alpha_{\mathrm{ub}}^{\ell_i} (\delta_{x_i}) - \alpha_{\mathrm{lb}}^{\ell_i} (\frac{c_i^{\mathrm{b}}}{2})       
    \end{align*}
    holds for all $k$ with $\vert x_{i,u_i^{(0)}}(k, x_{i,u_i}(\widetilde{N})) \vert_{x_{T,i}^{(0)}(\widetilde{N}+k)} \ge \frac{c_i^{\mathrm{b}}}{2}$, where the constant $\delta_{x_i}>0$ exists due to compactness of $Z_i$.
    Consequently, $N_1$ can be chosen independently of $r_{T,i}^{(0)}$ such that $\vert x_{i,u_i^{(0)}}(N_1, x_{i,u_i}(\widetilde{N})) \vert_{x_{T,i}^{(0)}(\widetilde{N}_1)} < \frac{c_i^{\mathrm{b}}}{2}$ with $\widetilde{N}_1 = \widetilde{N} + N_1$, due to~\eqref{eq:terminal_cost_decrease} and~\eqref{eq:terminal_cost_upper_bound}.

    First, consider the case $\vert x_{T,i}^{(0)}(\widetilde{N}_1) \vert_{\bar{x}_{T,i}^{(0)}(\widetilde{N}_1)} \le \frac{c_i^{\mathrm{b}}}{2}$ for all $i\in\mathbb{I}_{1:m}$.
    Hence, 
    $\vert x_{i,u_i^{(0)}}(N_1, x_{i,u_i}(\widetilde{N})) \vert_{\bar{x}_{T,i}^{(0)}(\widetilde{N}_1)} 
    \le \vert x_{i,u_i^{(0)}}(N_1, x_{i,u_i}(\widetilde{N})) \vert_{x_{T,i}^{(0)}(\widetilde{N}_1)} + \vert x_{T,i}^{(0)}(\widetilde{N}_1) \vert_{\bar{x}_{T,i}^{(0)}(\widetilde{N}_1)} 
    < c_i^{\mathrm{b}}$.
    Thus, from~\eqref{eq:terminal_non_empty_interior}, $x_{i,u_i^{(0)}}(N_1, x_{i,u_i}(\widetilde{N})) \in \mathcal{X}_i^{\mathrm{f}}(\bar{r}_{T,i}^{(0)}(\widetilde{N}_1))$.
    
    Second, consider the case $\vert x_{T,j}^{(k)}(\widetilde{N}_1) \vert_{\bar{x}_{T,j}^{(k)}(\widetilde{N}_1)} > \frac{c_{\mathrm{lb}}^{\mathrm{b}}}{2}$ for a $j\in\mathbb{I}_{1:m}$ with $c_{\mathrm{lb}}^{\mathrm{b}} = \min_{i} c_i^{\mathrm{b}}$,
    We start with $k=0$ but will consider this case multiple times.
    From Assumption~\ref{assm:better_cooperation_candidate}, there exists a cooperation output with a lower cooperative cost than $y_T^{(k)}$; denote it by $y_T^{(k+1)}$.
    This yields a sequence of cooperation outputs as indexed by $k$.
    Define $L = \max_i(L_{x,i}, L_{u,i})$. Then,
    \begin{align*}
        &\vert x_{i,u_i^{(k)}}(N_1, x_{i,u_i}(\widetilde{N})) \vert_{x_{T,i}^{(k+1)}(\widetilde{N}_1)} 
        \\
        &\le \vert x_{i,u_i^{(k)}}(N_1, x_{i,u_i}(\widetilde{N})) \vert_{x_{T,i}^{(k)}(\widetilde{N}_1)} + \vert x_{T,i}^{(k)}(\widetilde{N}_1) \vert_{{x}_{T,i}^{(k+1)}(\widetilde{N}_1)}
        \\
        &< \frac{c_i^{\mathrm{b}}}{2} + \vert r_{T,i}^{(k)} \vert_{{r}_{T,i}^{(k+1)}} 
        \stackrel{\mathclap{\text{Assm.~\ref{assm:unique_corresponding_equilibrium}}}}{\le} L\vert y_{T,i}^{(k)} \vert_{{y}_{T,i}^{(k+1)}} + \frac{c_i^{\mathrm{b}}}{2} 
        \\
        &\le \frac{c_i^{\mathrm{b}}}{2} + L\smash[b]{\sum_{i=1}^{m}}\vert y_{T,i}^{(k)} \vert_{{y}_{T,i}^{(k+1)}} \stackrel{\eqref{eq:better_cooperation_candidate_a}}{\le} 
        \frac{c_i^{\mathrm{b}}}{2} + L\theta c_{\psi}\gamma_{\psi}
    \end{align*}
    with $\gamma_{\psi} = \sup_{y_T\in \mathcal{Y}_T}\psi(y_T)$.
    Hence, if $\theta \le c_{\mathrm{lb}}^{\mathrm{b}}(2Lc_{\psi}\gamma_{\psi})^{-1}$, then from~\eqref{eq:terminal_non_empty_interior}, $ x_{i,u_i^{(k)}}(N_1, x_{i,u_i}(\widetilde{N})) \in \mathcal{X}_i^{\mathrm{f}}(x_{T,i}^{(k+1)}(\widetilde{N}_1))$.
    Furthermore, since $\psi$ is a continuous function on $\mathcal{Y}_T$ and positive definite with respect to $\mathcal{Y}_T^W$, there exists $\eta_{\psi}\in \mathcal{K}$ such that $\psi(y_T) \ge \eta_{\psi}(\vert y_T \vert_{\mathcal{Y}_T^W})$.
    With Assumption.~\ref{assm:unique_corresponding_equilibrium}, $\vert y_T^{(k)} \vert_{\mathcal{Y}_T^W} = \vert y_T^{(k)} \vert_{\bar{y}_T^{(k)}} \ge \sum_{i=1}^{m} \frac{\vert r_{T,i}^{(k)} \vert_{\bar{r}_{T,i}^{(k)}}}{L} \ge \frac{1}{L}\vert x_{T,j}^{(k)}(\widetilde{N}_1) \vert_{\bar{x}_{T,j}^{(k)}(\widetilde{N}_1)} > \frac{c_{\mathrm{lb}}^{\mathrm{b}}}{2L}$.
    From~\eqref{eq:better_cooperation_candidate_b}, this entails $W^{\mathrm{c}}(y_T^{(k+1)}) < W^{\mathrm{c}}(y_T^{(k)}) - \theta \eta_{\psi}(\frac{c_{\mathrm{lb}}^{\mathrm{b}}}{2L})^{\omega}$.
    Now, since $W^{\mathrm{c}} - W_0^{\mathrm{c}}$ is continuous and positive definite with respect to $\mathcal{Y}_T^W$, we also have from Definition~\ref{def:COF},
    $W^{\mathrm{c}}(y_T) \ge W^{\mathrm{c}}(y_T) - W_0^{\mathrm{c}} \ge \eta_{\mathrm{c}}(\vert y_T \vert_{\mathcal{Y}_T^W})$
    with some $\eta_{\mathrm{c}} \in \mathcal{K}$.
    In addition, $W^{\mathrm{c}}(y_T) \le \alpha_{\mathrm{ub}}^{\mathrm{c}}(\gamma_{\mathrm{c}})$ with $\gamma_{\mathrm{c}} = \sup_{y_T \in \mathcal{Y}_T} \vert y_T \vert_{\mathcal{Y}_T^{\mathrm{c}}}$.
    Since $\vert y_T^{(k+1)} \vert_{\mathcal{Y}_T^W} \ge \frac{1}{L}\vert x_{T,j}^{(k+1)}(\tau) \vert_{\bar{x}_{T,j}^{(k+1)}(\tau)}$ for all $\tau\in\mathbb{I}_{0:T-1}$, we have with $N_2\in\mathbb{N}$ that $ \eta_{\mathrm{c}}\Big(\frac{\vert x_{T,j}^{(k+1)}(\tau) \vert_{\bar{x}_{T,j}^{(k+1)}(\tau)}}{L}\Big) < \alpha_{\mathrm{ub}}^{\mathrm{c}}(\gamma_{\mathrm{c}}) - \sum_{k=0}^{N_2-1}\theta\eta_{\psi}\Big(\frac{c_{\mathrm{lb}}^{\mathrm{b}}}{2L}\Big)^{\omega}$.
    Thus, $N_2$ can be chosen such that $\vert x_{T,j}^{(N_2+1)}(\tau) \vert_{\bar{x}_{T,j}^{(N_2+1)}(\tau)} \le \frac{c_{\mathrm{lb}}^{\mathrm{b}}}{2}$ for some $\tau\in\mathbb{I}_{0:T-1}$.
    Hence, the second case holds at most $N_2+1$ times until the first case holds. 
    Thus, with $\hat{y}_T = \bar{y}_T^{(N_2+1)}$ and $\widehat{N} = \widetilde{N}_1 + N_2+1$, there exists $\hat{u} \in \mathbb{U}^N(x)$ by concatenating the previously described input trajectories together such that
    $x_{i,\hat{u}}(\widehat{N}, x_i) \in \mathcal{X}_i^{\mathrm{f}}(\hat{r}_{T,i}(\tau))$ for some $\tau\in\mathbb{I}_{0:T-1}$ and all $N\ge\widehat{N}$.
\end{proof}

\bibliographystyle{IEEEtran}
\bibliography{10_references}

\end{document}